\theoremstyle{plain}
\newtheorem{theorem}{Theorem}[section]
\newtheorem{lemma}[theorem]{Lemma}
\newtheorem{proposition}[theorem]{Proposition}
\newtheorem{corollary}[theorem]{Corollary}
\theoremstyle{definition}
\newtheorem{definition}[theorem]{Definition}
\newtheorem{remark}[theorem]{Remark}
\newtheorem{example}[theorem]{Example}
\newcommand{\ua}{\mathord{\uparrow}}
\newcommand{\da}{\mathord{\downarrow}}
\newcommand{\rom}[1]{\rm{\uppercase\expandafter{\romannumeral #1}}}
\newcommand{\cl}{\mathrm{cl}}
\def\ps@pprintTitle{%
  \let\@oddhead\@empty
  \let\@evenhead\@empty
  \def\@oddfoot{\reset@font\hfil\thepage\hfil}
  \let\@evenfoot\@oddfoot
}
\title{
The category of well-filtered dcpo's is not $\Gamma$-faithful
}
\author{Hualin Miao, Huijun Hou, Xiaodong Jia\footnote{Corresponding author, Email: jiaxiaodong@hnu.edu.cn} ~and Qingguo Li \\
\emph{School of Mathematics, Hunan University, Changsha, Hunan 410082, China}}
\begin{document}
\maketitle

\begin{abstract}

The Ho-Zhao problem asks whether any two dcpo's with isomorphic Scott closed set lattices are themselves isomorphic, that is, whether the category $\mathbf{DCPO}$ of dcpo's and Scott-continuous maps is \emph{$\Gamma$-faithful}.  In 2018, 
Ho, Goubault-Larrecq, Jung and Xi answered this question in the negative, and they introduced the category $\mathbf{DOMI}$ of dominated dcpo's and proved that it is {$\Gamma$-faithful}. Dominated dcpo's subsume many familiar families of dcpo's in domain theory, such as the category of bounded-complete dcpo's and that of sober dcpo's, among others. However, it is unknown whether the category of dominated dcpo's dominates all well-filtered dcpo's, a class strictly larger than that of bounded-complete lattices and that of sober dcpo's. In this paper, we address this very natural question and show that the category $\mathbf{WF}$ of well-filtered dcpo's is not $\Gamma$-faithful, and as a result of it, well-filtered dcpo's need not be dominated in general. Since not all dcpo's are well-filtered, our work refines the results of Ho, Goubault-Larrecq, Jung and Xi. 

As a second contribution, we confirm that the  Lawson's category of $\Omega^{*}$-compact dcpo's is $\Gamma$-faithful. Moreover, we locate a class of dcpo's which we call weakly dominated dcpo's, and show that this class is  $\Gamma$-faithful and strictly larger than $\mathbf{DOMI}$.

\emph{Keywords}: Ho-Zhao Problem, $\Gamma$-faithfulness, well-filtered dcpo's, weakly dominated dcpo's.

\end{abstract}

\section{Introduction}
One of the most important topologies on posets is the so-called Scott topology, which consists of upper subsets that are inaccessible by suprema of directed subsets in given posets, and has been playing prominent r\^ole in domain theory, non-Hausdorff topology and denotational semantics, among others. As can be seen from definition, Scott topologies are uniquely determined by the order structure on posets. A more intriguing question is the converse: does the Scott topology of a poset determine the order of the poset? In order to form this question more rigorously, we let $\Gamma P$ denote the set of all Scott closed subsets of a poset $P$, ordered by inclusion. The poset $\Gamma P$ is a complete lattice for each $P$. The aforementioned question is then formalized as:
\begin{center}
\emph{If $P$ and $Q$ are posets with $\Gamma P$ isomorphic to $\Gamma Q$ ($\Gamma P \cong \Gamma Q$, in symbols), is it true that $P\cong Q$?}
\end{center}
This question was perfectly answered by Zhao and Fan in \cite{dp}, where they showed that each poset~$P$ admits a so-called dcpo-completion $E(P)$, with the property that $E(P)$ is a directly-complete poset (dcpo for short) and $\Gamma P \cong \Gamma E(P)$. Hence, any poset which fails to be a dcpo, together with its dcpo-completion~$E(P)$, fails the above question.  This observation has led Ho and Zhao narrow this question to the category $\mathbf{DCPO}$ of dcpo's and Scott-continuous maps~\cite{lost}, and they call a full subcategory~$\mathbf C$ of $\mathbf{DCPO}$ $\Gamma$-faithful if for every pair $P$ and $Q$ in $\mathbf C$, $\Gamma P \cong \Gamma Q$ implies $P\cong Q$. In \cite{lost}, Ho and Zhao identified the category of bounded complete dcpo's as one of $\Gamma$-faithful subcategories of $\mathbf{DCPO}$, and asked whether the category $\mathbf{DCPO}$ itself is $\Gamma$-faithful. The question was later dubbed the \emph{Ho-Zhao} problem. In 2016, Ho, Goubault-Larrecq, Jung and Xi exhibited a dcpo~$\mathcal H$ which is not sober in the Scott topology, and showed that $\mathcal H$ and its sobrification $\hat{\mathcal H} $ serve as instances for the fact that $\mathbf{DCPO}$ fails to be $\Gamma$-faithful. In addition, they gave a $\Gamma$-faithful full subcategory $\mathbf{DOMI}$ of $\mathbf{DCPO}$, which consists of the so-called \emph{dominated dcpo's}. Notably, the category $\mathbf{DOMI}$ of dominated dcpo's subsumes many known $\Gamma$-faithful subcategories of dcpo's, for example, the category of sober dcpo's and that of bounded-complete lattices, to name a few. In Section $3$, we show that the category $\mathbf{DOMI}$ of dominated dcpo's also includes the category of $\Omega^{*}$-compact dcpo's. 

Generalizing both sober dcpo's and $\Omega^{*}$-compact dcpo's, \emph{well-filtered} dcpo's are ones that are well-filtered in the Scott topology, and the category $\mathbf{WF}$ of well-filtered dcpo's is a strictly smaller subclass of $\mathbf{DCPO}$. A natural question arises as whether well-filtered dcpo's also are subsumed under dominated dcpo's, or whether the category $\mathbf{WF}$ is $\Gamma$-faithful?
In this paper, we will mainly investigate this natural question and give concrete examples to deduce that $\mathbf{WF}$ is not $\Gamma$-faithful. Hence, well-filtered dcpo's may fail to be dominated. Our examples make use of an example $Y$ given by Zhao and Xi in \cite{wns}, where they showed that $Y$ is a well-filtered dcpo, but not sober in its Scott topology. 

While the paper itself displays a negative result, it refines the results of Ho, Goubault-Larrecq, Jung and Xi, and it also reveals the distinction between the category $\mathbf{WF}$ and the newly found $\mathbf{DOMI}$, where the later class $\mathbf{DOMI}$ definitely needs more clarification.  Finally, we introduce the category of weak dominated dcpo's, and show that it is a $\Gamma$-faithful category that is strictly larger than the class of dominated dcpo's.

\section{Preliminaries}

In this section, we introduce some basic concepts and notations that will be used in this paper.

Let $P$ be a partially ordered set (\emph{poset}, for short), $D\subseteq P$ is \emph{directed} (resp., \emph{filtered}) if $D$ is nonempty and for any finite subset $F\subseteq D$, there is a $d\in D$ such that $d$ is an upper bound (resp., a lower bound) of $F$. A poset $P$ is called \emph{directed complete} (\emph{dcpo}, for short) if every directed subset $D$ of $P$ has a least upper bound, which we denote by $\sup D$, or $\bigvee D$. For any subset $A\subseteq P$, let $\uparrow$$A$ = $\{x\in P: x\geq$ a for some $a\in A\}$ and $\downarrow$$A$ = $\{x\in P: x\leq a$ for some $a\in A\}$. Specifically, we write $\uparrow$$x$ = $\uparrow$$\{x\}$ and $\downarrow$$x$ = $\downarrow$$\{x\}$. We will call $A\subseteq P$ an \emph{upper set} (resp., a \emph{lower set}) if $A$ = $\uparrow$$A$ (resp., $A$ = $\downarrow$$A$).

A subset $U$ of $P$ is \emph{Scott open} if $U$ = $\uparrow$$U$ and for any directed subset $D$ for which sup$D$ exists, sup$D$ $\in U$ implies $D\cap U \neq \emptyset$. Accordingly, $A\subseteq P$ is \emph{Scott closed} if $A$ = $\downarrow$$A$ and for any directed subset $D$ of $P$ with sup$D$ existing, $D\subseteq A$ implies sup$D$ $\in A$. The set of all Scott open sets of $P$ forms the \emph{Scott topology} on $P$, which is denoted by $\sigma P$, and the set of all Scott closed sets of $P$ is denoted by $\Gamma P$. Furthermore, for a subset $A$ of $P$, we will use $\overline{A}$ or $\cl(A)$ to denote the closure of $A$ with respect to the Scott topology on $P$. The space $(P, \sigma P)$ also is denoted by $\Sigma P$, some authors refer such a space, a poset endowed with the Scott topology, a \emph{Scott space}.
Recall that the lower topology $\omega (L)$ on $L$ is defined to have the principal filters $\ua a$ for $a\in L$ as subbasic closed sets.

For a $T_{0}$ space $X$, the partial order $\leq$$_{X}$, defined by $x\leq_{X} y$ if and only if $x$ is in the closure of $y$, is called the \emph{specialization order} on $X$. Naturally, the closure of a single point $x$ is $\downarrow$$x$, the order considered here is, of course, the specialization order. The specialization order on a Scott space $\Sigma P$ coincides with the original order on $P$. 
A subset $K$ of $X$ is \emph{compact} if every open cover of $K$ admits a finite subcover, and the \emph{saturation} of a subset $A$ is the intersection of all open sets that contain it. A subset $K$ is called to be \emph{saturated} if it equals its saturation. Besides, for every subset $A$ of $X$, the saturation of $A$ coincides with $\uparrow$$A$ under the specialization order. 
Every compact saturated subset $K$ in a Scott space is of the form $\ua \mathrm{min}K$, where $\mathrm{min}K$ is the set of minimal elements of $K$~\cite{jia}. 
And usually, we use $Q(X)$ to denote the set of all non-empty compact saturated subsets of $X$. 

A $T_0$ space $X$ is \emph{sober} if every irreducible closed subset $C$ of $X$ is the closure of some unique singleton set $\{c\}$, where $C$ is called \emph{irreducible} if $C\subseteq A\cup B$ for closed subsets $A$ and $B$ implies that $C\subseteq A$ or $C\subseteq B$. We denote the set of all closed irreducible subsets of $X$ by $IRR(X)$. 
Every sober space $X$ is a \emph{well-filtered} space, in the sense that for every filtered family (in the conclusion order) of compact saturated subsets $K_i, i\in I$ and every open subset $U$ of $X$,  $\bigcap_{i\in I}K_i\subseteq U$ implies that there is already some $K_i$ with $K_i\subseteq U$.
Well-filtered spaces can also be characterized by the so-called \emph{KF}-sets. In a $T_{0}$ space~$X$, a nonempty subset $A$ of $X$ is said to have the \emph{compactly filtered property} ($KF$ property), if there exists a filtered family $\mathcal{K}\subseteq_{\mathrm{flt}} Q(X)$ such that $\mathrm{cl}(A)$ is a minimal closed set that intersects all members of $\mathcal{K}$. We call such a set a $KF$-set and denote by $KF(X)$ the set of all closed $KF$-sets of $X$. It is shown in~\cite{owr} that a $T_0$ space is well-filtered if and only if every $KF$-set $K$ of $X$ is the closure of some unique singleton set $\{k\}$. Finally, we remark that every $KF$-set is irreducible, i.e., $KF(X) \subseteq IRR(X)$ for each $T_0$ space $X$~\cite{owr}.

\section{$\Omega^{*}$-compact dcpo's are dominated}
In this section, we deduce that the category $\mathbf{DOMI}$ of dominated dcpo's subsumes the category of $\Omega^{*}$-compact dcpo's. 
\begin{definition}\cite{LWX}
	Let $X$ be a $T_{0}$ space and $\leq$  be its order of specialization. We equip X also with its $\omega$-topology defined from the order of specialization. We say that $X$ is \emph{$\Omega^{*}$-compact} if every closed subset of $X$ is compact in the $\omega$-topology.
\end{definition}
\begin{definition}\cite{hzp}
Given $A,B\in IRR(L)$, we write $A\lhd B$ if there is $b\in  B$ such that $A \subseteq\da b$. We
write $\nabla B$ for the set $\{A\in IRR(L)\mid A\lhd B\}$. A dcpo $L$ is called dominated if for every closed irreducible subset $A$ of $L$, the collection $\nabla A$ is Scott closed in $IRR(L)$.
\end{definition}
\begin{lemma}
	Let $L$ be an $\Omega^{*}$-compact dcpo. Then $L$ is a dominated dcpo.
	\begin{proof}
	From the definition of dominated dcpo's,	it suffices to prove that $\nabla A$ is a Scott closed subset of $IRR(L)$ for any $A\in IRR(L)$. To this end, let $(A_{i})_{i\in I}$ be a directed subset of $\nabla A$. Then there exists $a_{i}\in A$ such that $A_{i}\subseteq \da a_{i}$ for any $i\in I$. This means that $(\bigcap_{x\in A_{i}}\ua x)\cap A\neq \emptyset$ for any $i\in I$. Note that $\bigcap_{x\in A_{i}}\ua x$ is closed in the lower topology and $(\bigcap_{x\in A_{i}}\ua x)_{i\in I}$ is filtered. Then the fact that $L$ is $\Omega^{*}$-compact reveals that $A\cap \bigcap_{x\in \bigcup_{i\in I}A_{i}}\ua x\neq \emptyset$. Since $\bigcap_{x\in cl(\bigcup_{i\in I}A_{i})}\ua x=\bigcap_{x\in \bigcup_{i\in I}A_{i}}\ua x$, we have $A\cap (\bigcap_{x\in cl(\bigcup_{i\in I}A_{i})}\ua x)\neq \emptyset$. Hence, $\sup_{i\in I}A_{i} = cl(\bigcup_{i\in I}A_{i})\in \nabla L$. 
	\end{proof}
\end{lemma}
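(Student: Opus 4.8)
The plan is to verify directly that $\nabla A$ satisfies the two defining conditions of a Scott-closed subset of $IRR(L)$, namely that it is a lower set and that it is closed under directed suprema. The lower-set property is routine: if $B\in\nabla A$, witnessed by some $a\in A$ with $B\subseteq\da a$, and $C\in IRR(L)$ with $C\subseteq B$, then $C\subseteq\da a$ as well, so $C\in\nabla A$. Thus essentially all the content lies in the closure under directed suprema.

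So suppose $(B_i)_{i\in I}$ is a directed family in $\nabla A$. Since directed suprema in $IRR(L)$ are computed as closures of unions, I would aim to show $\cl(\bigcup_{i\in I}B_i)\in\nabla A$. My first step is to reformulate the relation $B_i\lhd A$ in a way amenable to a compactness argument: the statement $B_i\subseteq\da a$ for some $a\in A$ says precisely that $A$ meets the set of all upper bounds of $B_i$, i.e. $A\cap\bigcap_{x\in B_i}\ua x\neq\emptyset$. The key observations are then that each $\bigcap_{x\in B_i}\ua x$ is closed in the lower topology (being an intersection of the subbasic closed sets $\ua x$), and that, because $(B_i)$ is directed under inclusion, the family $\bigl(\bigcap_{x\in B_i}\ua x\bigr)_{i\in I}$ is \emph{filtered}: a larger $B_i$ yields a smaller set of upper bounds.

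The decisive step invokes $\Omega^{*}$-compactness. Since $A$ is Scott-closed, it is compact in the lower topology, so the filtered family of nonempty lower-closed sets $\bigl(A\cap\bigcap_{x\in B_i}\ua x\bigr)_{i\in I}$ has the finite intersection property and hence nonempty total intersection. This produces a single $a\in A$ lying in $\bigcap_{x\in\bigcup_{i}B_i}\ua x$, that is, an element of $A$ that is an upper bound for every point of $\bigcup_i B_i$. Because $\da a$ is Scott-closed, the inclusion $\bigcup_i B_i\subseteq\da a$ upgrades to $\cl(\bigcup_i B_i)\subseteq\da a$, giving $\cl(\bigcup_i B_i)\lhd A$ as required.

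I expect the main obstacle to be pinning down the correct translation of $\lhd$ into a statement about $A$ intersecting lower-closed sets, and then checking carefully that the resulting family is filtered and that its members are genuinely closed in the lower topology, since these are exactly the hypotheses that feed $\Omega^{*}$-compactness. A secondary point to keep straight is that suprema in $IRR(L)$ really are closures of unions, so that testing the condition on $\cl(\bigcup_i B_i)$ is the right target; once the common bound $a\in A$ is produced, closing under $\da a$ and passing to the Scott closure is immediate.
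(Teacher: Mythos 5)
Your proposal is correct and follows essentially the same route as the paper: translate $B_i\lhd A$ into the nonemptiness of $A\cap\bigcap_{x\in B_i}\ua x$, observe that these sets are closed in the lower topology and form a filtered family, invoke $\Omega^{*}$-compactness of the Scott-closed set $A$ to extract a common upper bound $a\in A$, and conclude that $\cl(\bigcup_i B_i)\subseteq\da a$. The only cosmetic differences are that you make the (trivial) lower-set check explicit and finish via Scott-closedness of $\da a$, where the paper instead uses the equality of the upper-bound sets of $\bigcup_i B_i$ and of its closure; these are interchangeable.
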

From the above lemma, we can arrive at the following theorem immediately.
\begin{theorem}
The category of $\Omega^*$-compact dcpo's is $\Gamma$-faithful.
\end{theorem}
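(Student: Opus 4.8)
The plan is to derive the theorem as an immediate corollary of the preceding Lemma together with the known fact that $\mathbf{DOMI}$ is $\Gamma$-faithful. Recall that the class of $\Omega^{*}$-compact dcpo's, equipped with Scott-continuous maps, forms a full subcategory of $\mathbf{DCPO}$; call it $\mathbf{C}$. By the Lemma just proved, every object of $\mathbf{C}$ is a dominated dcpo, so $\mathbf{C}$ is a full subcategory of $\mathbf{DOMI}$.

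Next I would invoke the result of Ho, Goubault-Larrecq, Jung and Xi that $\mathbf{DOMI}$ is $\Gamma$-faithful \cite{hzp}. The key observation is that $\Gamma$-faithfulness is inherited by full subcategories: the defining condition quantifies over pairs of objects $P,Q$ in the subcategory and demands that $\Gamma P \cong \Gamma Q$ force $P \cong Q$, so passing to a larger ambient category only widens the range of admissible pairs and hence only strengthens the property. Concretely, take any two $\Omega^{*}$-compact dcpo's $P$ and $Q$ with $\Gamma P \cong \Gamma Q$. Since $P$ and $Q$ are in particular dominated (by the Lemma), they are objects of $\mathbf{DOMI}$, and the $\Gamma$-faithfulness of $\mathbf{DOMI}$ yields $P \cong Q$. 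This is precisely the assertion that $\mathbf{C}$ is $\Gamma$-faithful.

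There is essentially no obstacle here: all the mathematical content sits in the Lemma, and the remaining step is the formal monotonicity of $\Gamma$-faithfulness along the inclusion $\mathbf{C} \hookrightarrow \mathbf{DOMI}$. The only point worth stating explicitly is that this inclusion is genuinely one of full subcategories of $\mathbf{DCPO}$ — that is, $\Omega^{*}$-compact dcpo's are indeed dcpo's and we take all Scott-continuous maps between them as morphisms — which is immediate from the definitions, so that the restriction of the hom-sets introduces no subtlety.
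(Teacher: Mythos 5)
Your proof is correct and is essentially the paper's own argument: the paper derives the theorem ``immediately'' from the lemma that $\Omega^{*}$-compact dcpo's are dominated, implicitly using the $\Gamma$-faithfulness of $\mathbf{DOMI}$ from Ho, Goubault-Larrecq, Jung and Xi, exactly as you do. Your write-up merely makes explicit the (correct) observation that $\Gamma$-faithfulness passes down to full subcategories.
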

We know that $\Omega^*$-compact dcpo's are well-filtered dcpo's. It is nature to ask whether the category $\mathbf{WF}$ of well-filtered dcpo's is $\Gamma$-faithful. Next, we study the problem.
\section{The category of well-filtered dcpo's is not $\Gamma$-faithful}
In this section, we first give the definition of a dcpo~$\mathcal Z$, which will be shown to be well-filtered but not sober in the Scott topology. Moreover, we will see that $\Gamma \mathcal Z$ is isomorphic to $\Gamma \hat{\mathcal Z}$, where $\hat{\mathcal Z}$ is the poset of all irreducible closed subsets of $\mathcal Z$ in the set inclusion order, and $\hat{\mathcal Z}$ is a sober dcpo. Since $\mathcal Z$ is not sober, it cannot be isomorphic to $\hat{\mathcal Z}$. Hence, the pair of well-filtered dcpo's $\mathcal Z$ and $\hat{\mathcal Z}$ illustrates that the category of well-filtered dcpo's is not $\Gamma$-faithful.
\subsection{The definition of the counterexample $\mathcal{Z}$}
Let $\omega_{1}$ be the first non-countable ordinal and $\mathbb{W}= [0, \omega_{1})$ be the set of all ordinals strictly less than $\omega_{1}$. Then $\mathbb{W}$ consists of all finite and infinite countable ordinals.

\begin{remark}\cite{gt}
The following results about $\mathbb{W}$ are well-known:
\begin{enumerate}
\item $|\mathbb{W}| = \aleph_{1}$, where $|\mathbb{W}|$ denotes the cardinality of $\mathbb{W}$.

\item $\mathbb W$ is sequentially complete. That is, for every countable subsequence $D\subseteq \mathbb{W}$, $\sup D\in \mathbb{W}$, here the $\sup D$ is taken with respect to the usual linear order on ordinals.

\item For any $\alpha\in \mathbb{W}$, $\{\beta: \beta\leq\alpha\}$ is a finite or countably infinite subset of $\mathbb W$.
\end{enumerate}
\end{remark}

Now let us review a poset $Y$ given by Zhao and Xi in~\cite{wns}. We will use this poset as building blocks for our dcpo $\mathcal Z$.  Let $\mathbb{W}_{1} = [0, \omega_{1}] = \mathbb{W}\cup \{\omega_{1}\}$ and $Y = \mathbb{W}\times \mathbb{W}_{1}$. The order on $Y$ is given by $(m, u) \leq (m', u')$ if and only if: 

\begin{itemize}
\item $m = m'$ and $u\leq u'$, or
\item $u' = \omega_{1}$ and  if $u\leq m'$,
\end{itemize}
and the order structure of $Y$ can be easily depicted, as in Figure $1$.

\begin{figure}[H]
\centering
\begin{tikzpicture}[line width=0.7pt,scale=1]
\fill[black]  (0,0) circle (1.8pt);
\fill[black] (0,1) circle (1.8pt);
\fill[black] (0,2.3) circle (1.8pt);
\fill[black] (0,3.3) circle (1.8pt);
\fill[black] (0,5) circle (1.8pt);
\draw (0,0)--(0,1);
\draw (0,2.3)--(0,3.3);
\draw [dashed](0,3.3)--(0,5);
\draw [dashed](0,1)--(0,2.3);

\fill[black] (1.7,0) circle (1.8pt);
\fill[black] (1.7,1) circle (1.8pt);
\fill[black] (1.7,2.3) circle (1.8pt);
\fill[black] (1.7,3.3) circle (1.8pt);
\fill[black] (1.7,5) circle (1.8pt);
\draw (1.7,0)--(1.7,1);
\draw (1.7,2.3)--(1.7,3.3);
\draw [dashed](1.7,3.3)--(1.7,5);
\draw [dashed](1.7,1)--(1.7,2.3);

\fill[black] (5,0) circle (1.8pt);
\fill[black] (5,1) circle (1.8pt);
\fill[black] (5,2.3) circle (1.8pt);
\fill[black] (5,3.3) circle (1.8pt);
\fill[black] (5,5) circle (1.8pt);
\draw (5,0)--(5,1);
\draw (5,2.3)--(5,3.3);
\draw [dashed](5,3.3)--(5,5);
\draw [dashed](5,1)--(5,2.3);

\fill[black] (7.4,0) circle (1.8pt);
\fill[black] (7.4,1) circle (1.8pt);
\fill[black] (7.4,2.3) circle (1.8pt);
\fill[black] (7.4,3.3) circle (1.8pt);
\fill[black] (7.4,5) circle (1.8pt);
\draw (7.4,0)--(7.4,1);
\draw (7.4,2.3)--(7.4,3.3);
\draw [dashed](7.4,3.3)--(7.4,5);
\draw [dashed](7.4,1)--(7.4,2.3);

\draw [dashed](2.1,0)--(3.3,0);
\draw [dashed](2.1,1)--(3.3,1);
\draw [dashed](2.1,2.3)--(3.3,2.3);
\draw [dashed](2.1,3.3)--(3.3,3.3);
\draw [dashed](2.1,5)--(3.3,5);

\draw [dashed](8.3,-0.3)--(9.4,-0.3);
\draw [dashed](8.3,0.3)--(9.4,0.3);
\draw [dashed](8.3,1.3)--(9.4,1.3);
\draw [dashed](8.3,0.7)--(9.4,0.7);
\draw [dashed](8.3,2)--(9.4,2);
\draw [dashed](8.3,2.6)--(9.4,2.6);
\draw [dashed](8.3,3.6)--(9.4,3.6);
\draw [dashed](8.3,3)--(9.4,3);
\draw [dashed](8.3,5)--(9.4,5);

\draw (-1,0.3)--(8,0.3)   (-1,-0.3)--(8,-0.3);
\draw (-1,0.7)--(8,0.7)   (-1,1.3)--(8,1.3);
\draw (-1,2)--(8,2)   (-1,2.6)--(8,2.6);
\draw (-1,3)--(8,3)   (-1,3.6)--(8,3.6);

\draw (-1,0.3) arc (90:270:0.3);
\draw (-1,1.3) arc (90:270:0.3);
\draw (-1,2.6) arc (90:270:0.3);
\draw (-1,3.6) arc (90:270:0.3);

\draw (0,5)--(0.8,0.3)  (1.7,5)--(2.4,1.3)  (5,5)--(5.6,2.6)  (7.4,5)--(7.8,3.6);

\node[left][font=\tiny]  at(0,0) {(0,0)};
\node[left][font=\tiny]  at(0,1) {(0,1)};
\node[left][font=\tiny]  at(0,2.3) {(0,$\omega_{0}\!$)};
\node[left][font=\tiny]  at(0,3.3) {($\!0$,$\omega_{0}\!\!+\!\!1\!$)};
\node[left][font=\tiny]  at(0,5) {(0,$\omega_{1}\!$)};

\node[left][font=\tiny]  at(1.7,0) {(1,0)};
\node[left][font=\tiny]  at(1.7,1) {(1,1)};
\node[left][font=\tiny]  at(1.7,2.3) {(1,$\omega_{0}\!$)};
\node[left][font=\tiny]  at(1.7,3.3) {$(\!1\!,\!\omega_{0}\!\!+\!\!1\!)$};
\node[left][font=\tiny]  at(1.7,5) {(1,$\omega_{1}\!$)};

\node[left][font=\tiny]  at(5,0) {($\omega_{0}$,0)};
\node[left][font=\tiny]  at(5,1) {($\omega_{0}$,1)};
\node[left][font=\tiny]  at(5,2.3) {($\omega_{0}$,$\omega_{0}\!$)};
\node[left][font=\tiny]  at(5,3.3) {$(\!\omega_{0}\!,\!\omega_{0}\!\!+\!\!1\!)$};
\node[left][font=\tiny]  at(5,5) {($\omega_{0}$,$\omega_{1}\!$)};

\node[left][font=\tiny]  at(7.4,0) {($\!\omega_{0}\!\!+\!\!1\!$,0)};
\node[left][font=\tiny]  at(7.4,1) {($\!\omega_{0}\!\!+\!\!1\!$,1)};
\node[left][font=\tiny]  at(7.4,2.3) {($\!\omega_{0}\!\!+\!\!1\!$,$\omega_{0}\!$)};
\node[left][font=\tiny]  at(7.4,3.3) {$(\!\omega_{0}\!\!+\!\!1\!,\!\omega_{0}\!\!+\!\!1\!)$};
\node[left][font=\tiny]  at(7.4,5) {($\!\omega_{0}\!\!+\!\!1\!$,$\omega_{1}\!$)};

\end{tikzpicture}
\scriptsize \\ Figure 1: A non-sober well-filtered dcpo $Y$. 
\end{figure}

We gather some known results about $Y$. The reader can find details in~\cite{wns}.
\begin{lemma}
\begin{enumerate}
\item The poset $Y$ is a dcpo;
\item $Y$ is well-filtered, but not sober in the Scott topology;
\item The only closed irreducible subset of $Y$ that is not a principal ideal is $Y$ itself.
\end{enumerate}
\end{lemma}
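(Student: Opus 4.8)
The plan is to route everything through a description of the irreducible Scott-closed subsets of $\Sigma Y$: once those are classified, non-sobriety is immediate and well-filteredness follows from the $KF$-criterion of Section~2, while directed-completeness is handled separately by a ``single column versus top row'' dichotomy. Two order-theoretic facts drive the whole analysis and I would record them first: across two distinct columns the \emph{only} comparabilities are forced up to a point $(m,\omega_1)$ (the second coordinate of the larger element must be $\omega_1$), and the points $\{(m,\omega_1):m\in\mathbb{W}\}$ form an antichain of maximal elements of $Y$. Hence any common upper bound of two elements from different columns is a maximal top point.

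For (1), given a directed $D\subseteq Y$, either $D$ lies in one column $\{m\}\times\mathbb{W}_1$, where its supremum is $(m,\sup_2 D)$ computed in the complete chain $\mathbb{W}_1=[0,\omega_1]$; or $D$ meets two columns, in which case directedness supplies an element of $D$ lying above points of two columns, which must be a maximal top point and is therefore the greatest element of $D$. Either way $\sup D$ exists, so $Y$ is a dcpo. The same dichotomy yields the fact I use repeatedly afterwards: a directed subset of $\mathbb{W}\times\mathbb{W}$ (avoiding the top row) never leaves a single column.

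For (3), I would represent a Scott-closed $C$ by its column data: each slice $C\cap(\{m\}\times\mathbb{W}_1)$ is Scott-closed in the chain $\mathbb{W}_1$, hence is $\emptyset$ or $\{m\}\times[0,\beta_m]$, so $C=\da\{(m,\beta_m):C_m\neq\emptyset\}$ and $C$ is a principal ideal exactly when it has a greatest element. Writing $M(C)=\{m:\beta_m=\omega_1\}$ and using $\da(m,\omega_1)=(\{m\}\times\mathbb{W}_1)\cup(\mathbb{W}\times[0,m])$, an uncountable (hence cofinal) $M(C)$ forces $C\supseteq\mathbb{W}\times\mathbb{W}$, and closing columns under suprema gives $C=Y$. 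It then remains to show that an irreducible $C$ with no greatest element is $Y$; equivalently, that every $C\neq Y$ without a greatest element is reducible. The engine is a \emph{peeling} split $C=\overline{C\setminus\da x}\cup\da x$ for a column top $x=(m_1,\beta_{m_1})$: by the single-column fact, $\overline{C\setminus\da x}$ regains column-$m_1$ points only from down-sets of top points of $C$, whose $m_1$-heights stay strictly below $\beta_{m_1}$ when $x$ is not dominated by a top point, so $x$ is dropped and both pieces are proper. Taking $x$ a non-top maximal element settles $M(C)=\emptyset$ and $|M(C)|=1$, and peeling the ``private'' cofinal part of the column $m_1=\min M(C)$ settles $|M(C)|\ge 2$. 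The delicate point is checking that the peeled sets are genuinely Scott-closed and genuinely proper, which is where the top-row antichain and the single-column lemma are both needed.

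Part (2) then follows. By (3), $Y$ is irreducible and Scott-closed but, lacking a greatest element, is not $\overline{\{x\}}=\da x$ for any $x$, so $\Sigma Y$ has an irreducible closed set with no generic point and is not sober. For well-filteredness I would use the $KF$-criterion: since $KF(\Sigma Y)\subseteq IRR(\Sigma Y)$ and every irreducible closed set other than $Y$ is a principal ideal $\da x=\overline{\{x\}}$, it suffices to show that $Y$ is \emph{not} a $KF$-set. Here the decisive structural fact is a compactness analysis of $Q(\Sigma Y)$: any open meeting a top point $(m,\omega_1)$ must meet its column low down and therefore swallows a whole cofinal tail of top points, so an infinite antichain of top points is never Scott-compact (a staircase of such opens covering each point together with one fixed far tail admits no finite subcover). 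Consequently no filtered family $\mathcal{K}\subseteq_{\mathrm{flt}}Q(\Sigma Y)$ can make $Y$ a minimal closed transversal: one shows such a family has $\bigcap\mathcal{K}\neq\emptyset$, and then for $z\in\bigcap\mathcal{K}$ the proper closed set $\da z$ already meets every member of $\mathcal{K}$, contradicting minimality. I expect this compactness analysis of $Q(\Sigma Y)$ to be the main obstacle, the combinatorics of (1) and (3) being elementary by comparison; the whole of (2)--(3) may alternatively be cited from Zhao and Xi~\cite{wns}.
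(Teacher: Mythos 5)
The paper itself gives no argument for this lemma at all: it is quoted from Zhao and Xi \cite{wns}, so your citation fallback already matches the paper, and anything you prove is extra. Your parts (1) and (3) are sound: the two order facts you isolate (cross-column comparabilities force the upper element to be a top, and $L_{\omega_1}$ is an antichain of maximal elements) do yield the column/top dichotomy for directed sets, and the peeling argument -- removing a non-top maximal element when $|M(C)|\le 1$, and removing the part of a column private to $\da(m_1,\omega_1)$, $m_1\in M(C)$, when $|M(C)|\ge 2$ (where $M(C)$ is countable because uncountable $M(C)$ forces $C=Y$) -- does show that every non-principal Scott-closed set other than $Y$ is reducible. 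One small omission: you never verify that $Y$ itself is irreducible, which (3) asserts and your non-sobriety argument in (2) uses; it follows in one line from your own observation that every nonempty Scott-open set contains a cofinal tail of $L_{\omega_1}$, so any two nonempty open sets meet.

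The genuine gap is in the well-filteredness half of (2). The reduction via the $KF$-criterion is valid: given (3), it suffices that every filtered family $\mathcal{K}\subseteq_{\mathrm{flt}}Q(\Sigma Y)$ has nonempty intersection. But that claim does not follow, as your ``consequently'' suggests, from the single fact that infinite antichains of top points are non-compact. That fact only controls compact sets all of whose minimal points are tops; a compact saturated set can have infinitely many tops among its minimal points. For example, take $n_k\nearrow\lambda<\omega_1$ and $m>\lambda$, and let $K=\ua(m,\lambda)\cup\{(n_k,\omega_1):k\in\mathbb{N}\}$: every Scott-open set containing $(m,\lambda)$ must contain some $(m,v)$ with $v<\lambda$ and hence all tops $(n,\omega_1)$ with $n\ge v$, i.e.\ all but finitely many $(n_k,\omega_1)$; so $K$ is compact even though $\min K$ contains an infinite top antichain. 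Consequently a real proof of $\bigcap\mathcal{K}\neq\emptyset$ needs more structure theory than you state: (i) the non-top part $A$ of $\min K$ is always finite (a separating-opens cover argument, different in kind from your top-antichain cover); (ii) if $A=\emptyset$ then $K$ is a finite set of tops (this is where your fact enters), so the subfamily below such a $K$ has a least member; (iii) otherwise, passing to the cofinal subfamily below a fixed $K_1$, all non-top minimal points of all members lie in the finitely many columns of $A_1$, and filteredness plus pigeonhole over these columns produces one column met by every member, whose top then lies in $\bigcap\mathcal{K}$. None of this is in your sketch, and you yourself flag the compactness analysis as the unresolved ``main obstacle''; so as written the well-filteredness part is incomplete, although the strategy is completable (or one can simply cite \cite{wns}, as the paper does).
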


For each $u \in [0, \omega_1)$, we use the convention $L_u$ to denote the set of elements on the \emph{$u$-th level} of $Y$. That is, $L_u = \{(x, u)\in Y \mid x\in [0, \omega_1) \}$. We call $L_{\omega_1}$ the \emph{maximal level} in $Y$. Similarly, we say that elements of the form $(u, x), x\in [0, \omega_1]$, are in the \emph{$u$-th column} of $Y$. 

The process of constructing $\mathcal{Z}$ is similar to that of $\mathcal{H}$ in \cite{hzp}.  An informal description of this construction is as follows: 
We begin with $Y$ and for each level $L_{u}$ of $Y$, we add a copy $Y_{u}$ of $Y$ below $L_u$ and identify maximal points of $Y_u$ with elements in $L_u$ in the canonical way -- identifying $(x, \omega_1)$ in $Y_u$ with $(x, u)$ in $L_u$.
No order relation between the non-maximal elements of two different $Y_u, Y_{u'}$ is introduced. Now, we repeat this copy-and-identify process and add copies of $Y$ below non-maximal levels of $Y_u, u\in [0, \omega_1)$, copies of $Y$ below non-maximal levels of these copies of $Y$ which are already added in the previous step, and proceed infinitely countable many times.

 In order to keep all copies of $Y$ in right place, we use strings on $\mathbb W=[0, \omega_1)$ to index them. To start with, we let $\mathbb{W}^{*}$ be the set of finite strings of elements in $[0, \omega_1)$, and for any $s, t\in \mathbb{W}^{*}, u\in \mathbb{W}$, $u.s$ is such that adding the element $u$ to the front of $s$, and $ts$ is the concatenation of $t$ and $s$.
 Now, we use $Y_\varepsilon$ to denote the original $Y$, where $\varepsilon$ is the empty string; for strings $s$ of length~$1$, i.e., $s\in [0, \omega_1)$, $Y_s$ denotes the copy of $Y$ that is attached below $L_s$ of $Y_\varepsilon$, the $s$-th level of $Y_\varepsilon$; for strings $s$ of length larger than or equal to~$2$, say $s = x_1.s'$ with $x_1\in [0, \omega_1)$ and $s'$ being the obvious  remaining substring of $s$, $Y_s$ denotes the copy of $Y$ that is attached below the $x_1$-level of $Y_{s'}$ which is already well placed by induction. Then, the union $\bigcup_{s\in \mathbb{W}^*} Y_s$ will be the underlying set of our poset $\mathcal Z$. However, there are two issues to be settled. First, we need to denote elements in this big union. This is easy, as we can use $(m, u, s)$ with $(m, u, s) \in  [0, \omega_1)\times  [0, \omega_1] \times \mathbb W^*$ to denote the element $(m, u)$ in $Y_s$ in the canonical way. Second, we need to take the identifying process into the consideration. In this scenario, we will identify $(m, u, s), u \neq \omega_1, s \neq \varepsilon$, the element $(m, u)$ in $Y_s$, with $(m, \omega_1, u.s)$, the maximal point $(m, \omega_1)$ in $Y_{u.s}$. What we get after this identification is the correct underlying set of our poset $\mathcal Z$. 

Note that we equate the elements $(m, u, s)$ of $Y_{s}$ with $(m, \omega_{1}, u.s)$ of $Y_{u.s}$, each element of $\mathcal{Z}$ can be marked as the form of $(m, \omega_{1}, s)$, $m\in \mathbb{W}, s\in \mathbb{W}^{*}$, and $s$ may be $\varepsilon$. As $\omega_{1}$ appears as the second coordinate of each element in $\mathcal{Z}$, we simply omit it. By doing so it enables us to use $\{(m, s): (m, s)\in \mathbb{W}\times \mathbb{W}^{*}\}$ to label all elements of $\mathcal{Z}$.
With this all in mind, we now give the rigorous definition of $\mathcal Z$.

Let $\mathcal Z$ be $ \mathbb{W}\times \mathbb{W}^{*}$. We define the following relations on $\mathcal Z$ (where $m, m', u, u' \in \mathbb W, s, t \in \mathbb W^*$), which is reminiscent of the order on $\mathcal H$ in \cite{hzp}.

\begin{itemize}
\item $(m, u.s) <_{1} (m, u'.s)$ if $u < u'$;
\item $(m, ts) <_{2} (m, s)$ if $t\neq \varepsilon$;
\item $(m, ts) <_{3} (m', s)$ if $t\neq \varepsilon$ and $\mathrm{min}(t)\leq m'$.
\end{itemize}

In the third dotted item, $\mathrm{min}(t)$ denotes the least ordinal appearing in the string~$t$, hence $\mathrm{min}(t)\leq m'$ makes sense.
The definitions of these three kinds of relations are similar to that on $\mathcal{H}$ given by Ho et al. in \cite{hzp}, and we can obtain a partial order through the following results, where we use $``;"$ to mean composition of relations. 

\begin{proposition}\label{order}
\
\begin{enumerate}
\item $<_{1}, <_{2}, <_{3}$ are transitive and irreflexive, respectively.
\item $<_{1}; <_{2}\; =\; <_{2}$
\item $<_{1}; <_{3}\;\subseteq\; <_{3}$
\item $<_{2}; <_{3}\;\subseteq\; <_{3}$
\item $<_{3}; <_{2} \;\subseteq\; <_{3}$
\item $<\;:= \;<_{1}\cup <_{2}\cup <_{3}\cup\;(<_{2}; <_{1})\cup (<_{3};<_{1})$ is transitive and irreflexive.
\item $\leq \;:= (<\cup =)$ is a partial order relation on $\mathcal Z$.
\end{enumerate}
\end{proposition}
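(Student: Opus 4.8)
The plan is to read parts (1)--(5) as identities in the algebra of binary relations, proved by elementary manipulation of the indexing strings, and then to obtain (6) and (7) by exhibiting $\leq$ as a single composite whose internal interactions are governed precisely by (2) and (3). Throughout I write $;$ for relational composition and $(=)$ for the identity (diagonal) relation.

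First I would dispatch (1)--(5) by unwinding definitions and fixing a witness for the intermediate point. Transitivity and irreflexivity of each $<_i$ are immediate once one writes the shared middle element as a concatenation; the only point needing care is $<_3$, where a chain $(m,t_1s_1)<_3(m',s_1)<_3(m'',s_2)$ forces $s_1=t_2s_2$, and transitivity reduces to $\min(t_1t_2)\leq m''$, which holds because $\min(t_1t_2)=\min(\min(t_1),\min(t_2))\leq\min(t_2)\leq m''$. The same inequality $\min(t_1t_2)\leq\min(t_i)$ drives parts (4) and (5), and the inclusions in (3),(4),(5) follow by re-reading the definition of $<_3$ after the substitution. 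The one slightly non-obvious step is the inclusion $<_2\subseteq{<_1};{<_2}$ in (2): given $(m,ts)<_2(m,s)$ with leading letter $u$ of $ts$, I would insert the intermediate point obtained by raising the leading letter, namely $(m,ts)<_1(m,(u+1).w)<_2(m,s)$, where $w$ is $ts$ with its first letter deleted; this is legitimate because $|w|\geq|s|$ (as the removed prefix is nonempty), so $s$ remains a suffix after the leading letter is changed, and the prefix stripped in the second step is again nonempty.

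Next, for (6)--(7), I set $A := ({=})\cup{<_2}\cup{<_3}$ and $B := ({=})\cup{<_1}$. A direct expansion of $A;B$ yields exactly $({=})\cup{<_1}\cup{<_2}\cup{<_3}\cup({<_2};{<_1})\cup({<_3};{<_1})$, i.e.\ $\leq = A;B$. From transitivity of $<_1$ one gets $B;B\subseteq B$, and from transitivity of $<_2,<_3$ together with (4) and (5) one gets $A;A\subseteq A$. The heart of the matter is the commutation $B;A\subseteq A;B$: expanding $B;A$ produces the terms ${<_1};{<_2}$ and ${<_1};{<_3}$, which by (2) and (3) collapse into ${<_2}$ and into a subset of ${<_3}$, so that $B;A\subseteq({=})\cup{<_1}\cup{<_2}\cup{<_3}$, and each of these summands already lies in $A;B$. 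Combining, $\leq;\leq = A;B;A;B\subseteq A;(A;B);B = (A;A);(B;B)\subseteq A;B = \leq$, so $\leq$ is reflexive and transitive.

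Finally I would prove antisymmetry, which finishes both (6) and (7). Introduce the weight $\ell(m,s) := |s|$, the length of the string $s$. Both $<_2$ and $<_3$ strictly decrease $\ell$ (a nonempty prefix is removed), whereas $<_1$ and equality preserve it; hence any $a\leq b$ witnessed by $a\,A\,z\,B\,b$ satisfies $\ell(a)\geq\ell(z)=\ell(b)$. If also $b\leq a$, then $\ell(a)=\ell(b)$, which forces the $A$-factor to be length-preserving and therefore an equality, so $a=z$ and $a\leq b$ reduces to $a\,B\,b$; symmetrically $b\,B\,a$. Were $a\neq b$, this would give $a<_1 b$ and $b<_1 a$, contradicting irreflexivity of $<_1$; hence $a=b$, proving (7). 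Then (6) is immediate: irreflexivity of $<$ follows from antisymmetry, and if $a<b<c$ then $a\leq c$, where $a=c$ would yield $a<b<a$ against antisymmetry, so $a<c$. The main obstacle in all of this is the purely combinatorial string bookkeeping of (2) and (3) — in particular checking that lowering a leading letter via $<_1$ cannot destroy the condition $\min(t)\leq m'$ needed for $<_3$ — since once those identities are in hand, the commutation $B;A\subseteq A;B$ and the weight function assemble the partial order almost mechanically.
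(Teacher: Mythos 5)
Your proposal is correct in substance, and it takes a genuinely different route from the paper: the paper in fact offers no proof of this proposition at all, deferring to the analogous verification for the dcpo $\mathcal H$ in \cite{hzp}, where transitivity and irreflexivity of $<$ are obtained by a direct pairwise case analysis of compositions of the five summands. You instead package the five summands algebraically, writing $\leq\;=A;B$ with $A=({=})\cup{<_2}\cup{<_3}$ and $B=({=})\cup{<_1}$, and reduce everything to the absorption laws $A;A\subseteq A$, $B;B\subseteq B$ (items (1), (4), (5)) and the commutation $B;A\subseteq A;B$ (items (2), (3)); transitivity of $\leq$ then falls out in one line from $A;B;A;B\subseteq(A;A);(B;B)\subseteq A;B$, and antisymmetry comes uniformly from the weight $\ell(m,s)=|s|$, which ${<_2},{<_3}$ strictly decrease and ${<_1}$ preserves. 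This is tidier than brute-force casework and isolates exactly which composition identities matter; the cost is shifted into the careful string bookkeeping for those identities — the insertion $(m,u.w)<_1(m,(u+1).w)<_2(m,s)$ proving ${<_2}\subseteq{<_1};{<_2}$, and the stability of the condition $\min(t)\leq m'$ under lowering a leading letter — and you carry these out correctly (note that $u+1\in\mathbb{W}$ since $\mathbb{W}$ is closed under successors).

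One small repair is needed: your claim that ``irreflexivity of $<$ follows from antisymmetry'' is not a valid implication. Antisymmetry of $\leq\;={<}\cup({=})$ says nothing about pairs $(a,a)\in{<}$; for instance, if ${<}$ were the identity relation, the associated $\leq$ would be antisymmetric while ${<}$ fails irreflexivity. But the weight function you already set up closes this gap instantly: any pair in ${<_2}\cup{<_3}\cup({<_2};{<_1})\cup({<_3};{<_1})$ strictly decreases $\ell$ and so cannot be a loop, while ${<_1}$ is irreflexive by item (1). With irreflexivity of $<$ secured this way (and it must be secured first, since your derivation of transitivity of $<$ from transitivity and antisymmetry of $\leq$ uses it when ruling out $a=c$), the whole argument goes through as written.
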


\begin{definition}
We define $\mathcal{Z}$ to be the poset which consists of the set $\mathbb{W}\times \mathbb{W}^{*}$ and the order relation $\leq$ in~\ref{order}. 
\end{definition}

In the rest of this section, we will prove in steps that $\mathcal{Z}$ is a well-filtered dcpo and that the pair $\mathcal{Z}$ and its sobrification $\hat{\mathcal{Z}}$ witness the fact that the category of well-filtered $\mathrm{dcpo's}$ is not $\Gamma$-faithful.

\subsection{$\mathcal{Z}$ is a dcpo}

\emph{Sketch of the proof.} We prove that $\mathcal{Z}$ is a dcpo by showing that every chain $C$ in $\mathcal Z$ has a supremum~\cite[Corollary 2]{markowsky76}.
Since $\sup C$ always exists when $C$ itself has a largest element, we only need to take care of the 
case where $C$ is a strictly increasing chain without a largest element. And of course, in the latter case $C$ must be infinite and we call such chains
\emph{non-trivial}. Similar to Proposition~5.3 in \cite{hzp}, we will be able to show that every non-trivial chain in $\mathcal{Z}$ contains a cofinal chain of the form $(m, u.s)_{u\in W}$, where $m\in \mathbb{W}$ and $s\in \mathbb{W}^{*}$ are fixed, and $W$ is an infinite subset of $\mathbb{W}$. This means that every non-trivial chain in $\mathcal{Z}$ eventually stays in the $m$-column of copy $Y_s$ for some string $s$. 
Hence, in order to prove that $\mathcal Z$ is a dcpo, we only need to verify the existence of suprema of non-trivial chains in that particular form, which is then doable. 

%


\begin{proposition}\label{dcpo}
$\mathcal{Z}$ is a $dcpo$.
\begin{proof}
Let $C$ be a non-trivial chain in $\mathcal{Z}$.

$\mathbf{Claim ~1}$: $C$ contains a cofinal chain of the form $(m, u.s)_{u\in W}$, with $m$ and $s$ fixed. 

Let $(m_1, s_1) < (m_2, s_2) < \cdots$ be a non-trivial chain. Each relationship $(m_i, s_i) <(m_{i+1}, s_{i+1})$ has to be one of the five types listed in item (6) of Proposition \ref{order}. All of these, except $<_1$, strictly reduce the length of the string $s_i$, so they can occur only finitely often along the chain. Therefore, from some index $i_0$ onward, the connecting relationship must always be $<_1$ which implies that the shape of the entries $(m_i, s_i)$, $i \geq i_0$, is as stated.

$\mathbf{Claim ~2}$: If $W$ is uncountable, then $\sup_{u\in W} (m, u.s) = (m, s)$.

Since $(m, u.s)<_{2} (m, s)$ for any $u\in W$, $(m, s)$ is an upper bound of $\{(m,u.s): u\in W\}$. It remains to confirm that $(m,s)$ is the least upper bound of $\{(m,u.s): u\in W\}$.
Assume that $(m', s')$ is another upper bound of $\{(m,u.s): u\in W\}$. Then each $(m, u.s)$ must be related to it by one of the five types listed in Proposition \ref{order}. We set
\begin{center}
$A_{<_{r}} = \{(m, u.s): u\in W\ \mathrm{and}\ (m, u.s)<_{r} (m', s')\}$,
\end{center}
where $<_{r}\in \{<_{1},\ <_{2},\ <_{3},\ <_{2};<_{1},\ <_{3};<_{1}\}$. It follows that at least one of the five sets is uncountable from the uncountability of $W$.

Case 1, $A_{<_{1}}$ is uncountable. For any $(m, u.s)\in A_{<_{1}}$, $(m, u.s)<_{1} (m', s')$, then $m = m'$ and $s'$ can be written as $u'.s$ for some fixed $u'\in \mathbb{W}$. It is obvious that $\downarrow\! u'$ is countable, which contradicts the fact that $A_{<_{1}}$ is uncountable.

Case 2, $A_{<_{2}}$ is uncountable. For any $(m, u.s)\in A_{<_{2}}$, $(m, u.s)<_{2} (m', s')$, so we have $m = m'$ and $s'\subseteq s$. Assume that $s = ts'$, where $t\in \mathbb{W^{*}}$. Then $(m, s) = (m', s')$ when $t = \varepsilon$ or $(m, s) <_{2} (m', s')$ when $t\neq\varepsilon$.

Case 3, $A_{<_{2};<_{1}}$ is uncountable. For any $(m, u.s)\in A_{<_{2};<_{1}}$, $(m, u.s)<_{2};<_{1} (m', s')$, that is, there exists $(m_{u}, s_{u})\in \mathcal{Z}$ such that $(m, u.s)<_{2} (m_{u}, s_{u})<_{1} (m', s')$. Then $m = m_{u} = m'$ and $s_{u}$ can be fixed since all of these $s_{u}$ have the same lengths as $s'$, and $s_{u}\subseteq s$. Set $s_u=s^{*}$ which has been fixed. So we have $(m,s) = (m', s^{*})<_{1} (m', s')$ or $(m,s) <_{2} (m', s^{*})<_{1} (m', s')$.

Case 4, $A_{<_{3}}$ is uncountable. For any $(m, u.s)\in A_{<_{3}}$, $(m, u.s)<_{3} (m', s')$, then $s = ts'$ for some $t\in \mathbb{W^{*}}$. If $t = \varepsilon$, then $u\leq m'$ for all $u\in W$, which will result in a contradiction as $\downarrow\!m'$ is countable, but $W$ is uncountable. Thus $t \neq \varepsilon$ and $\min (ut)\leq m'$ for any $u\in W$. The fact that $\min t\in \mathbb{W}$ is a fixed countable ordinal guarantees the existence of $u_{0}$ with $u_0\geq \min t$. It follows that $\min t\leq m'$ according to $(m, u_{0}.s)<_{3} (m', s')$. Hence, it turns out that $(m, s) = (m, ts')<_{3} (m', s')$.

Case 5, $A_{<_{3};<_{1}}$ is uncountable. For any $(m, u.s)\in A_{<_{3};<_{1}}$, $(m, u.s)<_{3};<_{1} (m', s')$, that is, there exists $(m_{u}, s_{u})\in \mathcal{Z}$ such that $(m, u.s)<_{3} (m_{u}, s_{u})<_{1} (m', s')$ for any $u\in W$. Then $m_{u} = m'$ and $s_{u}$ can be fixed because all of these $s_{u}$ have the same lengths as $s'$, and $s_{u}\subseteq s$. Set $s_u=s^{*}$ which has been fixed. Thus $(m, u.s)<_{3} (m', s^{*})<_{1} (m', s')$. Similar to the proof of Case $4$, we can get that $(m, s)<_{3} (m', s^{*})<_{1} (m', s')$.

$\mathbf{Claim~3}$: If $W$ is countably  infinite, then $\sup_{u\in W} (m, u.s) = (m, u_{0}.s)$, where $u_{0} = \sup W$.

It is evident that $(m, u_{0}.s)$ is an upper bound of $\{(m, u.s): u\in W\}$ because $(m, u.s)<_{1} (m, u_{0}.s)$ for each $u\in W$. Assume that $(m', s')$ is another upper bound. Now we set the same set ${A_{<_{r}}}$ as which in Claim $2$. Then there must exist at least one of the five sets being the cofinal subset of $\{(m, u.s): u\in W\}$. One sees directly that $\sup A_{<_{r}}\!= \sup_{u\in W}(m, u.s)$ if $A_{<_{r}}$ is a cofinal subset of $\{(m,u.s): u\in W\}$.

Case 1, $A_{<_{1}}$ is cofinal in $\{(m, u.s): u\in W\}$. For each $(m, u.s)\in A_{<_{1}}$, $(m, u.s)<_{1} (m', s')$, this means $m = m'$ and $u< u'$ if we set $s' = u'.s$ for some $u'\in \mathbb{W}$ fixed. Then $u_{0} = \sup W = \sup\{u: (m, u.s)\in A_{<_{1}}\}\leq u'$. Thus $(m, u_{0}.s) = (m, u'.s) = (m', s')$ or $(m, u_{0}.s) <_{1} (m, u'.s) = (m', s')$.

Case 2, $A_{<_{2}}$ is cofinal in $\{(m, u.s): u\in W\}$. For each $(m, u.s)\in A_{<_{2}}$, $(m, u.s)<_{2} (m', s')$, then $m = m'$ and $s'\subseteq s$, which yields that $(m, u_{0}.s)<_{2} (m', s')$.

Case 3, $A_{<_{2};<_{1}}$ is cofinal in $\{(m, u.s): u\in W\}$. Then for each $(m, u.s)\in A_{<_{2};<_{1}}$, $(m, u.s)<_{2};<_{1} (m', s')$, that is, there is $(m_{u}, s_{u})\in \mathcal{Z}$ such that $(m, u.s)<_{2} (m_{u}, s_{u}) <_{1} (m', s')$. This implies that $m = m_{u} = m'$ and $s_{u}$ can be fixed because all of these $s_{u}$ have the same lengths as $s'$, and $s_{u}\subseteq s$. Set $s_u=s^{*}$ which has been fixed. Hence, $(m, u_{0}.s)<_{2} (m', s^*) <_{1} (m', s')$ holds.

Case 4, $A_{<_{3}}$ is cofinal in $\{(m, u.s): u\in W\}$. Then for each $(m, u.s)\in A_{<_{3}}$, $(m, u.s)<_{3} (m', s')$, so $s = ts'$ for some $t\in \mathbb{W}^{*}$ and $\min (ut)\leq m'$. If $t = \varepsilon$ or $u\leq\min t$ for all $u$, then $\min (ut)=u\leq m'$ for each $u\in W$. This manifests that $u_{0}\leq m'$. Note that $u_0.s=u_0.ts'$ and $\min(u_0.t)=u_0\leq m'$. As a result, $(m, u_{0}.s)<_{3} (m', s')$. If $t\neq \varepsilon$ and there is a $u_{1}> \min t$ for some $u_{1}\in \{u\in W: (m, u.s)\in A_{<_{3}}\}$, then $\min t\leq m'$ because of $(m, u_{1}.s)<_{3} (m', s')$. It follows that $\min (u_{0}t) = \min t$. So we can gain that $(m, u_{0}.s)<_{3} (m', s')$.

Case 5, $A_{<_{3};<_{1}}$ is cofinal in $\{(m, u.s): u\in W\}$. For each $(m, u.s)\in A_{<_{3};<_{1}}$, we know $(m, u.s)<_{3};<_{1} (m', s')$. This means that there exists $(m_{u}, s_{u})\in \mathcal{Z}$ satisfying $(m, u.s)<_{3} (m_{u}, s_{u})$ $<_{1} (m', s')$ for each $u\in \{u\in W: (m, u.s)\in A_{<_{3};<_{1}}\}$. So $m_{u} = m'$ and $s_{u}$ can be fixed as $s^{*}$ since the length of each $s_{u}$ is same as $s'$, and $s_{u}\subseteq s$. This suggests that $(m, u.s)<_{3} (m', s^{*})<_{1} (m', s')$. By using similar analysis of Case $4$, we can obtain that $(m, u_{0}.s)<_{3} (m', s^{*})<_{1} (m', s')$.

This covers all cases to be considered and we conclude that $\mathcal{Z}$ is a $\mathrm{dcpo}$.
\end{proof}
\end{proposition}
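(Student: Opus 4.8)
The plan is to apply Markowsky's theorem \cite[Corollary 2]{markowsky76}, by which directed completeness is equivalent to the existence of a supremum for every chain. Chains with a largest element are immediate, so I would concentrate on \emph{non-trivial} chains, that is, strictly increasing chains with no top element; any such chain is necessarily infinite. Since a chain and any of its cofinal subchains share the same supremum whenever one exists, the strategy is to first reduce an arbitrary non-trivial chain to a tractable normal form, and then compute the supremum of that form directly.

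The first step, which I would record as Claim~1, is to show that every non-trivial chain contains a cofinal subchain of the shape $(m, u.s)_{u\in W}$, with $m\in\mathbb W$ and $s\in\mathbb W^{*}$ fixed and $W\subseteq\mathbb W$ infinite. The idea is purely combinatorial: each consecutive step realizes one of the five relation types of item (6) of Proposition \ref{order}, and every type other than $<_{1}$ strictly decreases the length of the string coordinate. As strings are finite, only finitely many length-decreasing steps can occur, so beyond some index the chain advances solely by $<_{1}$, which fixes $m$ and the tail $s$ while increasing only the leading ordinal $u$. This reduces the whole problem to evaluating $\sup_{u\in W}(m,u.s)$.

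I would then split on the cardinality of $W$. If $W$ is uncountable, the candidate supremum is $(m,s)$; if $W$ is countably infinite, it is $(m,u_{0}.s)$ where $u_{0}=\sup W$, which exists by the sequential completeness of $\mathbb W$. That the candidate is an upper bound is immediate in each case, via $<_{2}$ when $W$ is uncountable and via $<_{1}$ when $W$ is countable. For minimality I would take an arbitrary upper bound $(m',s')$; each $(m,u.s)$ relates to it by one of the five types, partitioning $W$ into five pieces $A_{<_{r}}$. In the uncountable case the pigeonhole principle forces some piece to be uncountable, while in the countable case some piece must be cofinal; from the defining inequalities of the dominating relation I would then deduce that the candidate itself lies below $(m',s')$.

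The heart of the argument, and the step I expect to be most delicate, is this minimality analysis across all five relation types. Two sources of friction stand out. First, the composite relations $<_{2};<_{1}$ and $<_{3};<_{1}$ introduce an intermediate point $(m_{u},s_{u})$ that a priori depends on $u$; here I would argue that $s_{u}$ is a tail of $s$ of the fixed length $|s'|$, hence uniquely determined and independent of $u$. Second, the $<_{3}$ cases rest on the ordinal inequality $\min(ut)\le m'$: one must use that $\downarrow\! m'$ is countable against the uncountability of $W$ to exclude $t=\varepsilon$, and in the countable case must separately consider whether $W$ eventually exceeds $\min t$ in order to conclude $\min(u_{0}t)=\min t\le m'$. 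Carrying out this ordinal and cardinality bookkeeping correctly and uniformly for each of the five types is the crux of the proof.
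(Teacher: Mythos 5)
Your proposal follows essentially the same route as the paper's own proof: Markowsky's reduction to chains, the cofinal-subchain normalization via the string-length argument (Claim 1), the split into the uncountable case with candidate $(m,s)$ and the countably infinite case with candidate $(m,\sup W.s)$, and the minimality analysis via the five sets $A_{<_r}$ using pigeonhole or cofinality. The delicate points you flag — fixing the intermediate string $s_u$ by its length, and the ordinal/cardinality bookkeeping in the $<_3$ cases — are exactly the ones the paper resolves, and your sketched resolutions match its arguments.
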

\subsection{The Scott closed irreducible subsets of $\mathcal{Z}$}
Next, we shall characterize the Scott closed irreducible subsets of $\mathcal{Z}$ and make a conclusion that $I\!R\!R(\mathcal{Z}) = \{\downarrow\!(m,s): (m,s)\in \mathcal{Z}\}\bigcup\; \{\downarrow\!L_{s}: s\in \mathbb{W}^{*}\}$, where $L_s=\{(m,s)\in \mathcal{Z}:m\in \mathbb{W}\}$.

The  following lemma is immediate and we omit the proof. 
\begin{lemma}\label{chain}
Let $A$ be a subset of $\mathcal{Z}$. Then $A$ is Scott closed if it is downward closed and contains the sups of every non-trivial chain of the form $(m, u.s)_{u\in W}$ contained in $A$, where $W\subseteq \mathbb{W}$.
\end{lemma}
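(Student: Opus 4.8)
The plan is to check the two defining clauses of Scott-closedness directly. A set is Scott closed precisely when it is a lower set and contains the supremum of every directed subset of it whose supremum exists. The lower-set clause is part of the hypothesis, and since $\mathcal Z$ is a dcpo by Proposition \ref{dcpo}, every directed (indeed every chain) supremum exists. So the whole task collapses to showing that $\sup D \in A$ for every directed $D \subseteq A$; only membership in $A$, not existence of the sup, is ever at issue.

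The first move I would make is to replace arbitrary directed suprema by chain suprema, using the same Markowsky/Iwamura machinery \cite{markowsky76} that underlies the dcpo property itself. The relevant fact is that in any dcpo a lower set that contains the supremum of every chain contained in it automatically contains the supremum of every directed subset contained in it. Concretely, one argues by transfinite induction on cardinality: writing a directed $D$ as an increasing union $D = \bigcup_{\alpha<\lambda} D_\alpha$ of directed subsets of strictly smaller cardinality, the induction hypothesis gives $\sup D_\alpha \in A$ for each $\alpha$, the family $(\sup D_\alpha)_{\alpha<\lambda}$ is a chain contained in $A$, and its supremum is exactly $\sup D$. Thus it suffices to prove that $A$ contains the supremum of every chain it contains.

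Next I would pare chains down to the special shape named in the statement. A chain $C \subseteq A$ possessing a largest element contributes nothing, since then $\sup C \in C \subseteq A$. For a non-trivial chain $C$, Claim 1 in the proof of Proposition \ref{dcpo} supplies a cofinal subchain $C'$ of the form $(m, u.s)_{u \in W}$ with $m$ and $s$ fixed and $W \subseteq \mathbb{W}$ infinite. Cofinality forces $\sup C = \sup C'$, and $C' \subseteq C \subseteq A$ is exactly a chain of the form assumed in the hypothesis, so $\sup C' \in A$. Concatenating the two reductions delivers $\sup D \in A$ for every directed $D \subseteq A$, and hence $A$ is Scott closed.

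The only genuinely delicate point is the first reduction: I must ensure that Iwamura's decomposition keeps all the approximating directed sets inside $A$ (they are subsets of $D \subseteq A$) and that the resulting chain of partial suprema again lies in $A$ (guaranteed by the induction hypothesis), so that the chain-closure hypothesis can legitimately be invoked at the limit. Everything after that is bookkeeping about cofinality and the special-form chains of Claim 1, which is why the lemma can fairly be declared immediate once Proposition \ref{dcpo} and the Markowsky reduction are in hand.
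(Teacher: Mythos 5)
Your proof is correct and is precisely the argument the paper leaves implicit: the paper omits the proof as ``immediate'', and the two reductions you spell out---the Iwamura/Markowsky induction to pass from directed suprema to chain suprema, and Claim~1 of Proposition~\ref{dcpo} to replace an arbitrary non-trivial chain by a cofinal subchain of the form $(m,u.s)_{u\in W}$ whose supremum agrees with that of the original chain---are exactly the ingredients the paper already has in hand at this point. Nothing further is needed.
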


\begin{lemma}\label{unum}
Let $A\subseteq \mathcal{Z}$ be a Scott closed subset. If $A$ contains uncountable number of elements of $L_{s}$ for some $s\in \mathbb{W}^{*}$, then $L_{s}\subseteq A$.
\begin{proof}
Let $\{(m, s): m\in W\}$ be a subset of $A$ with an uncountable set $W$. For any $m\in W$, $L_{m.s}\subseteq A$ since each element of $L_{m.s}$ is related to $(m, s)$ by the relation $<_{3}$ and $A$ is a downward closed set. Thus for any $a\in \mathbb{W}$, $\sup_{m\in W}(a, m.s) = (a, s)\in A$, which implies that $L_{s}\subseteq A$.
\end{proof}
\end{lemma}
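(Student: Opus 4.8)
The plan is to leverage the uncountable family of witnesses inside $L_s$ twice: first to pull an entire lower copy of $Y$ into $A$ by downward closure, and then to recover each missing point of $L_s$ as the supremum of an uncountable chain, using the supremum formula already established in Claim~2 of Proposition~\ref{dcpo}.

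Concretely, write the hypothesis as $\{(m,s) : m \in W\} \subseteq A$ for some uncountable $W \subseteq \mathbb{W}$. First I would show that for each fixed $m \in W$ the whole level $L_{m.s}$ lies in $A$. The point is that every element $(a, m.s)$ of $L_{m.s}$ sits below the witness $(m,s)$: viewing the string $m.s$ as $ts$ with the length-one prefix $t = m$, one has $\min(t) = m \le m$, so the defining condition of $<_3$ is satisfied and $(a, m.s) <_3 (m,s)$. Since $A$ is a lower set containing $(m,s)$, this yields $L_{m.s} \subseteq A$ for every $m \in W$.

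Next I would fix an arbitrary $a \in \mathbb{W}$ and assemble the family $\{(a, m.s) : m \in W\}$, which by the previous step lies entirely in $A$. Because $(a, m.s) <_1 (a, m'.s)$ holds exactly when $m < m'$, this family is a chain order-isomorphic to $W$; as $W$ is uncountable it is cofinal in $\omega_1$ and hence has no largest element, so the chain is non-trivial. Applying Claim~2 of Proposition~\ref{dcpo} (with the fixed first coordinate played by $a$ and the varying index played by $m$) identifies its supremum as $\sup_{m \in W}(a, m.s) = (a, s)$. Scott-closedness of $A$ then forces $(a,s) \in A$, and since $a$ was arbitrary I conclude $L_s = \{(a,s) : a \in \mathbb{W}\} \subseteq A$.

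The only places that require care are two routine order-theoretic verifications rather than any genuinely hard argument: checking that the length-one prefix makes the $<_3$ relation fire in the first step, so that downward closure applies, and confirming that the uncountable chain of the second step is really non-trivial, so that the supremum formula of Claim~2 is available and Scott-closedness can be invoked. Neither is a true obstacle; the substantive work has effectively been front-loaded into Proposition~\ref{dcpo}.
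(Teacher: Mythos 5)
Your proposal is correct and follows essentially the same route as the paper's own proof: pull each lower level $L_{m.s}$ into $A$ via the $<_3$ relation and downward closure, then recover each $(a,s)$ as the supremum of the uncountable chain $\{(a,m.s): m\in W\}$ using the formula $\sup_{u\in W}(a,u.s)=(a,s)$ from Claim~2 of Proposition~\ref{dcpo} together with Scott closedness. The paper's version is just a terser rendering of the same argument, with the two verifications you spell out (the $<_3$ condition firing for the length-one prefix, and the non-triviality of the uncountable chain) left implicit.
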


\begin{proposition}
For any $s\in \mathbb{W}^{*}$, $\downarrow\!L_{s}$ is Scott closed and irreducible.
\begin{proof}
$\mathbf{Claim~1}$: $\downarrow\!L_{s}$ is Scott closed.

Obviously, $\da L_s$ is a lower set. Now it remains to confirm that $\da L_s$ contains the supremum of every non-trivial chain of the form $(m, u.s)_{u\in W}$ contained in $\da L_s$ by Lemma \ref{chain}. To this end, let $(m, u.s_{0})_{u\in W}\subseteq\ \downarrow\!L_{s}$ be a non-trivial chain. We set
\begin{center}
$A_{<_{r}} = \{(m, u.s_{0}): u\in W,\ \exists\ (m_{u}, s)\in L_{s}\ s.t.\ (m, u.s_{0})<_{r} (m_{u}, s)\}$,
\end{center}
where $<_{r} \in\!\{<_{1},\ <_{2},\ <_{3},\ <_{2};<_{1},\ <_{3};<_{1}\}$. Then there exists at least one of the five sets being the cofinal subset of $\{(m, u.s_{0}): u\in W\}$ whether $W$ is countable infinite or uncountable.

Case 1, $A_{<_{1}}$ is cofinal in $\{(m, u.s_{0}): u\in W\}$. For any $(m, u.s_{0})\in A_{<_{1}}$, $(m, u.s_{0})<_{1} (m_{u}, s)$ for some $m_{u}\in \mathbb{W}$. Then we have $m_{u} = m, u< u'$ if we set $s$ as $u'.s_{0}$ for some $u'\in \mathbb{W}$. Under this condition, $W$ is countable and $\sup W\leq u'$. So $\sup_{u\in W} (m, u.s_{0})=(m,\sup W.s) = (m, u'.s_{0}) = (m, s)$ or $\sup_{u\in W} (m, u.s_{0})=(m,\sup W.s) <_{1} (m, u'.s_{0}) = (m, s)$.

Case 2, $A_{<_{2}}$ is cofinal in $\{(m, u.s_{0}): u\in W\}$. For any $(m, u.s_{0})\in A_{<_{2}}$, $(m, u.s_{0})<_{2} (m_{u}, s)$ for some $m_{u}\in \mathbb{W}$. We can get that $m_{u} = m$ and $s\subseteq s_{0}$, so $\sup_{u\in W} (m, u.s_{0}) = (m, s)$ or $\sup_{u\in W} (m, u.s_{0})<_{2} (m, s)$.

Case 3, $A_{<_{2};<_{1}}$ is cofinal in $\{(m, u.s_{0}): u\in W\}$. For any $(m, u.s_{0})\in A_{<_{2};<_{1}}$, there is $(m_{u}, s_{u})\in \mathcal{Z}$ such that $(m, u.s_{0})<_{2}(m_{u}, s_{u})<_{1} (m_{u}, s)$. Then we have $m_{u} = m$ , $s_{u}\subseteq s_{0}$ and $s_u$ has the same length as $s$, which leads to the result that $s_{u}$ can be fixed as $s^{*}$. Thus $\sup_{u\in W} (m, u.s_{0}) = (m, s^{*})<_{1} (m,s)$ or
$\sup_{u\in W} (m, u.s_{0}) <_{2} (m, s^{*})<_{1} (m,s)$.

Case 4, $A_{<_{3}}$ is cofinal in $\{(m, u.s_{0}): u\in W\}$. For any $(m, u.s_{0})\in A_{<_{3}}$, $(m, u.s_{0})<_{3} (m_{u}, s)$ for some $m_{u}\in \mathbb{W}$. Then $s\subseteq s_{0}$ and hence, $\sup_{u\in W}(m, u.s_{0}) = (m, s)$ or $\sup_{u\in W}(m, u.s_{0})<_{2} (m, s)$.

Case 5, $A_{<_{3};<_{1}}$ is cofinal in $\{(m, u.s_{0}): u\in W\}$. For any $(m, u.s_{0})\in A_{<_{3};<_{1}}$, $(m, u.s_{0})<_{3};<_{1} (m_{u}, s)$ for some $m_{u}\in \mathbb{W}$, which implies that there exists $s_{u}\in \mathbb{W}^{*}$ such that $(m, u.s_{0})<_{3} (m_{u}, s_{u})<_{1} (m_{u}, s)$. Then $s_{u}\subseteq s_{0}$ and $s_u$ has the same length as $s$. This yields that $s_u$ can be fixed as $s^{*}$. It follows that $\sup_{u\in W}(m, u.s_{0}) = (m, s^{*})<_{1} (m, s)$ or $\sup_{u\in W}(m, u.s_{0})<_{2} (m, s^{*})<_{1} (m, s)$.

Now we can gain our desired result that $\da L_s$ is Scott closed.

$\mathbf{Claim~2}$: $\downarrow\!L_{s}$ is irreducible.

It suffices to deduce that $L_{s}$ is irreducible. Let $U_{1}, U_{2}$ be two Scott open subsets of $\mathcal{Z}$ with $L_{s}\cap U_{1}\neq \emptyset,\ L_{s}\cap U_{2}\neq \emptyset$. Then choose $(m_{1}, s)\in L_{s}\cap U_{1}$ and $(m_{2}, s)\in L_{s}\cap U_{2}$. Because $(m_{1}, s) = \sup_{u\in \mathbb{W}}(m_{1}, u.s),\ (m_{2}, s) = \sup_{u\in \mathbb{W}}(m_{2}, u.s)$, we can find $(m_{1}, u_{1}.s)\in U_{1}$ and $(m_{2}, u_{2}.s)\in U_{2}$ for some $u_{i}\in \mathbb{W},\ i = 1,2$ from the Scott openness of $U_1$ and $U_2$. Without loss of generality, suppose that $u_{1}\leq u_{2}$, then $(m, u_{i}.s)<_{3} (u_{2}, s),\ i = 1,2$, which result in the conclusion that $(u_{2}, s)\in L_{s}\cap U_{1}\cap U_{2}$. Therefore, $L_{s}$ is irreducible.
\end{proof}
\end{proposition}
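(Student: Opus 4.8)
The plan is to establish the two assertions separately and to reuse, rather than redo, the supremum computations already secured in Proposition~\ref{dcpo}. For Scott closedness I would first observe that $\downarrow L_s$ is a lower set by construction, so by Lemma~\ref{chain} it remains only to check that $\downarrow L_s$ contains the supremum of every non-trivial chain of the special form $(m, u.s_0)_{u\in W}$ that it contains. Fix such a chain. Each member $(m, u.s_0)$ lies below some element $(m_u, s)\in L_s$, and this witnessing relation is one of the five types listed in item~(6) of Proposition~\ref{order}. Since a cofinal subchain realises a single fixed type, I would split into the same five cases as in Claims~2 and~3 of Proposition~\ref{dcpo}, recalling from those claims that the supremum equals $(m, s_0)$ when $W$ is uncountable and $(m, u_0.s_0)$ with $u_0=\sup W$ when $W$ is countably infinite.

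In each case the aim is to read off, from the form of the witnessing relation, that this supremum lies below an element of $L_s$. The $<_2$ and $<_3$ cases both force $m_u=m$ and make $s$ a suffix of $s_0$, so the supremum is either equal to $(m,s)$ or below it via $<_2$; the $<_1$ case forces $s=u'.s_0$ and confines $W$ to the countable set $\downarrow\! u'$, whence $u_0\le u'$ and the supremum sits below $(m,s)$ again. The composite cases $<_2;<_1$ and $<_3;<_1$ route through an intermediate point $(m_u, s_u)$, and here I would argue that the strings $s_u$ stabilise — they share the common length $|s|$ and are a common suffix of $s_0$ — so that $s_u$ can be fixed to a single $s^{*}$ and the supremum placed below a definite element of $L_s$. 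In every case the supremum belongs to $\downarrow L_s$, so $\downarrow L_s$ is Scott closed.

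For irreducibility I would reduce to showing that $L_s$ itself is irreducible, since $\downarrow L_s$ is exactly the Scott closure of $L_s$ and a set is irreducible precisely when its closure is. Given Scott open sets $U_1, U_2$ each meeting $L_s$, pick $(m_1, s)\in U_1$ and $(m_2, s)\in U_2$. The key point is that each $(m_i, s)$ is the supremum of its own column chain, $(m_i, s)=\sup_{u\in\mathbb{W}}(m_i, u.s)$, which is the uncountable case of Claim~2 of Proposition~\ref{dcpo}. Scott openness then supplies ordinals $u_1, u_2$ with $(m_i, u_i.s)\in U_i$; taking $u_2\ge u_1$ without loss of generality, both points satisfy $(m_i, u_i.s)<_3 (u_2, s)$, since the single-letter prefix $u_i$ has $\min(u_i)=u_i\le u_2$. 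As $U_1$ and $U_2$ are upper sets, the common upper bound $(u_2, s)$ lands in $U_1\cap U_2\cap L_s$, which is nonempty, so $L_s$ is irreducible.

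I expect the bulk of the effort to be in the Scott closedness direction, and within it the composite-relation cases $<_2;<_1$ and $<_3;<_1$, where the delicate point is arguing that the intermediate strings $s_u$ genuinely stabilise to a single $s^{*}$ rather than varying with $u$. The countable-versus-uncountable dichotomy in the supremum formula means this string bookkeeping must be verified twice, against both $(m, s_0)$ and $(m, u_0.s_0)$; but no idea beyond the analysis of Proposition~\ref{dcpo} is needed, and the irreducibility half is short once one recognises each $(m,s)$ as the supremum of its column.
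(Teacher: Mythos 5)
Your proposal is correct and takes essentially the same route as the paper: Lemma~\ref{chain} plus the five-fold case split over the witnessing relations (with the composite cases $<_2;<_1$ and $<_3;<_1$ handled by stabilising the intermediate strings $s_u$ to a single $s^{*}$ via the fixed-length/suffix argument), and the identical irreducibility argument through $(m_i,s)=\sup_{u\in\mathbb{W}}(m_i,u.s)$ and the common upper bound $(u_2,s)$. One harmless slip: the relation $<_3$ does \emph{not} force $m_u=m$ (unlike $<_2$), but your conclusion in that case uses only that $s$ is a suffix of $s_0$, so the argument stands.
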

From the above proposition, we know that $\da L_{s}$ must be Scott irreducible closed sets. Conversely, to prove that $IRR(\mathcal{Z})\subseteq \{\da L_s : s\in \mathbb{W}^*\}$, we need to consider $\mathbb{W}^{*}$ endowed with the order $\sqsubseteq$ at first, which is defined in the following.

Based on the observation of the definition of the order relation on $\mathcal{Z}$, we can find that the two relations $<_{1}, <_{2}$ concern only the strings component of points $(m, s)$.  We use them to define an order $\sqsubseteq$ on $\mathbb{W}^{*}$(where $u, u'\in \mathbb{W}$, $s, t\in \mathbb{W}^{*}$):
\begin{itemize}
\item $u.s\sqsubset_{1} u'.s$ if $u< u'$
\item $ts\sqsubset_{2} s$ if $t\neq \varepsilon$.
\end{itemize}
Then $\sqsubset\ :=\ \sqsubset_{1}\cup \sqsubset_{2}\cup\ (\sqsubset_{2};\sqsubset_{1})$ is transitive and irreflexive, hence, $\sqsubseteq\ := (\sqsubset\cup =)$ is an order relation on $\mathbb{W}^{*}$, and $\mathbb{W}^{*}$ with the order relation $\sqsubseteq$ is a poset. We denote it by $\mathcal{T}$ and describe parts of $\mathcal{T}$ in Figure $2$.

\begin{remark}\label{lower}
Combine with the order on $\mathcal{Z}$, we have:
\begin{enumerate}
\item If $s\sqsubseteq s'$, then $(m, s)\leq (m, s')$;
\item If $(m, s)\leq (m', s')$, then $s\sqsubseteq s'$,
\end{enumerate}
where $m, m'\in \mathbb{W}$, $s, s'\in \mathbb{W}^{*}$.
\end{remark}

We state and derive an analogous result for $\mathcal{T}$ with the aid of \cite[Proposition 5.9]{hzp} and Proposition~\ref{dcpo}.
\begin{proposition}\label{sf}
\
\begin{enumerate}
\item For any $s\in \mathbb{W}^{*}$, $\uparrow\!s$ is a linear ordered subset of $\mathcal{T}$.
\item The poset $\mathcal{T}$ endowed with the Scott topology is sober.
\item The map $f: \mathcal{Z}\rightarrow \mathcal{T}$ defined by $f(m, s) = s$ is Scott continuous.
\end{enumerate}
\end{proposition}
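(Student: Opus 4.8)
The plan is to prove the three items separately, disposing of (1) and (3) by direct computation and isolating (2), sobriety, as the substantial point; for (2) I would first upgrade $\mathcal{T}$ to a dcpo by re-running Proposition~\ref{dcpo} and then classify its irreducible closed sets. For (1), I would record a normal form for the principal filters of $\mathcal{T}$. Writing $s = a_{1}\cdots a_{n}$, I claim that for $t \neq \varepsilon$ one has $s \sqsubseteq t$ iff there is an index $1 \leq j \leq n$ and a letter $b \geq a_{j}$ with $t = b.(a_{j+1}\cdots a_{n})$ (the suffix $a_{j+1}\cdots a_{n}$ being empty when $j = n$); moreover $\varepsilon$ is the top of $\mathcal{T}$, since $s = s.\varepsilon \sqsubset_{2}\varepsilon$ for every $s \neq \varepsilon$. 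This normal form is read off by unfolding $\sqsubset\ =\ \sqsubset_{1}\cup\sqsubset_{2}\cup(\sqsubset_{2};\sqsubset_{1})$ case by case, exactly as in \cite[Proposition 5.9]{hzp}. Linearity of $\ua s$ is then immediate: take $t = b.(a_{j+1}\cdots a_{n})$ and $t' = b'.(a_{j'+1}\cdots a_{n})$ in $\ua s$ with, say, $j \leq j'$; if $j = j'$ they share the suffix and are compared through $\sqsubset_{1}$ (or are equal), and if $j < j'$ then $a_{j'+1}\cdots a_{n}$ is a suffix of $t$ whose preceding letter in $t$ is $a_{j'}$, so $b' \geq a_{j'}$ and the normal form gives $t \sqsubseteq t'$. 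Together with $\varepsilon$ being the top, any two elements of $\ua s$ are comparable.

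For (3), I would show that $f$ is monotone and preserves suprema of chains. Monotonicity is exactly Remark~\ref{lower}(2). For suprema, since $\mathcal{Z}$ is a dcpo it suffices, by Markowsky's theorem \cite{markowsky76} (as in the sketch preceding Proposition~\ref{dcpo}), to check preservation on chains. A chain with a largest element is sent to one with a largest element, so let $C$ be non-trivial; by Claim~1 of Proposition~\ref{dcpo} it has a cofinal subchain $(m, u.s)_{u\in W}$, which $f$ sends cofinally to $(u.s)_{u\in W}$. If $W$ is uncountable then $\sup C = (m,s)$ by Claim~2, while $\sup_{u\in W} u.s = s$ by the $\mathcal{T}$-analogue of Claim~2; if $W$ is countably infinite then $\sup C = (m, u_{0}.s)$ with $u_{0} = \sup W$ by Claim~3, while $\sup_{u\in W} u.s = u_{0}.s$ by the $\mathcal{T}$-analogue of Claim~3. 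In both cases $f(\sup C) = \sup f(C)$, so $f$ is Scott continuous.

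For (2), I first note that $\mathcal{T}$ is a dcpo: the proof of Proposition~\ref{dcpo} transcribes verbatim once the $m$-coordinate is erased, and it delivers the two $\mathcal{T}$-analogue suprema used in (3). I would then prove that every irreducible Scott-closed $C \subseteq \mathcal{T}$ is a principal ideal, whence $\mathcal{T}$ is sober. The normal form of (1) shows that $t \sqsupseteq s$ forces $|t| \leq |s|$, so any maximum of $C$ must sit among the shortest elements; set $\ell = \min\{|q| : q\in C\}$ and $M = \{q\in C : |q| = \ell\}$, noting any maximum lies in $M$. Two elements of $M$ are comparable only when they share the length-$(\ell-1)$ suffix, and then only through $\sqsubset_{1}$. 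Fixing such a suffix $v$, the set $\{\gamma.v \in C\}$ is a $\sqsubset_{1}$-chain; by the $\mathcal{T}$-analogues of Claims~2 and~3 its supremum is $v$ when the index set is uncountable — impossible, since $|v| < \ell$ would contradict minimality of $\ell$ — and otherwise is $(\sup\gamma).v \in C$, so the chain attains a largest element. The decisive remaining step is to invoke irreducibility to force $M$ to use a single suffix $v$ and to force every element of $C$ below the resulting top $\beta^{*}.v$; concretely, for two incomparable candidate maxima lying in distinct branches of the tree one produces a pair of Scott-open sets separating them off $C$, contradicting irreducibility. Once $C$ has a maximum $x$ we get $C = \da x = \overline{\{x\}}$, with uniqueness from $T_{0}$, giving sobriety.

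That last separation step is the main obstacle. The naive argument — intersecting the open up-sets of the two maxima — is unavailable here, because $\mathcal{T}$ has \emph{no} compact elements at all (every $s$ equals $\sup_{u\in\mathbb{W}} u.s$ with no $u.s \sqsupseteq s$), so $\ua s$ is never Scott open and the algebraic-domain reasoning does not apply; the separating opens must be built by hand from complements of principal ideals together with the branch structure of (1), as in \cite[Proposition 5.9]{hzp}. The genuinely new content over \cite{hzp} is the cardinality bookkeeping forced by working over $[0,\omega_{1})$: it is precisely the dichotomy ``uncountable index set $\Rightarrow$ the supremum drops to the shorter string $v$'' that both rules out length-decreasing accidents above and collapses the would-be uncountable antichains of $\mathcal{Z}$ into harmless $\sqsubset_{1}$-chains in $\mathcal{T}$, which is ultimately what separates the sober $\mathcal{T}$ from the non-sober $\mathcal{Z}$.
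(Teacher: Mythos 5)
Your items (1) and (3) are correct: the normal form for $\ua s$ is exactly right and yields linearity as you say, and the reduction of Scott continuity of $f$ to chains, settled by Claims~2 and~3 of Proposition~\ref{dcpo} together with their $\mathcal{T}$-analogues (which the paper records in the remark following the proposition), is sound. The problem is item (2), and it is a genuine gap, not a presentational one: everything you prove there (minimal length $\ell$, the set $M$, each branch of $M$ having a largest element) is correct but preparatory, while the step you yourself call ``the main obstacle'' --- producing, for two incomparable maximal elements of an irreducible closed set $C$, open sets that separate them off $C$ --- \emph{is} the sobriety statement, and it is never carried out. (For calibration: the paper gives no argument either, deriving the proposition from \cite[Proposition 5.9]{hzp}, so your blind attempt had to stand on its own.) Moreover, the route you gesture at is doubtful as described. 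Because every Scott-open set containing a point must contain co-countably many elements of the column below each of its members, recursively, any two opens $U\ni x$, $V\ni y$ around incomparable points meet in elements of length greater than $|x\vee y|$, arbitrarily deep below the common suffix of $x$ and $y$; so no separation of the form $U\cap V\subseteq \ua (x\vee y)$ can exist, and complements of single principal ideals only give $U\cap V\cap C=\emptyset$ when $C$ is already covered by two principal ideals, which begs the question. The separating open around $x$ must be tailored to $C$ itself.

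The gap can be filled, and the missing idea is to run irreducibility on a closed cover rather than on hand-built opens. First, every nonempty closed $C$ has maximal elements: for $q\in C$, $\ua q\cap C$ is a subchain of the well-ordered chain $\ua q$, so its supremum lies in $\ua q\cap C$ and is maximal in $C$. Now let $x=a_{1}.s_{1}$ (where $s_{1}=a_{2}\cdots a_{n}$) be maximal in the irreducible closed set $C$ and put $D=C\setminus \da x$. The decisive claim is that $x\notin \cl(D)$; granting it, $C\subseteq \da x\cup \cl(D)$ is a cover by two closed sets, irreducibility forces $C\subseteq \da x$ (which also disposes of your unfinished step of getting \emph{all} of $C$, not just $M$, below the top), so $C=\da x$ and sobriety follows. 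To prove the claim, first check that $\da D$ is closed under suprema of uncountable chains: if $(u.s)_{u\in W}\subseteq \da D$ with witnesses $d_{u}\in D$ above $u.s$, then each $d_{u}$ lies in the chain $\ua (u.s)=\{b.s:b\geq u\}\cup \ua s$; either some $d_{u}\in\ua s$, whence $s\in\da D$, or all $d_{u}=b_{u}.s$ form an uncountable chain in $C$ with supremum $s\in C$, and $s\leq x$ is impossible since it would place $d_{u}<s\leq x$ inside $\da x$, so $s\in D$. The paper's closure lemma ($\overline{B}=B'$ for $B$ closed under uncountable-chain sups) then says $x\in \cl(D)$ would give a countable chain $C_{0}\subseteq \da D$ with $x\leq \sup C_{0}$; since $\da D\subseteq C$ and $C$ is closed, maximality forces $\sup C_{0}=x$, so $C_{0}$ is cofinally of the form $(u.s_{1})_{u\in W}$ with $\sup W=a_{1}$, and each witness $d_{u}\in D$ above $u.s_{1}$ lies in the chain $\ua(u.s_{1})$, which also contains $x$; being comparable to $x$ but not below it, $d_{u}$ is strictly above $x$, contradicting maximality. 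Note that this is where item (1) does its real work --- forcing every witness to be comparable with $x$ --- and it is precisely this computation, not the branch bookkeeping, that your proposal is missing.
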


\begin{figure}[H]
\centering
\begin{tikzpicture}[line width=0.6pt,scale=0.7]
\draw (0,0) circle (2pt) ;
\draw (0,0.8) circle (2pt);
\draw (0,1.6) circle (2pt);
\draw (0,2.6) circle (2pt);
\draw (0,3.4) circle (2pt);
\draw (0,5) circle (2pt);

\draw (-1.4,-0.7) circle (2pt);
\draw (-2,-1) circle (2pt);
\draw (-2.8,-1.4) circle (2pt);
\draw (-3.4,-1.7) circle (2pt);
\draw (-4,-2) circle (2pt);

\draw (1.8,0.7) circle (2pt);
\draw (2.4,0.4) circle (2pt);
\draw (3.2,0) circle (2pt);
\draw (3.8,-0.3) circle (2pt);
\draw (4.4,-0.6) circle (2pt);

\draw (3.2,-1.6) circle (2pt);
\draw (3.2,-2.4) circle (2pt);
\draw (3.2,-3.4) circle (2pt);
\draw (3.2,-4.2) circle (2pt);
\draw (3.2,-5) circle (2pt);

\draw (0,0.05)--(0,0.75) (0,0.85)--(0,1.55) (0,2.65)--(0,3.35) (0,3.45)--(0,3.95);
\draw (-3.97,-1.985)--(-3.435,-1.7175) (-3.37,-1.685)-- (-2.83,-1.415) (-1.37,-0.685)--(-0.85,-0.425) (-1.97,-0.985)--(-1.43,-0.715);
\draw [dashed] (0,1.65)--(0,2.55) (0,4)--(0,4.95) (-2.77,-1.385)--(-2.03, -1.015) (-0.73,-0.365)--(0,0) ;

\draw (1.84,0.68)--(2.38,0.41) (3.23,-0.015)--(3.77,-0.285) (3.83,-0.315)--(4.37,-0.585);

\draw (3.2,-4.15)--(3.2,-3.45) (3.2,-4.95)--(3.2,-4.25) (3.2,-2.35)--(3.2,-1.65) (3.2,-1.55)--(3.2,-1) ;
\draw [dashed] (3.2,-3.35)--(3.2,-2.45) (3.2,-1)--(3.2,-0.05);
\draw [dashed] (0.1,1.55)--(1.7,0.75) (2.5, 0.35)--(3.1,0.05);
\draw [dashed] (0.1,0.75)--(1,0.3) (0.1,2.55)--(1,2.1) (0.1, 3.35)--(1,2.9) (1.8,0.65)--(1.8,-0.7)  (2.4,0.35)--(2.4,-1)(3.8,-0.35)--(3.8,-1.6) (4.4,-0.65)--(4.4,-1.95);
\draw [dashed] (-1.4,-0.8)--(-1.4,-2)  (-2,-1.1)--(-2,-2.3) (-2.8,-1.5)--(-2.8,-2.7) (-3.4,-1.8)--(-3.4,-3) (-4,-2.1)--(-4,-3.3);
\draw [dashed] (3.1,-1.65)--(2.1,-2.15) (3.1,-2.45)--(2.1,-2.95) (3.1,-3.45)--(2.1,-3.95) (3.1,-4.25)--(2.1,-4.75) (3.1,-5.05)--(2.1,-5.55);

\node[right][font=\scriptsize]  at(0,0) {1};
\node[left][font=\scriptsize]  at(0,0.8) {2};
\node[left][font=\scriptsize]  at(0,1.6) {3};
\node[left][font=\scriptsize]  at(0,2.6) {$\omega_{0}$};
\node[left][font=\scriptsize]  at(0,3.4) {$\omega_{0}\!+\!1$};
\node[left][font=\scriptsize]  at(0,5) {$\varepsilon$};

\node[left][font=\scriptsize]  at(-0.9,-0.3) {$(\!\omega_{0}\!\!+\!\!1\!).1$};
\node[left][font=\scriptsize]  at(-1.9,-0.8) {$\omega_{0}.1$};
\node[left][font=\scriptsize]  at(-2.7,-1.2)  {3.1};
\node[left][font=\scriptsize]  at(-3.3,-1.5)  {2.1};
\node[left][font=\scriptsize]  at(-4,-2)  {1.1};

\node[right][font=\scriptsize]  at(1.3,1.1) {$(\!\omega_{0}\!\!+\!\!1\!).3$};
\node[right][font=\scriptsize]  at(2.4,0.6) {$\omega_{0}.3$};
\node[right][font=\scriptsize]  at(3.25,0.2) {3.3};
\node[right][font=\scriptsize]  at(3.8,-0.2) {2.3};
\node[right][font=\scriptsize]  at(4.4,-0.6) {1.3};

\node[left][font=\scriptsize]  at(3.2,-1.5) {$(\!\omega_{0}\!\!+\!\!1\!).3.3$};
\node[right][font=\scriptsize]  at(3.2,-2.4) {$\omega_{0}.3.3$};
\node[right][font=\scriptsize]  at(3.2,-3.4) {3.3.3};
\node[right][font=\scriptsize]  at(3.2,-4.2) {2.3.3};
\node[right][font=\scriptsize]  at(3.2,-5)   {1.3.3};

\end{tikzpicture}
\scriptsize \\ Figure 2: The order $\sqsubseteq$ on $\mathbb{W}^{*}$.
\end{figure}

We remark that for any chain $C\subseteq \mathbb{W}^{*}$ (or $C'\subseteq \mathcal{Z}$), $\sup C\ (\sup C')$ exists and $\sup C = \sup_{u\in W_{1}}u.s$ (or $\sup C' = \sup_{u\in W_{2}}(m, u.s)$) since there exists $\{u.s: u\in W_{1}\}$ (or $\{(m, u.s): u\in W_{2}\}$) being a cofinal subset of $C$ (or $C'$), where $W_{1}$ (or $W_{2}$) is the subset of $\mathbb{W}$. Thus in the following context, we can regard any non-trivial chain in $\mathcal{T}$ (or in $\mathcal{Z}$) as the form of $\{u.s: u\in W_{0}\}$ (or $\{(m, u.s): u\in W'\}$) with some $W_{0}$ (or $W'$)$\subseteq \mathbb{W}$. In addition, a straightforward verification establishes that $\sup_{u\in W_0}u.s=s$ when $W_0$ is uncountable, and $\sup_{u\in W_0}u.s=\sup W_0.s$ when $W_0$ is countable.
\begin{lemma}
Let $B$ be a subset of\ \ $\mathbb{W}^{*}$. If $B$ is closed for the $\mathrm{sups}$ of any uncountable non-trivial chain, then we have $\overline{B} = B'$, where
\begin{center}
$B' = \bigcup\; \{\downarrow\!\sup C: C\subseteq\ \downarrow\!B\ \mathrm{is}\ \mathrm{a}\ \mathrm{countable}\ \mathrm{chain}\}$.
\end{center}
\begin{proof}
One sees immediately that $B'\subseteq \overline{B}$ and $B\subseteq B'$, then it remains to prove that $B'$ is Scott closed. 

$\textbf{Claim~1}$: $\downarrow\!B$ is closed for the $\mathrm{sups}$ of any uncountable chain contained in $\downarrow\!B$. 

Assume that $\{u.s': u\in W\}\subseteq\; \downarrow\!B$ be an uncountable chain. Then there exists $s_{u}\in B$ such that $u.s'\sqsubseteq s_{u}$ for each $u\in W$. If for any $u\in W$, $u.s' = s_{u}$ or $u.s'\sqsubset_{1} s_{u}$, then $\{s_{u}: u\in W\}$ is an uncountable chain of $B$, so $\sup_{u\in W} s_{u}\in B$ by the assumption that $B$ is closed for the $\mathrm{sups}$ of any uncountable non-trivial chain. This yields that $\sup_{u\in W}u.s'=s'\sqsubseteq \sup_{u\in W} s_{u}$. If there exists $u'.s'\sqsubset_{2} s_{u'}$, then $u.s'\sqsubset_{2} s_{u'}$ for all $u\in W$. Thus $s' = s_{u'}$ or $s'\sqsubset_{2} s_{u'}$. The case of which there exists $u'.s'\sqsubset_{2};\sqsubset_{1} s_{u'}$ is similar to the former. So $\sup_{u\in W}u.s'=s'\in \downarrow\!B$. 

$\textbf{Claim~2}$: $B'$ is Scott closed.

Let $\{v.s: v\in W_{1}\}$ be a non-trivial chain of $B'$ with $W_{1}\subseteq \mathbb{W}$ being uncountable. Then there exists $C_{v}\subseteq B$ being a countable chain such that $v.s\sqsubseteq \sup C_{v}$ for any $v\in W_{1}$. We distinguish the following two cases:

Case 1, For each $v\in W_{1}$, $v.s = \sup C_{v}$ or $v.s \sqsubset_{1} \sup C_{v}$. If $\{\sup C_{v}: v\in W_{1}\}$ is uncountable, then there must exist $W_{2}\subseteq \mathbb{W}$ being uncountable such that $\{a.s: a\in W_{2}\}$ is a cofinal subset of $\bigcup_{v\in W_{1}}C_{v}$. By Claim $1$, we have $s=\sup_{a\in W_{2}}a.s\in\ \downarrow\!B$. Thus $s\in B'$. If $\{\sup C_{v}: v\in W_{1}\}$ is countable, then there exists a $v'\in W_{1}$ such that the set $\{v.s: v\in W_{1}, v.s\sqsubset_{1} \sup C_{v'}\}$ is uncountable, which implies that $\{v.s: v\in W_{1}, v.s\sqsubset_{1} \sup C_{v'}\}$ is a cofinal subset of $\{v.s: v\in W_{1}\}$. It follows that $v\leq v_0$ if we set $\sup C_{v'}$ as $v_0.s$ for some $v_0\in \mathbb{W}$. This means that $W_1\subseteq \{a\in \mathbb{W}:a\leq v_0\}$. So we can obtain that $W_1$ is a countable set, which is impossible.

Case 2, There exists $v_{0}\in W_{1}$ such that $v_{0}.s\sqsubset_{2} \sup C_{v_{0}}$ or $v_{0}.s\sqsubset_{2};\sqsubset_{1} \sup C_{v_{0}}$ (since the proof of the two cases are similar, we omit that of the latter). Then we have $v.s \sqsubset_{2} \sup C_{v_{0}}$ for all $v\in W_{1}$. Hence, $\sup _{v\in W_1}v.s=s = \sup C_{v_{0}}$ or $\sup _{v\in W_1}v.s=s \sqsubset_{2} \sup C_{v_{0}}$, so $s\in B'$.

So we have proved that $B'$ is closed for the $\mathrm{sups}$ of any uncountable chain. Ones sees obviously that $B'$ is closed for the $\mathrm{sups}$ of any countable chain and $B'$ is a lower set. We can conclude that $B'$ is Scott closed. 
\end{proof}
\end{lemma}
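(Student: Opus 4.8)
The plan is to establish the equality by sandwiching $B'$ between $B$ and $\overline{B}$ and then proving that $B'$ is Scott closed; the displayed equality then follows, since $B\subseteq B'\subseteq\overline{B}$ together with Scott-closedness of $B'$ forces $\overline{B}\subseteq\overline{B'}=B'$. First I would dispatch the two inclusions. The inclusion $B\subseteq B'$ is immediate: each $b\in B$ is the supremum of the trivial countable chain $\{b\}\subseteq\ \downarrow\!B$, so $b\in\ \downarrow\!\sup\{b\}\subseteq B'$. For $B'\subseteq\overline{B}$, any countable chain $C\subseteq\ \downarrow\!B$ lies in the Scott closed set $\overline{B}$, its supremum exists because suprema of chains in $\mathcal{T}$ exist (as recorded in the observation preceding the statement), and hence $\sup C\in\overline{B}$ and $\downarrow\!\sup C\subseteq\overline{B}$.

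Thus the whole weight of the argument falls on showing that $B'$ is Scott closed. Here I would lean on the normal form for chains in $\mathcal{T}$ noted just above the lemma: every non-trivial chain may be taken to be $\{u.s:u\in W_0\}$ for some $W_0\subseteq\mathbb{W}$, with supremum $s$ when $W_0$ is uncountable and $\sup W_0.s$ when $W_0$ is countable. That $B'$ is a lower set is clear from its definition, and closure under suprema of countable chains is readily checked from it; the substantive point is closure under suprema of \emph{uncountable} non-trivial chains, which I expect to be the crux.

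To reach that point I would first prove an auxiliary closure property (Claim 1): $\downarrow\!B$ is itself closed under suprema of uncountable chains. Given an uncountable chain $\{u.s':u\in W\}\subseteq\ \downarrow\!B$, I pick witnesses $s_u\in B$ with $u.s'\sqsubseteq s_u$ and split on the type of this relation. If every $u.s'$ relates to its $s_u$ by $\sqsubset_1$ or equality, then all the $s_u$ share the tail $s'$, so $\{s_u:u\in W\}$ is a chain; it is uncountable because the heads of the $s_u$ are unbounded in $\mathbb{W}$, whence $\sup_u s_u\in B$ by hypothesis and $s'=\sup_u u.s'\sqsubseteq\sup_u s_u$. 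If instead some $u'.s'\sqsubset_2 s_{u'}$ (or $\sqsubset_2;\sqsubset_1$), that single relation already dominates the entire chain, giving $s'\sqsubseteq s_{u'}\in B$. Either way $\sup_u u.s'\in\ \downarrow\!B$.

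Armed with Claim 1, the main step (Claim 2) treats an uncountable non-trivial chain $\{v.s:v\in W_1\}\subseteq B'$ with witnessing countable chains $C_v\subseteq\ \downarrow\!B$ satisfying $v.s\sqsubseteq\sup C_v$. When each $v.s$ meets $\sup C_v$ through $\sqsubset_1$ or equality, I would split on the cardinality of $\{\sup C_v:v\in W_1\}$: if it is uncountable I extract from $\bigcup_v C_v$ a cofinal uncountable chain of the form $\{a.s:a\in W_2\}$ and apply Claim 1 to land $s=\sup_a a.s$ in $\downarrow\!B\subseteq B'$; if it is countable, cofinality pins $\sup C_{v'}=v_0.s$ for some $v'$, forcing $W_1\subseteq\ \downarrow\!v_0$ and hence $W_1$ countable, a contradiction. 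When instead some $v_0.s\sqsubset_2\sup C_{v_0}$ (or $\sqsubset_2;\sqsubset_1$), that relation dominates the whole chain and yields $s\sqsubseteq\sup C_{v_0}\in B'$. The main obstacle will be the careful bookkeeping that lets the string witnesses be ``fixed'' across the uncountable index set — keeping the tail $s$ and the string length under control so that the heads alone carry the cardinality arguments — together with correctly excluding the countable-$\{\sup C_v\}$ subcase; once these are handled, the remaining verifications are the same routine relation-chasing already carried out for $\mathcal{Z}$.
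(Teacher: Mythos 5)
Your proposal is correct and follows essentially the same route as the paper's own proof: the same sandwich $B\subseteq B'\subseteq\overline{B}$ reducing everything to Scott-closedness of $B'$, the same auxiliary Claim that $\downarrow\!B$ is closed under suprema of uncountable chains (with the identical case split on whether the witnessing relations are $\sqsubset_1$/equality or involve $\sqsubset_2$), and the same main argument for uncountable chains in $B'$, including the dichotomy on the cardinality of $\{\sup C_v : v\in W_1\}$ and the cofinality contradiction in the countable subcase. No gaps beyond the routine verifications that the paper itself also leaves implicit.
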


\begin{lemma}\label{supin}
Let $A$ be a Scott closed irreducible subset of $\mathcal{Z}$. Then $\sup f(A)$ exists and $\sup f(A)\in f(A)$, where $f$ is the Scott continuous map defined in Proposition \ref{sf}.
\begin{proof}
By reason that $A$ is irreducible in $\mathcal{Z}$ and $f$ is Scott continuous, we have $\overline{f(A)}\in I\!R\!R(\mathcal{T})$. Then there exists $s_{0}\in \mathbb{W}^{*}$ such that $\overline{f(A)} = \overline{\{s_{0}\}} =\ \downarrow\!\!s_{0}$ from the sobriety of $\mathcal{T}$, which results in $\sup f(A) = s_{0}$. It suffices to show that $s_{0}\in f(A)$.

$\mathbf{Claim~1}$: $f(A)$ is closed for the $\mathrm{sups}$ of any uncountable non-trivial chain, and a lower set.

Due to the item (2) in Remark \ref{lower}, the result that $f(A)$ is a lower set follows immediately. Let $\{u.s: u\in W\}$ be an uncountable non-trivial chain in $f(A)$. If for any $u\in W$, $A_{u.s}$ is uncountable, where $A_{u.s} = \{m\in \mathbb{W}: (m, u.s)\in A\}$, then $L_{u.s}\subseteq A$ by Lemma \ref{unum}, so for each $m\in \mathbb{W}, \{(m, u.s): u\in W\}\subseteq A$. Because $W$ is uncountable and $A$ is Scott closed, $\sup_{u\in W}(m, u.s) =(m,s)\in A$ for any $m\in \mathbb{W}$. It follows that $s\in f(A)$. If $A_{u'.s}$ is countable for some $u'\in W$, then the fact that $A$ is a lower set implies that $A_{u.s}\subseteq A_{u'.s}$ for any $u\geq u'$. This means that $A_{u.s}$ is countable for any $u\geq u'$, which guarantees the existence of $u_{1}\geq u'$ with the property $A_{u.s} = A_{u_{1}.s}$ for all $u\geq u_{1}$, owing to the uncountability of $W$. Thus for each $m\in A_{u_{1}.s}, \{(m, u.s): u\geq u_{1}\}\subseteq A$. Therefore, we have $\sup_{u\geq u_1}(m, u.s)=(m, s)\in A$ for any $m\in A_{u_{1}.s}$ as $A$ is Scott closed, that is, $s\in f(A)$.

Assume for the sake of a contradiction that $s_{0}\notin f(A)$. By Claim 1, we know $\overline{f(A)} = f(A)'$. This infers that $s_{0}\in \overline{f(A)}\setminus f(A) = f(A)'\setminus f(A)$, where
\begin{center}
$f(A)' = \bigcup \{\downarrow\!\sup C: C\subseteq\; \downarrow\!f(A) = f(A)\ \mathrm{is}\ \mathrm{a}\ \mathrm{countable}\ \mathrm{chain}\}$.
\end{center}
This implies that there exists a countable chain $C\subseteq f(A)$ satisfying $s_{0} = \sup C$. Assume that $\sup C = \sup_{u\in D}u.s'=\sup D.s'$, where $D\subseteq \mathbb{W}$ is a countable subset, that is, $s_0=\sup D.s'$. Then there must exist $u'\in D$ such that $A_{u'.s'}$ is countable. If not, $A_{u.s'}$ is uncountable for all $u\in D$, then $L_{u.s'}\subseteq A$ by Lemma \ref{unum}, and for each $m\in \mathbb{W}$, $\{(m, u.s'): u\in D\}\subseteq A$, which reveals that $(m, s_{0}) = (m, \sup_{u\in D}u.s')\in A$, that is, $s_{0}\in f(A)$, a contradiction. Thus $A_{u'.s'}$ is countable for some $u'\in D$. This guarantees the existence of $\sup A_{u'.s'}$. Now we let $m_{0} = \sup A_{u'.s'}$, $B = \{s\in f(A): s\geq u'.s'\}$. Set
\begin{center}
$P = \bigcup_{s\in B}\bigcup_{m\in A_{s}}\downarrow\!(m, s)\bigcup \downarrow\!(m_{0}, s_{0})$.
\end{center}

$\mathbf{Claim~2}$: $P$ is Scott closed.

Let $\{(a, u.s): u\in D'\}$ be a non-trivial chain in $P$. If $\{(a, u.s): u\in D'\}\bigcap \downarrow\!(m_{0}, s_{0})$ is cofinal in $\{(a, u.s): u\in D'\}$, then obviously, $\sup_{u\in D'}(a, u.s)$ is less than or equal to $(m_{0}, s_{0})$. Therefore, it belongs to $P$. If $\{(a, u.s): u\in D'\}\bigcap\, (\bigcup_{s\in B}\bigcup_{m\in A_{s}}\downarrow\!(m, s))$ is cofinal in $\{(a, u.s): u\in D'\}$, then for any $u\in D'$, there exist $s_{u}\in B$ and $m_{u}\in A_{s_{u}}$ such that $(a, u.s)< (m_{u}, s_{u})$. Let
\begin{center}
$A_{<_{r}} = \{(a, u.s): u\in D'\ \mathrm{and}\ (a, u.s)<_{r} (m_{u}, s_{u})\ \mathrm{for}\ \mathrm{some}\ s_{u}\in B,\ m_{u}\in A_{s_{u}}\}$,
\end{center}
and
\begin{center}
$U_{<_{r}} = \{u\in D': (a, u.s)\in A_{<_{r}}\}$,
\end{center}
where $<_{r}\in \{<_{1},\; <_{2},\; <_{3},\; <_{2};<_{1},\; <_{3};<_{1}\}$. Now we need to distinguish the following cases.

Case 1, $A_{<_{1}}$ is cofinal in $\{(a, u.s): u\in D'\}$. For any $(a, u.s)\in A_{<_{1}}$, there exist $s_{u}\in B$ and $m_{u}\in A_{s_{u}}$ such that $(a, u.s)<_{1} (m_{u}, s_{u})$. Then $m_{u} = a$ and $\{s_{u}: u\in U_{<_{1}}\}$ is a chain contained in $B$, so $\sup_{u\in U_{<_{1}}}s_{u}$ exists and $(a, \sup_{u\in U_{<_{1}}}s_{u})\in A$. This means that $\sup_{u\in U_{<_{1}}}s_{u}\in B$ and $\sup_{u\in D'}(a, u.s)\leq (a, \sup_{u\in U_{<_{1}}}s_{u})$. We conclude that $\sup_{u\in D'}(a, u.s)\in P$.

Case 2, $A_{<_{2}}$ or $A_{<_{2};<_{1}}$ is cofinal in $\{(a, u.s): u\in D'\}$. It is easy to verify that $\sup_{u\in D'}(a, u.s)\in P$ , so we omit the proof here.

Case 3, $A_{<_{3}}$ is cofinal in $\{(a, u.s): u\in D'\}$. For any $(a, u.s)\in A_{<_{3}}$, there exist $s_{u}\in B$ and $m_{u}\in A_{s_{u}}$ such that $(a, u.s)<_{3} (m_{u}, s_{u})$, so $s_{u}\subseteq s$. The finiteness of the length of $s$ ensures the existence of $s_{u_{0}}\in B$ such that $\{(a, u.s): u\in U_{<_{3}}, (a, u.s)<_{3} (m_{u}, s_{u_{0}})\}$ is cofinal in $A_{<_{3}}$. Assume that $s = ts_{u_{0}}$, then $\min(ut)\leq m_{u}$. If $t = \varepsilon$ or $u\leq \min t$ for all $u\in U_{<_{3}}$, then $u\leq m_{u}$ for any $u\in U_{<_{3}}$ and $D'$ is countable under this condition.  It follows that $A_{s_{u_{0}}}\subseteq A_{u'.s'}$ as $s_{u_{0}}\sqsupseteq u'.s'$. This indicates that $\sup D' = \sup U_{<_{3}}\leq\sup_{u\in U_{<_{3}}}m_{u}\leq m_{0}$. Thus that $\sup_{u\in D'}(a, u.s)<_{3} (m_{0}, s_{u_{0}})< (m_{0}, s_{0})$ holds. Else, $t\neq \varepsilon$ and there is a $u_{1}\in D'$ such that $u_{1}\!> \min t$. Then we have $\min t\leq m_{u_1}$ according to $(a, u_1.s)<_{3} (m_{u_1}, s_{u_{0}})$, which leads that $\sup_{u\in D'}(a, u.s)<_{3} (m_{u_{1}}, s_{u_{0}})$. Hence $\sup_{u\in D'}(a, u.s)\in P$.

Case 4, $A_{<_{3};<_{1}}$ is cofinal in $\{(a, u.s): u\in D'\}$. For any $(a, u.s)\in A_{<_{3};<_{1}}$, there exist $s_{u}\in B$ and $m_{u}\in A_{s_{u}}$ such that $(a, u.s)<_{3};<_{1} (m_{u}, s_{u})$, that is, $(a, u.s)<_{3} (m_{u}, s_{u}')<_{1} (m_{u}, s_{u})$ for some $s_{u}'\in \mathbb{W}^{*}$. Then there exists $s^{*}\in \mathbb{W}^{*}$ such that $\{(a, u.s): u\in U_{<_{3};<_{1}}\ \mathrm{and}\ (a, u.s)<_{3} (m_{u}, s^{*})<_{1} (m_{u}, s_{u})\}$ is cofinal in $A_{<_{3};<_{1}}$ as the length of $s$ is finite and $s_{u}'\subseteq s$. The rest of the proof is similar to Case $3$. Thus we have $\sup_{u\in D'}(a, u.s)\in P$.

Therefore, $P$ is Scott closed.

Note that $A\subseteq P\cup \overline{(A\setminus\!P)}$ and $A$ is irreducible. Then $A\subseteq P$ or $A\subseteq \overline{A\setminus\!P}$. 

$\textbf{Claim~3}$: $A\subseteq P$.

Suppose $A\not \subseteq P$. Then $A\subseteq \overline{A\setminus\!P}$. The continuity of $f$ results in $\overline{f(A)}\subseteq \overline{f(A\setminus\!P)}$. It follows that $\overline{f(A)}\subseteq f(A\setminus\!P)'$ as $f(A\setminus\!P)$ is closed for the sups of any uncountable non-trivial chain (the proof is similar to that of $f(A)$). Then $s_{0}\in f(A\setminus\!P)'$. This means $s_{0} = \sup C_{0}$ for some countable chain $C_{0}\subseteq\ \downarrow\!\!f(A\setminus\!P)$. Since $s_{0} = \sup_{u\in D}u.s'$, we can obtain that $C_0$ is a cofinal subset of $\{u.s'\mid u\in D\}$, which yields that there exists a $s''\in f(A\setminus\!P)$ such that $u'.s'\sqsubseteq s''$. So we can find $m'\in \mathbb{W}$ such that $(m', s'')\in A\setminus\!P$. However, according to the construction of $P$, the result that $(m', s'')\in P$ follows immediately, which is a contradiction to the fact that $(m', s'')\in A\backslash P$. Hence, $A\subseteq P$.

Now we claim that for any $u\in D$, there is a $u^{*}\geq u$ such that $A_{u^{*}.s'} \subsetneqq A_{u.s'}$.
Suppose not, if there is a $u^{*}\in D$ such that for any $u\in D$ with $u\geq u^{*}$, $A_{u.s'} = A_{u^{*}.s'}$, then for each $m\in A_{u^{*}.s'}$, $\{(m, u.s'): u\in D, u\geq u^{*}\}\subseteq A$, and we know that $(m, s_{0}) = (m, \sup_{u\in \uparrow\!u^{*}\cap D}u.s')\in A$, which contradicts the assumption that $s_{0}\notin f(A)$. 

Thus for any $u\in D$, there is a $u^{*}\geq u$ such that $A_{u^{*}.s'} \subsetneqq A_{u.s'}$. It follows that for $u'\in D$, we can find $u''\in D$ such that $u''> u'$, $A_{u''.s'} \subsetneqq A_{u'.s'}$. The assumption that $s_0\notin f(A)$ reveals that $m_{0} \notin  A_{u_{1}.s'}$ for some $u_1> u''$. 
It turns out that there is $u_2>u_1$ in $D$ satisfying that $A_{u_2.s'}\subsetneqq A_{u_1.s'}$.
Write $E = \{u\in D: A_{u.s'} = A_{u_{1}.s'}\}$. The Scott closedness of $A$ indicates that $\sup E\in E$. Let $\hat{u} = \sup E$. Then $A_{\hat{u}+1.s'}\subsetneqq A_{\hat{u}.s'}$ as $\hat{u}+1\notin E$ and $\hat{u}+1\leq u_2$, which yields $A_{u_2.s'}\subseteq A_{\hat{u}+1.s'}$. Note that $A_{u_2.s'}$ is nonempty by reason of $u_2.s'\in f(A)$. Then $A_{\hat{u}+1.s'}$ is nonempty. Pick $m_{1}\in A_{\hat{u}.s'}\setminus A_{\hat{u}+1.s'}$. We can get that $m_{0}\neq m_{1}$ from $A_{\hat{u}.s'} = A_{u_{1}.s'}$.

Now we set $Q = \downarrow\!(P\setminus \downarrow\!(m_{1}, \hat{u}.s'))$, then $P = \downarrow\!(m_{1}, \hat{u}.s')\cup Q$.

$\mathbf{Claim~4}$: $Q$ is Scott closed.

By the construction of $P$, we know
\begin{center}
$Q = \downarrow (\bigcup_{s\in B}\bigcup_{m\in A_{s}}\downarrow\!(m,s)\setminus \downarrow\!(m_{1}, \hat{u}.s'))\bigcup \downarrow (\downarrow\!(m_{0}, s_{0})\setminus \downarrow\!(m_{1}, \hat{u}.s'))$.
\end{center}
The set $Q$ is obviously a lower set. So it remains to prove that $Q$ is closed for the $\mathrm{sups}$ of any non-trivial chain.
To begin with, we first notice that $(m_0,s_0)\notin  \downarrow\!(m_{1}, \hat{u}.s')$. Suppose $(m_{0}, s_{0})\leq (m_{1}, \hat{u}.s')$, then $(m_0,s_0)\in A$ because of the fact that $ (m_{1}, \hat{u}.s')\in A$, which is a contradiction to the assumption that $s_0\notin f(A)$. It follows that $\downarrow (\downarrow\!(m_{0}, s_{0})\setminus \downarrow\!(m_{1}, \hat{u}.s'))=\da (m_0,s_0)$. In other words, 
\begin{center}
$Q = \downarrow (\bigcup_{s\in B}\bigcup_{m\in A_{s}}\downarrow\!(m,s)\setminus \downarrow\!(m_{1}, \hat{u}.s'))\bigcup \downarrow\!(m_{0}, s_{0})$.
\end{center}

Let $\{(a, u.s): u\in D_{0}\}$ be a non-trivial chain in $Q$. If $\{(a, u.s): u\in D_{0}\}\bigcap \downarrow (m_{0}, s_{0})$ is cofinal in $\{(a, u.s): u\in D_{0}\}$, then $\sup_{u\in D_{0}}(a, u.s)\leq (m_{0}, s_{0})$. Else, $\{(a, u.s): u\in D_{0}\}\bigcap\downarrow (\bigcup_{s\in B}\bigcup_{m\in A_{s}}\downarrow\!(m,s)\setminus \downarrow\!(m_{1}, \hat{u}.s'))$ is cofinal in $\{(a, u.s): u\in D_{0}\}$. Then, for any $u\in D_{0}$, there exists $s_{u}\in B, m_{u}\in A_{s_{u}}$ such that $(a, u.s)< (m_{u}, s_{u})$ and $(m_{u}, s_{u})\nleq (m_{1}, \hat{u}.s')$. We set
\begin{center}
$A_{<_{r}} = \{(a, u.s): u\in D_{0},\ \exists s_{u}\in B, m_{u}\in A_{s_{u}}\ \mathrm{with}\ (m_{u}, s_{u})\nleq (m_{1}, \hat{u}.s')\ \mathrm{s.t.}\ (a, u.s)<_{r} (m_{u}, s_{u})\}$,
\end{center}
and
\begin{center}
$U_{<_{r}} = \{u\in D_{0}: (a, u.s)\in A_{<_{r}}\}$,
\end{center}
where $<_{r}\in \{<_{1},\ <_{2},\ <_{3},\ <_{2};<_{1},\ <_{3};<_{1}\}$. Now we need to distinguish the following cases.

Case 1, $A_{<_{1}}$ is cofinal in $\{(a, u.s): u\in D_{0}\}$. For any $(a, u.s)\in A_{<_{1}}$, there exist $s_{u}\in B$ and $m_{u}\in A_{s_{u}}$ with $(m_{u}, s_{u})\nleq (m_{1}, \hat{u}.s')$ such that $(a, u.s)<_{1} (m_{u}, s_{u})$. Then we have $m_{u} = a$ and $\{s_{u}: u\in U_{<_{1}}\}$ is a chain in $B$. It follows that $(a, \sup_{u\in U_{<_{1}}}s_{u})\in A$ from the Scott closedness of $A$. Furthermore, $\sup_{u\in D_{0}}(a, u.s)\leq (a, \sup_{u\in U_{<_{1}}}s_{u})$. It is obvious that $(a, \sup_{u\in U_{<_{1}}}s_{u})\nleqslant (m_{1}, \hat{u}.s')$ since $(a, s_{u})\nleqslant (m_{1}, \hat{u}.s')$ for each $u\in U_{<_{1}}$. So we have $\sup_{u\in D_{0}}(a, u.s)\in Q$.

Case 2, $A_{<_{2}}$ or $A_{<_{2};<_{1}}$ is cofinal in $\{(a, u.s): u\in D_{0}\}$. Clearly, $\sup_{u\in D_{0}}(a, u.s)\in Q$ in this case.

Case 3, $A_{<_{3}}$ is cofinal in $\{(a, u.s): u\in D_{0}\}$. For any $(a, u.s)\in A_{<_{3}}$, there exist $s_{u}\in B$ and $m_{u}\in A_{s_{u}}$ with $(m_{u}, s_{u})\nleq (m_{1}, \hat{u}.s')$ such that $(a, u.s)<_{3} (m_{u}, s_{u})$. Then $s_{u}\subseteq s$. Since the length of $s$ is finite, there exists $s_{u_{0}}\in B$ such that $\{(a, u.s): u\in U_{<_{3}}, (a, u.s)<_{3} (m_{u}, s_{u_{0}})\}$ is cofinal in $A_{<_{3}}$. Let $s = ts_{u_{0}}$. Then $\min (ut)\leq m_{u}$ and $m_{u}\in A_{s_{u_{0}}}\subseteq A_{u'.s'}$ for all $u$. If $t = \varepsilon$ or $u\leq \min t$ for each $u$, then $u=\min(ut)\leq m_{u}$, which implies that $\sup U_{<_{3}}\leq m_{0},\sup U_{<_{3}}\leq \min t$ and $D_{0}$ is countable. So $\sup_{u\in D_{0}}(a, u.s) <_{3} (m_{0}, s_{u_{0}})\leq (m_{0}, s_{0})$. Thus $\sup_{u\in D_{0}}(a, u.s)\in Q$. If $t\neq \varepsilon$ and $u_{1}> \min t$ for some $u_{1}\in U_{<_{3}}$, then $\min t\leq m_{u_{1}}$, which yields that $\sup_{u\in D_{0}}(a, u.s)<_{3} (m_{u_{1}}, s_{u_{0}})$. This means that $\sup_{u\in D_{0}}(a, u.s)\in Q$ by the fact that $(m_{u_{1}}, s_{u_{0}})\nleq (m_{1}, \hat{u}.s')$.

Case 4, $A_{<_{3};<_{1}}$ is cofinal in $\{(a, u.s): u\in D_{0}\}$. For any $(a, u.s)\in A_{<_{3};<_{1}}$, there exist $s_{u}\in B$ and $m_{u}\in A_{s_{u}}$ with $(m_{u}, s_{u})\nleq (m_{1}, \hat{u}.s')$ such that $(a, u.s)<_{3};<_{1} (m_{u}, s_{u})$, that is, there is a $s_{u}'$ for each $u$ such that $(a, u.s)<_{3} (m_{u}, s_{u}')<_{1} (m_{u}, s_{u})$. We can also find $s^{*}\in \mathbb{W}^{*}$ to make $\{(a, u.s): u\in U_{<_{3};<_{1}}, (a, u.s)<_{3} (m_{u}, s^{*})<_{1} (m_{u}, s_{u})\}$ cofinal in $A_{<_{3};<_{1}}$. Then with a process similar to that in Case $3$, we conclude that $\sup_{u\in D_{0}}(a, u.s)\in Q$.

Therefore, $Q$ is Scott closed. Since $A\subseteq P$ is irreducible in $\mathcal{Z}$, we have $A\subseteq\ \downarrow\!(m_{1}, \hat{u}.s')$ or $A\subseteq Q =\ \downarrow\!(P\setminus \downarrow\!(m_{1}, \hat{u}.s'))$. If $A\subseteq\ \downarrow\!(m_{1}, \hat{u}.s')$, then for any $m\in A_{\hat{u}+1.s'}$, $(m,\hat{u}+1.s')\in A\subseteq\ \downarrow\!(m_{1}, \hat{u}.s')$. This means that $\hat{u}+1.s'\leq \hat{u}.s'$, a contradiction.
Thus $A\subseteq Q =\ \downarrow\!(P\setminus \downarrow\!(m_{1}, \hat{u}.s'))$. It follows that $(m_{1}, \hat{u}.s')\in\ \downarrow\!(P\setminus \downarrow\!(m_{1}, \hat{u}.s'))$, that is, there exists $(m_{2}, s_{2})\in P\setminus \downarrow\!(m_{1}, \hat{u}.s')$ such that $(m_{1}, \hat{u}.s')\leq (m_{2}, s_{2})$.
Now we need to distinguish the following two cases.

Case $1$, $(m_{2}, s_{2})\in (\bigcup_{s\in B}\bigcup_{m\in A_{s}}\downarrow\!(m,s)\setminus \downarrow\!(m_{1}, \hat{u}.s'))$. Then $(m_{2}, s_{2})\leq (m_{3},s_{3})$, where $s_{3}\in B$, $m_{3}\in A_{s_{3}}$ and $(m_3,s_3)\nleq (m_{1}, \hat{u}.s')$. Via $(m_{1}, \hat{u}.s')\leq (m_{2}, s_{2})\leq (m_{3},s_{3})$, we have $\hat{u}.s'\sqsubseteq s_{2}\sqsubseteq s_{3}\sqsubset s_{0} = \sup_{u\in D}u.s'$. So $m_{1} = m_{2} = m_{3}$. The assumption that $(m_{3}, s_{3})\nleqslant (m_{1}, \hat{u}.s')$ indicates that $(m_{1}, \hat{u}.s')<_{1} (m_{3}, s_{3})$. Then $\hat{u}.s'\sqsubset_{1} s_{3}$. This implies $(\hat{u}\!+\!1).s'\sqsubset_{1} s_{3}$ or $(\hat{u}\!+\!1).s' = s_{3}$. Hence, it turns out that $(\hat{u}\!\!+\!\!1).s'\sqsubseteq s_{3}$, which yields that $A_{s_{3}}\subseteq A_{\hat{u}+1.s'}$ and $m_{1}\in A_{\hat{u}+1.s'}$ by reason of the fact that $m_{1} = m_{3}\in A_{s_{3}}$. This is a contradiction to the assumption that $m_{1}\notin A_{\hat{u}+1.s'}$. 

Case $2$, $(m_{2}, s_{2})\in \downarrow\!(m_{0}, s_{0})\setminus \downarrow\!(m_{1}, \hat{u}.s')$. Then $(m_{1}, \hat{u}.s')< (m_{2}, s_{2})< (m_{0}, s_{0})$. This means that $\hat{u}.s'\sqsubset_1 s_0$, which leads that $m_{1} = m_{0}$. It violates the assumption that $m_1\neq m_0$.

In a conclusion, $s_{0}\in f(A)$.
\end{proof}
\end{lemma}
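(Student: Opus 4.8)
The plan is to separate the two assertions: the existence of $\sup f(A)$ follows quickly from sobriety, while the membership $\sup f(A)\in f(A)$ is the substantial part and will be proved by contradiction.

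First I would note that since $A$ is irreducible and $f$ is Scott continuous (Proposition~\ref{sf}), the closure $\overline{f(A)}$ is an irreducible Scott closed subset of $\mathcal{T}$. As $\mathcal{T}$ is sober (Proposition~\ref{sf}(2)), there is a unique $s_{0}\in\mathbb{W}^{*}$ with $\overline{f(A)}=\da s_{0}$; hence $\sup f(A)=s_{0}$ exists. It then remains to show $s_{0}\in f(A)$. To set this up, I would verify that $f(A)$ is a lower set (immediate from Remark~\ref{lower}(2)) and is closed under suprema of uncountable non-trivial chains, the latter by invoking Lemma~\ref{unum} to upgrade ``uncountably many points of a level lie in $A$'' to ``the whole level lies in $A$'', together with the Scott closedness of $A$. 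The preceding lemma characterizing closures in $\mathcal{T}$ then gives $\overline{f(A)}=f(A)'$. Assuming $s_{0}\notin f(A)$ for contradiction, we obtain $s_{0}\in f(A)'\setminus f(A)$, so $s_{0}=\sup C=\sup D.s'$ for a countable chain indexed by some $D\subseteq\mathbb{W}$; moreover at least one fibre $A_{u'.s'}$ must be countable, for otherwise Lemma~\ref{unum} would again force $s_{0}\in f(A)$.

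The core of the argument is then a two-stage splitting of $A$ driven by its irreducibility. Writing $m_{0}=\sup A_{u'.s'}$ and $B=\{s\in f(A):s\sqsupseteq u'.s'\}$, I would introduce $P=\bigcup_{s\in B}\bigcup_{m\in A_{s}}\da(m,s)\;\cup\;\da(m_{0},s_{0})$, prove $P$ Scott closed by the five-case analysis on the connecting relation of a cofinal chain, and deduce $A\subseteq P$ from irreducibility (the alternative $A\subseteq\overline{A\setminus P}$ produces, via continuity of $f$ and the closure characterization, a point that must lie both inside and outside $P$). A descending-fibre argument, exploiting $s_{0}\notin f(A)$ to prevent the fibres $A_{u.s'}$ from stabilising, then locates $m_{1}\in A_{\hat{u}.s'}\setminus A_{\hat{u}+1.s'}$ with $m_{1}\neq m_{0}$. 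Splitting $P=\da(m_{1},\hat{u}.s')\cup Q$ with $Q=\da(P\setminus\da(m_{1},\hat{u}.s'))$ and proving $Q$ Scott closed as well, irreducibility of $A\subseteq P$ forces $A\subseteq\da(m_{1},\hat{u}.s')$ or $A\subseteq Q$; the first is impossible since $(\hat{u}+1).s'\not\sqsubseteq\hat{u}.s'$, while the second places $(m_{1},\hat{u}.s')$ below some $(m_{2},s_{2})\in P\setminus\da(m_{1},\hat{u}.s')$, and a final split according to which summand of $P$ contains $(m_{2},s_{2})$ contradicts either $m_{1}\notin A_{\hat{u}+1.s'}$ or $m_{1}\neq m_{0}$.

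I expect the main obstacle to be the verification that the auxiliary sets $P$ and $Q$ are Scott closed. Each such check reduces, via Lemma~\ref{chain}, to showing that the supremum of an arbitrary non-trivial chain remains inside, and since a chain may be connected through any of the five relation types $<_{1},<_{2},<_{3},<_{2};<_{1},<_{3};<_{1}$, this entails a bookkeeping-heavy case analysis in the spirit of Proposition~\ref{dcpo}. The $<_{3}$ case is the delicate one, since the clause $\min(t)\leq m'$ interacts with whether the relevant index set is countable and with the bound $m_{0}=\sup A_{u'.s'}$, and getting these interactions right is where the real care is needed.
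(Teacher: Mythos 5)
Your proposal is correct and follows essentially the same route as the paper's own proof: the identical reduction via sobriety of $\mathcal{T}$, the same auxiliary sets $B$, $P=\bigcup_{s\in B}\bigcup_{m\in A_{s}}\da(m,s)\cup\da(m_{0},s_{0})$ and $Q=\da(P\setminus\da(m_{1},\hat{u}.s'))$, the same non-stabilising-fibre argument producing $m_{1}\neq m_{0}$, and the same two applications of irreducibility ending in the same pair of contradictions. You also correctly identify where the real work lies, namely the five-case Scott-closedness verifications for $P$ and $Q$ with the $<_{3}$ case being the delicate one.
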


\begin{theorem}\label{A}
The irreducible closed subsets of $\mathcal{Z}$ are just closures of single elements and the closures of levels.
\begin{proof}
Let $A$ be an irreducible closed subset of $\mathcal{Z}$ and $s_{0} = \sup f(A)$. Then we have $s_{0}\in f(A)$ by Lemma \ref{supin}, which implies that $A_{s_{0}}\neq \emptyset$ and $f(A)\subseteq\ \downarrow\!s_{0}$, it is equal to saying that $A\subseteq\ \downarrow\!L_{s_{0}}$. Next we want to show that $A = \downarrow\!L_{s_{0}}$ or $A = \downarrow\!(m_{0}, s_{0})$ for some $m_{0}\in \mathbb{W}$. If $A_{s_{0}}$ is uncountable, then $L_{s_{0}}\subseteq A$ and so $A = \downarrow\!L_{s_{0}}$ holds. Else, the set $A_{s_{0}}$ is countable, pick $m_{0}\in \mathbb{W}$ such that $(m_{0}, s_{0})\in A$. Let $M = \downarrow\!(A\setminus \downarrow\!(m_{0}, s_{0}))$. Then $A\subseteq\ \downarrow\!(m_{0}, s_{0})\bigcup M$. Now we construct a Scott closed set $W$ containing $M$. For $M$, we know $\max M$ exists and $\max M = \max (A\setminus \downarrow\!(m_{0}, s_{0}))$. Let
\begin{center}
$E = \{s\in f(\max M): M_{s}\ \mathrm{is}\ \mathrm{countable}\}$, \quad $B = \{s\in E: \sup M_{s}\notin M_{s}\}$,
\end{center}
where $M_{s} = \{m\in \mathbb{W}: (m, s)\in M\}$.
And for any $s\in B$, we set
\begin{center}
$T_{s} = \{(\vee M_{s}).t.p: p = s\ \mathrm{or}\ p<_{1} s,\ \vee M_{s}\leq\min t\ \mathrm{or}\ t = \varepsilon\}$.
\end{center}
Now let $W = \bigcup_{s\in f(\max M)}W_{s}$, where $\{W_{s}: s\in f(\max M)\}$ are defined as follows:
\begin{itemize}
\item $W_{s} = \downarrow\!L_{s}$ if $s\in f(\max M)\setminus E$
\item $W_{s} = \bigcup_{a\in M_{s}}\downarrow\!(a, s)$ if $s\in E\setminus B$
\item $W_{s} = \bigcup_{a\in M_{s}}\downarrow\!(a, s)\bigcup\,(\bigcup\{\downarrow\!L_{\vee C}: C\subseteq \bigcup_{s\in B}T_{s}\ \mathrm{is\ a\ chain}\})$ if $s\in B$.
\end{itemize}

$\mathbf{Claim ~1}$: If $C\subseteq \bigcup_{s\in B}T_{s}$ is a non-trivial chain, then $C$ is countable.

Assume that there exists an uncountable non-trivial chain $C = \{u.s: u\in D\}$ contained in $\bigcup_{s\in B}T_{s}$. Then for any $u.s\in C$, there is a $s_{u}\in B$ such that $u.s\in T_{s_{u}}$, that is, $u.s = (\vee M_{s_{u}}).t_{u}.p_{u}$ with $p_{u} = s_{u}\ \mathrm{or}\ p_{u}\sqsubset_{1} s_{u},\ \vee M_{s_{u}}\leq\min t_{u}\ \mathrm{or}\ t_{u} = \varepsilon$. 
It is easy to verify that $s_{u'}\neq s_{u''}$ and $M_{s_{u'}}\neq M_{s_{u''}}$if $u'\neq u''$ for any $u',u''\in D$.
In addition, note that $t_{u}.p_{u} = s$ for each $u\in D$, which means that $s\sqsubseteq p_u$. The fact that $p_u\sqsubseteq s_u$ for any $u\in D$ suggests that $\{s_{u}: u\in D\}\subseteq\ \uparrow\!s$. This reveals that $\{s_u\mid u\in D\}$ is a chain by Proposition \ref{sf}, which ensures that we can find a cofinal subset $\{s_{u}: u\in D'\}$ of $\{s_{u}: u\in D\}$ in which there is only the relation $\sqsubset_{1}$ among all elements. 
For any $u_1,u_2\in D'$ with $u_1< u_2$, it turns out that $\vee M_{s_{u_{1}}}< \vee M_{s_{u_{2}}}$. Note that for any $s_{1}, s_{2}\in \mathbb{W}^*$ with $s_{1}\sqsubseteq  s_{2}$, we can conclude that $M_{s_2}\subseteq M_{s_1}$ because $(m,s_1)\leq (m,s_2)$ for any $m\in \mathbb{W}$ and $M$ is a lower set. Then the fact that $\{s_{u}: u\in D'\}$ is a chain infers that $M_{s_{u'}}\subsetneqq M_{s_{u''}}$ or $M_{s_{u''}}\subsetneqq M_{s_{u'}}$ for any $u',u''\in D'$ with $u'\neq u''$. Due to the assumption that $\vee M_{s_{u_{1}}}< \vee M_{s_{u_{2}}}$, we can deduce that $M_{s_{u_{1}}}\subsetneqq M_{s_{u_{2}}}$, so $s_{u_{1}}\!\sqsupseteq\! s_{u_{2}}$. Now for a fixed $u_{0}\in D'$, we have $\{s_{u}: u\in D', u> u_{0}\}\subseteq \{s\in \mathbb{W}^{*}: s\sqsubset_{1} s_{u_{0}}\}$. Because $D$ is uncountable and $u = \vee M_{s_{u}}$ for any $u\in D$, the set $\{s_{u}: u\in D\}$ is uncountable. It follows that $\{s_{u}: u\in D'\}$ is uncountable. Furthermore, $\{s_{u}: u\in D', u> u_{0}\}$ is uncountable, which contradicts the fact that $\{s\in \mathbb{W}^{*}: s\sqsubset_{1} s_{u_{0}}\}$ is countable.

$\mathbf{Claim~2}$: $W = \bigcup_{s\in f(\max M)}W_{s}$ is Scott closed.

One sees clearly that $W$ is a lower set. So it remains to confirm that $W$ is closed for the $\mathrm{sups}$ of any non-trivial chain.

Let $\{(m, u.s): u\in D_{0}\}$ be a non-trivial chain in $W$, where $D_{0}\subseteq \mathbb{W}$. Now we need to distinguish the following cases.

Case 1, $\{(m, u.s): u\in D_{0}\}\bigcap\ (\bigcup_{s\in E}\bigcup_{a\in M_{s}}\downarrow\!(a, s))$ is cofinal in $\{(m, u.s): u\in D_{0}\}$. For any $u\in D_{0}$, there are $s_{u}\in E$ and $a_{u}\in M_{s_{u}}$ such that $(m, u.s)< (a_{u}, s_{u})$, without loss of generality, assume that $(a_{u}, s_{u})\in \max M$. We set
\begin{center}
$A_{<_{r}} = \{(m, u.s): u\in D_{0}, \exists (a_{u}, s_{u})\in \max M\ \mathrm{with}\ s_{u}\in E\ s.t.\ (m, u.s)<_{r} (a_{u}, s_{u})\}$
\end{center}
and
\begin{center}
$U_{<_{r}} = \{u\in D_{0}: (m, u.s)\in A_{<_{r}}\}$,
\end{center}
where $<_{r}\in \{<_{1},\ <_{2},\ <_{3},\ <_{2};<_{1},\ <_{3};<_{1}\}$.

Case 1.1, $A_{<_{1}}$ is cofinal in $\{(m, u.s): u\in D_{0}\}$. For any $(m, u.s)\in A_{<_{1}}$, there exists $(a_{u}, s_{u})\in \max M$ with $s_{u}\in E$ such that $(m, u.s)<_{1} (a_{u}, s_{u})$. Then $a_{u} = m$ and there is a unique $(m, s_{u_{0}})$ such that $(m, u.s)<_{1} (m, s_{u_{0}})$ since $(a_{u}, s_{u})\in \max M$ for each $u\in U_{<_{1}}$. Thus $\sup_{u\in D_{0}}(m, u.s)\leq (m, s_{u_{0}})\in W$.

Case 1.2, $A_{<_{2}}$ or $A_{<_{2};<_{1}}$ is cofinal in $\{(m, u.s): u\in D_{0}\}$. It can be easily verified that $\sup_{u\in D_{0}}(a, u.s)\in W$ in both cases, so we omit the proof here.

Case 1.3, $A_{<_{3}}$ is cofinal in $\{(m, u.s): u\in D_{0}\}$.  For any $(m, u.s)\in A_{<_{3}}$, there exists $(a_{u}, s_{u})\in \max M$ with $s_{u}\in E$ such that $(m, u.s)<_{3} (a_{u}, s_{u})$. We can find $s_{u_{0}}$ such that $\{(m, u.s): u\in U_{<_{3}}, (m, u.s)<_{3} (a_{u}, s_{u_{0}})\}$ is cofinal in $A_{<_{3}}$ by reason that $s_{u}\subseteq s$ for each $u\in U_{<_{3}}$ and the length of $s$ is finite. If $s_{u_{0}}\in E\setminus B$, that is, $\vee M_{s_{u_{0}}}\!\in M_{s_{u_{0}}}$, then $a_u\leq \vee M_{s_{u_{0}}}$. It follows that $(m, u.s)<_{3} (\vee M_{s_{u_{0}}}, s_{u_{0}})$ for each $u$, which implies that $\sup_{u\in D_{0}}(m, u.s)\leq (\vee M_{s_{u_{0}}}, s_{u_{0}})\in M$. Hence, $\sup_{u\in D_{0}}(m, u.s)\in W$. Suppose that $s_{u_{0}}\in B$, that is, $\vee M_{s_{u_{0}}}\!\!\notin M_{s_{u_{0}}}$. Under this condition, we assume $s = ts_{u_{0}}$,
where $t = \varepsilon$ or $t\neq \varepsilon$. In the case that $t\neq \varepsilon$ and there exists a $u_{1}> \min t$, then $\min t\leq a_{u_{1}}$, and so $\sup_{u\in D_{0}}(m, u.s)<_{3} (a_{u_{1}}, s_{u_{0}})$. In the case that $t = \varepsilon$ or $u\leq \min t$ for all $u$, we can get that $u\leq a_{u}$ for each $u$. Since $a_{u}\in M_{s_{u_{0}}}$ and $M_{s_{u_{0}}}$ is countable, $D_{0}$ is countable under each case and $\sup U_{<_{3}}\leq \vee M_{s_{u_{0}}}$. Thus we have $\sup_{u\in D_{0}}(m, u.s) = (m, \vee U_{<_{3}}.t.s_{u_{0}})\leq (m, \vee M_{s_{u_{0}}}.t.s_{u_{0}})$. Note that $\vee M_{s_{u_{0}}}.t.s_{u_{0}}\in T_{s_{u_{0}}}$. Hence, $\sup_{u\in D_{0}}(m, u.s)\in W$.

Case 1.4, $A_{<_{3};<_{1}}$ is cofinal in $\{(m, u.s): u\in D_{0}\}$. The argument of this case is similar to the above case.

Case 2, $\{(m, u.s): u\in D_{0}\}\bigcap\ (\bigcup_{s\in f(\max M)\setminus E}\downarrow\!L_{s})$ is cofinal in $\{(m, u.s): u\in D_{0}\}$. For any $u\in D_{0}$, there are $s_{u}\in f(\max M)\setminus E, a_{u}\in \mathbb{W}$ such that $(m, u.s)< (a_{u}, s_{u})$. We set
\begin{center}
$A_{<_{r}} = \{(m, u.s): u\in D_{0}, \exists s_{u}\in f(\max M)\setminus E, a_{u}\in \mathbb{W}\ s.t.\ (m, u.s)<_{r} (a_{u}, s_{u})\}$
\end{center}
and
\begin{center}
$U_{<_{r}} = \{u\in D_{0}: (m, u.s)\in A_{<_{r}}\}$,
\end{center}
where $<_{r}\in \{<_{1},\ <_{2},\ <_{3},\ <_{2};<_{1},\ <_{3};<_{1}\}$.

Case 2.1, $A_{<_{1}}$ is cofinal in $\{(m, u.s): u\in D_{0}\}$. For any $(m, u.s)\in A_{<_{1}}$, there exist $s_{u}\in f(\max M)\setminus E $ and $a_{u}\in \mathbb{W}$ such that $(m, u.s)<_{1} (a_{u}, s_{u})$. Then $a_{u} = m, u.s\sqsubset_{1} s_{u}$. It turns out that $s_{u'}=s_{u''}$ or $s_{u'}\sqsubset_{1}s_{u''}$. As $s_{u}\in f(\max M)\setminus E$, that is, $M_{s_{u}}$ is uncountable for any $u\in D_{0}$. Next, we claim that $M_{s_{u}} = \mathbb{W}\setminus \{m_{0}\}$. Assume that there is a $m\in M_{s_{u}}$ such that $m = m_{0}$. By $(m, s_{u})\in M = \downarrow\!(A\setminus \downarrow\!(m_{0}, s_{0}))$, we have $(m, s_{u})\leq (m_{1}, s_{1})$ for some $(m_{1}, s_{1})\in A\setminus \downarrow\!(m_{0}, s_{0})$. This indicates that $m_{1}\neq m_{0}$, or else, $(m_{1}, s_{1})\leq (m_{0}, s_{0})$ since $s_{1}\in f(A)$ and $s_{0} = \sup f(A)$, a contradiction. Thus $(m, s_{u})<_{3} (m_{1}, s_{1})$ or $(m, s_{u})<_{3};<_{1} (m_{1}, s_{1})$. The assumption that $s_{u}\in f(\max M)$ guarantees the existence of $n_{u}\in \mathbb{W}$ such that $(n_{u}, s_{u})\in \max M$. It follows that $(n_{u}, s_{u})<_{3} (m_{1}, s_{1})$ or $(n_{u}, s_{u})<_{3};<_{1} (m_{1}, s_{1})$, which will result in a contradiction since there would be a maximal element of $M$ being strictly less than an element of $M$. Therefore, $M_{s_{u}}\subseteq \mathbb{W}\setminus \{m_{0}\}$. For the converse, By reason that $A_{s_{u}}$ contains  a uncountable set $M_{s_{u}}$, we know that $L_{s_{u}}\subseteq A$. 
Let $m\in  \mathbb{W}\setminus \{m_{0}\}$. If $(m, s_{u})\leq (m_{0}, s_{0})$, then $(m, s_{u})<_{3} (m_{0}, s_{0})$ or $(m, s_{u})<_{3};<_{1} (m_{0}, s_{0})$ since $m \neq m_{0}$. By the assumption that $s_{u}\in f(\max M)$ again, we can find $c_{u}\in \mathbb{W}$ satisfying $(c_{u}, s_{u})\in \max M$. Then $(c_{u}, s_{u})<_{3} (m_{0}, s_{0})$ or $(c_{u}, s_{u})<_{3};<_{1} (m_{0}, s_{0})$, but this will contradict to the fact that $(c_{u}, s_{u})\in \max M\subseteq A\setminus \downarrow\!(m_{0}, s_{0})$. Hence, we can obtain $\mathbb{W}\setminus \{m_{0}\} = M_{s_{u}}$. Now suppose that there are $u_{1}, u_{2}\in D_{0}$ such that $s_{u_{1}}\sqsubset_{1}s_{u_{2}}$. 
From $s_{u_{1}}\in f(\max M)$, we can identify $(n_{1}, s_{u_{1}})\in \max M$ for some $n_{1}\in \mathbb{W}\setminus \{m_{0}\}$. Furthermore, $n_{1}\in M_{s_{u_{2}}}$ as $M_{s_{u}} = \mathbb{W}\setminus \{m_{0}\}$ for any $u\in D_{0}$, this is to say $(n_{1}, s_{u_{2}})\in M$. Note that we have assumed that $s_{u_{1}}\sqsubset_{1}s_{u_{2}}$, we have $(n_{1}, s_{u_{1}})<_{1} (n_{1}, s_{u_{2}})$, a contradiction. So for any $u\in D_{0}$, there exists a unique $s^{*}\in f(\max M)\setminus E$ such that $(m, u.s)<_{1} (m, s^{*})$, which leads to the result that $\sup_{u\in D_{0}}(m, u.s)\leq (m, s^{*})$, and we can conclude that $\sup_{u\in D_{0}}(m, u.s)\in W$.

Case 2.2, $A_{<_{2}}$ or $A_{<_{2};<_{1}}$ is cofinal in $\{(m, u.s): u\in D_{0}\}$. One sees immediately  that $\sup_{u\in D_{0}}(m, u.s)\in W$ in the case.

Case 2.3, $A_{<_{3}}$ is cofinal in $\{(m, u.s): u\in D_{0}\}$. For any $(m, u.s)\in A_{<_{3}}$, there exist $s_{u}\in f(\max M)\setminus E$ and $\ a_{u}\in \mathbb{W}$ such that $(m, u.s)<_{3} (a_{u}, s_{u})$. Then we have $s_{u}\subseteq s$ for any $u\in U_{<_{3}}$, which implies $s\sqsubseteq s_{u}$. Thus for any given $u'\in U_{<_{3}}$, $\sup_{u\in D_{0}}(m, u.s)=\sup_{u\in U_{<_{3}}}(m, u.s)\leq (m, s_{u'})\in L_{s_{u'}}\subseteq W$.

Case 2.4, $A_{<_{3};<_{1}}$ is cofinal in $\{(m, u.s): u\in D_{0}\}$. The proof is similar to Case $2.3$.

Case 3, $\{(m, u.s): u\in D_{0}\}\bigcap\ (\bigcup \{\downarrow\!L_{\vee C}: C\subseteq \bigcup_{s\in B}T_{s}\ \mathrm{is}\ \mathrm{a}\ \mathrm{chain}\})$ is cofinal in $\{(m, u.s): u\in D_{0}\}$. For any $u\in D_{0}$, there exist $C_{u}\subseteq \bigcup_{s\in B}T_{s}\ \mathrm{and}\ m_{u}\in \mathbb{W}$ such that $(m, u.s)< (m_{u}, \vee C_{u})$. Set
\begin{center}
$A_{<_{r}} = \{(m, u.s): u\in D_{0}, \exists\ C_{u}\subseteq \bigcup_{s\in B}T_{s}\ \mathrm{and}\ m_{u}\in \mathbb{W}\ s.t.\ (m, u.s)<_{r} (m_{u}, \vee C_{u})\}$
\end{center}
and
\begin{center}
$U_{<_{r}} = \{u\in D_{0}: (m, u.s)\in A_{<_{r}}\}$,
\end{center}
where $<_{r}\in \{<_{1},\ <_{2},\ <_{3},\ <_{2};<_{1},\ <_{3};<_{1}\}$.

Case 3.1, $A_{<_{1}}$ is cofinal in $\{(m, u.s): u\in D_{0}\}$. For any $(m, u.s)\in A_{<_{1}}$, there are $C_{u}\subseteq \bigcup_{s\in B}T_{s}\ \mathrm{and}\ m_{u}\in \mathbb{W}$ such that $(m, u.s)<_{1} (m_{u}, \vee C_{u})$. Then $m_{u} = m$ and $u.s<_{1} \vee C_{u}$. We know each $C_{u}$ is countable by Claim 1, so there exists a chain $E_u$ of the form $\{a.s: a\in D_{u}\}$ is cofinal in $C_u$. It follows that $\bigcup_{u\in U_{<_{1}}}E_u$ is a chain in $\bigcup_{s\in B}T_{s}$. Again applying Claim 1, we know that $\bigcup_{u\in U_{<_{1}}}E_u$ is a countable chain. So there exists a chain $C'$ of the form $\{a.s: a\in D'\}$ contained in $\bigcup_{u\in U_{<_{1}}}C_{u}$ as a cofinal subset, and $C'\subseteq \bigcup_{s\in B} T_{s}$. Hence, we can get that $\sup_{u\in D'}(m, u.s)\leq (m, \vee C')$, and hence $\sup_{u\in D_{0}}(m, u.s)\in W$.

The proof of the residual cases are similar to that of the case 2.2, case 2.3, case 2.4, respectively.
Therefore, $W$ is Scott closed. Because $A\subseteq \downarrow\!(m_{0}, s_{0})\bigcup M$ and $M\subseteq W$, $A\subseteq\ \downarrow\!(m_{0}, s_{0})\bigcup W$. So our proof will be complete if we can illustrate that $A$ cannot be contained in $W$. Suppose not, $A\subseteq W$, then $(m_{0}, s_{0})\in W$. 

\begin{itemize}
\item  If $(m_{0}, s_{0})\in \bigcup_{s\in E}\bigcup_{a\in M_{s}}\downarrow\!(a, s)$, then there is $(a_{1}, s_{1})\in \max M$ with $s_{1}\in E$ such that $(m_{0}, s_{0})\leq (a_{1}, s_{1})$, so $s_{0}\sqsubseteq s_{1}$. Since $s_{1}\in f(\max M)\subseteq f(A)$, $s_{1}\sqsubseteq s_{0}$, that is, $s_{1} = s_{0}$. Thus $m_{0} = a_{1}$ and so $(m_{0}, s_{0}) = (a_{1}, s_{1})$, but this contradicts to the fact that $(a_{1}, s_{1})\in \max M\subseteq A\setminus \downarrow\!(m_{0}, s_{0})$. 
\item  If $(m_{0}, s_{0})\in \bigcup_{s\in f(\max M)\setminus E}\downarrow\!L_{s}$, then there are $m'\in \mathbb{W},\ s'\in f(\max M)\setminus E$ such that $(m_{0}, s_{0})\leq (m', s')$. This yields that $s_0\leq s'$. As $M_{s'}$ is uncountable, so is $A_{s'}$. This means that $s'\in f(A)$, which induces $s'\sqsubseteq s_{0}$. As a result, $s' = s_{0}$. It follows that $s_{0}\in f(\max M)\setminus E$, so we know that $M_{s_{0}}$ is uncountable, which indicates that $A_{s_{0}}$ is uncountable, but this violates the assumption that $A_{s_{0}}$ is countable. 
\item If $(m_{0}, s_{0})\in \bigcup\{\downarrow\!L_{\vee C}: C\subseteq \bigcup_{s\in B}T_{s}\ \mathrm{is\ a\ chain}\}$, then there are $C'\subseteq \bigcup_{s\in B}T_{s}$ being a countable chain and $m'\in \mathbb{W}$ with $(m_{0}, s_{0})\leq (m', \vee C')$. For any $c\in C'$, there is a $s_{c}\in B$ such that $c = \vee M_{s_{c}}.t_{c}.p_{c}$, where $\vee M_{s_{c}}\leq \min t_{c}$, or $t=\varepsilon$ and $\ p_{c} = s_{c}$ or $p_{c} \sqsubset_{1} s_{c}$. Then there must exist $C^{*}$ being a cofinal subset of $C'$, where $c = \vee M_{s_{c}}.t_{c}.p_{c} = \vee M_{s_{c}}.s^{*}$ for any $c\in C^{*}$. Because for any $c\in C^{*}$, we have $s_{c}\in B\subseteq f(\max M)\subseteq f(A)$, $s_{c}\sqsubseteq s_{0}$. This implies that $c\sqsubseteq s_c\sqsubseteq s_{0}$ for each $c\in C^{*}$, thus $\vee C^{*}\sqsubseteq s_{0}$. As $(m_{0}, s_{0})\leq (m', \vee C')$, $s_{0}\sqsubseteq \vee C' = \vee C^{*}$. We can conclude that $s_{0} = \vee C^{*} = \bigvee_{c\in C^{*}}\vee M_{s_{c}}.s^{*}$. This means that $|p_{c}|\leq |s^{*}|< |s_{0}|$, where  $|s|$ denotes the string length of $s$ for any $s\in \mathbb{W}^*$. But from $p_{c}\sqsubseteq s_{c}\sqsubseteq s_{0}$, we know $|p_{c}|\geq |s_{0}|$, a contradiction. Hence, $A\not\subseteq W$, and so $A\subseteq\ \downarrow\!(m_{0}, s_{0})$.
\end{itemize}

In a conclusion, for every irreducible closed subset $A$ of $\mathcal{Z}$,  we have proved that either $A = {\downarrow}(m_{0}, s_{0})$ or $A = \downarrow\!L_{s_{0}}$.
\end{proof}
\end{theorem}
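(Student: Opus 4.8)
The plan is as follows. One inclusion is already in hand: closures of single points are always irreducible, and the preceding proposition shows each $\downarrow L_s$ is Scott closed and irreducible, so $\{\downarrow(m,s) : (m,s) \in \mathcal{Z}\} \cup \{\downarrow L_s : s \in \mathbb{W}^*\} \subseteq IRR(\mathcal{Z})$. It remains to prove the reverse inclusion. Fix an arbitrary $A \in IRR(\mathcal{Z})$ and exploit its image under the Scott-continuous projection $f$ of Proposition~\ref{sf}. By Lemma~\ref{supin}, $s_0 := \sup f(A)$ exists and lies in $f(A)$; since $\overline{f(A)} = \downarrow s_0$ by sobriety of $\mathcal{T}$, every $(m,s) \in A$ satisfies $s \sqsubseteq s_0$, whence $(m,s) \le (m,s_0) \in L_{s_0}$ by Remark~\ref{lower}, so that $A \subseteq \downarrow L_{s_0}$. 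Thus the shape of $A$ is governed entirely by the fiber $A_{s_0} = \{m : (m,s_0) \in A\}$, which is nonempty because $s_0 \in f(A)$. If $A_{s_0}$ is uncountable, Lemma~\ref{unum} gives $L_{s_0} \subseteq A$ and hence $A = \downarrow L_{s_0}$, disposing of the first alternative.

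The substantive case is $A_{s_0}$ countable, in which I would aim to show $A = \downarrow(m_0, s_0)$ for any chosen $m_0$ with $(m_0, s_0) \in A$. The device is to peel off the principal ideal: set $M = \downarrow(A \setminus \downarrow(m_0, s_0))$, so that $A \subseteq \downarrow(m_0, s_0) \cup M$. If one can manufacture a Scott closed set $W \supseteq M$ with $(m_0, s_0) \notin W$, then $A \subseteq \downarrow(m_0, s_0) \cup W$, and irreducibility of $A$ forces $A \subseteq \downarrow(m_0, s_0)$ or $A \subseteq W$; the latter is impossible since $(m_0, s_0) \in A \setminus W$, so $A = \downarrow(m_0, s_0)$ as desired. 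The reason $W$ cannot simply be taken to be $M$ is that $M$ need not be Scott closed and, worse, its closure could absorb $(m_0, s_0)$ --- precisely what we must exclude.

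To build $W$ I would proceed level by level over the strings $s \in f(\max M)$. On a string whose fiber $M_s$ is uncountable, adjoin all of $\downarrow L_s$ (already forced by Lemma~\ref{unum}); on a string with $M_s$ countable and containing its own supremum, adjoin $\bigcup_{a \in M_s}\downarrow(a,s)$; and on a string with $M_s$ countable but whose column-supremum escapes $M_s$, one must additionally throw in the limits of chains climbing through the missing supremum, which I would collect via auxiliary sets $T_s$ designed so that these new limits can never reach the level $s_0$ at value $m_0$.

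The main obstacle, and the bulk of the work, is to verify that $W$ is Scott closed while simultaneously keeping $(m_0, s_0)$ out of it. By Lemma~\ref{chain}, Scott-closedness reduces to closure under sups of non-trivial chains $(m, u.s)_{u}$, and since each link to a prospective upper bound is one of the five relation types of Proposition~\ref{order}(6), I expect the verification to fork into the same five-case analysis that drives Propositions~\ref{dcpo} and~\ref{supin}, the genuinely new ingredient being the interaction of the escaping-supremum levels with the relations $<_3$ and $<_3;<_1$. A key combinatorial lemma I anticipate needing is that every chain drawn from the $T_s$'s is countable, which is what prevents an uncountable chain from sneaking past $s_0$. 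The final step, $(m_0, s_0) \notin W$, I would settle by a case split over which of the three kinds of pieces of $W$ could contain it, each branch contradicting either $s_0 = \sup f(A)$ or the countability of $A_{s_0}$.
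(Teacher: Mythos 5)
Your proposal follows essentially the same route as the paper's own proof: the reduction via the projection $f$ with Lemmas \ref{supin} and \ref{unum}, the set $M = \downarrow\!(A\setminus \downarrow\!(m_{0}, s_{0}))$, the level-by-level construction of the Scott closed set $W$ split according to whether $M_{s}$ is uncountable, countable with its supremum attained, or countable with escaping supremum (handled by the auxiliary sets $T_{s}$), the countability claim for chains in $\bigcup_{s\in B}T_{s}$, the five-relation case analysis for Scott closedness, and the final three-way exclusion of $(m_{0}, s_{0})$ from $W$. The plan is correct as far as it goes, and the verifications you defer are carried out in the paper exactly along the lines you anticipate.
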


\subsection{$\mathcal{Z}$ is well-filtered}
In this subsection, we confirm that $\mathcal{Z}$ is well-filtered. First, we consider the compact saturated subsets of $\mathcal{Z}$. 

\begin{lemma}\label{min1}
If $K$ is a compact saturated subset of $\mathcal{T}$, then $\min K$ is finite, that is, $K = \uparrow\!F$ for some finite subset $F\subseteq\! \mathbb{W}^{*}$.
\begin{proof}
Assume for the sake of a contradiction that $\min K$ is infinite. Then for any finite subset $F$ contained in $\min K$, we set $B_{F} =\ \downarrow\!(\min K\setminus F)$. We claim that $B_{F}$ is Scott closed. To this end, let $\{u.s: u\in D\}$ be a non-trivial chain contained in $B_{F}$. For any $u\in D$, there is a $s_{u}\in \min K\setminus F$ such that $u.s\sqsubset s_{u}$. Assume that $u_{1}< u_{2}$ for some $u_{1}, u_{2}\in D$, then $\{s_{u_{1}}, s_{u_{2}}\}\subseteq\ \uparrow\!\!u_{1}.s$. So $s_{u_{1}}\sqsubseteq s_{u_{2}}$ or $s_{u_{2}}\sqsubseteq s_{u_{1}}$ from Proposition \ref{sf}. Since $\{s_{u_{1}}, s_{u_{2}}\}\subseteq\min K$, $s_{u_{1}} = s_{u_{2}}$. Thus for any $u\in D$, there exists a unique $s_{0}\in \min K\setminus F$ such that $u.s\sqsubset s_{0}$. It follows that $\sup_{u\in D}u.s\sqsubseteq s_{0}$, which yields that $\sup_{u\in D}u.s\in B_{F}$. The set $B_F$ is obviously a lower set, and therefore, $B_{F}$ is Scott closed. Note that $\{B_{F}: F\subseteq_{f} \min K\}$ is filtered and $B_{F}\cap \min K\neq \emptyset$, where $F\subseteq_f\min K$ denotes $F$ is a finite subset of $\min K$. Then we arrive at $ \bigcap_{F\subseteq_{f} \min K}B_{F} \bigcap \min K\neq \emptyset$. Pick $s\in \bigcap_{F\subseteq_{f} \min K}B_{F} \bigcap \min K$, which means that $s\in B_{\{s\}} = \da (\min K\setminus \{s\})$, which is absurd.
\end{proof}
\end{lemma}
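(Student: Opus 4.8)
The plan is to argue by contradiction. Suppose $\min K$ is infinite. For each finite subset $F \subseteq \min K$ put $B_F = \downarrow(\min K \setminus F)$; I aim to show that $\{B_F\}$ is a filtered family of Scott closed subsets of $\mathcal{T}$, each meeting $K$, and then to exploit the compactness of $K$ to produce a point forcing two distinct minimal elements of $K$ to be comparable, which is absurd.

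The crux is that every $B_F$ is Scott closed. As $B_F$ is by construction a lower set, it suffices (by the same reduction to chains used for $\mathcal{Z}$ in Lemma~\ref{chain}) to check that $B_F$ contains the supremum of each non-trivial chain contained in it. By the normalization recorded after Proposition~\ref{sf}, such a chain may be written as $\{u.s : u \in D\}$ with $D \subseteq \mathbb{W}$ and $u_0 := \min D$ its least index (which exists as $\mathbb{W}$ is well-ordered). For each $u \in D$ choose $s_u \in \min K \setminus F$ with $u.s \sqsubseteq s_u$; then every $s_u$ lies in $\uparrow(u_0.s)$, which is linearly ordered by Proposition~\ref{sf}(1), so the $s_u$ are pairwise comparable. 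Since $\min K$ is an antichain, comparable members of $\min K \setminus F$ must coincide, whence all $s_u$ equal a single $s_0 \in \min K \setminus F$. Consequently $s_0$ bounds the whole chain and $\sup_{u \in D} u.s \sqsubseteq s_0 \in \min K \setminus F$, so the supremum lies in $B_F$. I expect this to be the main obstacle: the argument depends essentially on the linearity of principal filters in $\mathcal{T}$ to collapse the chosen bounds $s_u$ to one element, and some care is needed to see that this single $s_0$ works uniformly in $u$.

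Granting this, the family $\{B_F\}$ is filtered, since $F \subseteq F'$ gives $B_{F'} \subseteq B_F$, and each $B_F$ meets $K$ because $\min K \setminus F \subseteq B_F \cap K$ is nonempty (as $\min K$ is infinite and $F$ finite). Regarding the $B_F \cap K$ as nonempty closed subsets of the compact space $K$, their filteredness yields the finite intersection property, so $\bigcap_F (B_F \cap K) \neq \emptyset$ by compactness of $K$. Choose $s$ in this intersection. Then $s \in K = \uparrow \min K$, so $m_0 \sqsubseteq s$ for some $m_0 \in \min K$; and $s \in B_{\{m_0\}} = \downarrow(\min K \setminus \{m_0\})$ gives $s \sqsubseteq m'$ for some $m' \in \min K$ with $m' \neq m_0$. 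Hence $m_0 \sqsubseteq s \sqsubseteq m'$ exhibits the strict comparability $m_0 \sqsubset m'$ of two minimal elements, contradicting the minimality of $m'$. Therefore $\min K$ is finite, and by saturation $K = \uparrow \min K = \uparrow F$ with $F = \min K$.
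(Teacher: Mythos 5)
Your proposal is correct and follows essentially the same route as the paper's proof: the same sets $B_F = \downarrow(\min K\setminus F)$, the same use of the linearity of principal filters in $\mathcal{T}$ (Proposition~\ref{sf}) together with the antichain property of $\min K$ to show each $B_F$ is Scott closed, and the same filtered-intersection/compactness argument to force two distinct comparable minimal elements. The only differences are cosmetic: you anchor the comparability argument at $u_0=\min D$ rather than pairwise, and you make the compactness step and the final contradiction (via a point of $K$ rather than of $\min K$) slightly more explicit.
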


\begin{lemma}\label{min2}
If $K$ is a compact saturated subset of $\mathcal{Z}$, then for any $s\in f(\min K)$, $K_{s}$ is countable, where $K_{s} = \{m\in \mathbb{W}: (m, s)\in \min K\}$.
\begin{proof}
Assume that there is a $s_{0}\in f(\min K)$ such that $K_{s_{0}}$ is uncountable. Then pick $M = \{m_{n}: n\in \mathbb{N}\}\subseteq K_{s_{0}}$, where $\mathbb{N}$ denotes the set of all positive integers. We know that $M$ is countable, which ensures the existence of $\vee M$. Now for each $n\in \mathbb{N}$, we set
\begin{center}
$B_{n} = \bigcup_{i\in \mathbb{N}\setminus \{1,2,...,n-1\}}\downarrow\!(m_{i}, s_{0})\ \bigcup\ (\bigcup\{\downarrow\!L_{s}: s\in T\})$,
\end{center}
where $T = \{\vee M.t.a: t = \varepsilon\ \mathrm{or}\vee M\leq\min t, a = s_{0}\ \mathrm{or}\ a\sqsubset_{1} s_{0}\}$.

$\mathbf{Claim}$: For each $n\in \mathbb{N}$, $B_{n}$ is Scott closed.

Let $\{(a, u.s): u\in D\}$ be a non-trivial chain contained in $B_{n}$.

Case 1, $\{(a, u.s): u\in D\}\bigcap\, (\bigcup_{i\in \mathbb{N}\setminus \{1,2,...,n-1\}}\downarrow\!(m_{i}, s_{0}))$ is cofinal in $\{(a, u.s): u\in D\}$. Then for any $u\in D$, there is $m_{u}\in M\setminus \{m_{1},m_{2},m_{3},..., m_{n-1}\}$ such that $(a, u.s)< (m_{u}, s_{0})$. For convenience we set
\begin{center}
$A_{<_{r}} = \{(a, u.s): u\in D, \exists m_{u}\in M\setminus \{m_{1},m_{2},m_{3},..., m_{n-1}\}\ s.t.\ (a, u.s)<_{r} (m_{u}, s_{0})\}$,
\end{center}
where $<_{r}\in \{<_{1},\ <_{2},\ <_{3},\ <_{2};<_{1},\ <_{3};<_{1}\}$.

Case 1.1, $A_{<_{1}}$ is cofinal in $\{(a, u.s): u\in D\}$. For any $(a, u.s)\in A_{<_{1}}$, $(a, u.s)<_{1} (m_{u}, s_{0})$ for some $m_{u}\in M\setminus \{m_{1},m_{2},m_{3},..., m_{n-1}\}$, which implies that $m_{u} = a$, and $(a, u.s)<_{1} (a, s_{0})$. So $\sup_{u\in D}(a, u.s)\leq (a, s_{0})\in B_n$.

Case 1.2, $A_{<_{2}}$ or $A_{<_{2};<_{1}}$ is cofinal in $\{(a, u.s): u\in D\}$. In this case, for any $u\in D$, $(a, u.s)$ is less than a fixed point $(m_{0}, s_{0})$, where $a = m_{0}\in M\setminus \{m_{1},m_{2},m_{3},..., m_{n-1}\}$. Hence $\sup_{u\in D}(a, u.s)\in B_{n}$.

Case 1.3, $A_{<_{3}}$ is cofinal in $\{(a, u.s): u\in D\}$. For any $(a, u.s)\in A_{<_{3}}$, $(a, u.s)<_{3} (m_{u}, s_{0})$ for some $m_{u}\in M\setminus \{m_{1},m_{2},m_{3},..., m_{n-1}\}$. Then $s = ts_{0}$ for some $t\in \mathbb{W}^{*}$. If $t\neq \varepsilon$ and there is a $u_{0}> \min t$, then $\min t\leq m_{u_{0}}$ and we can get that $\sup_{u\in D}(a, u.s)<_{3} (m_{u_{0}}, s_{0})$. If $t = \varepsilon$ or $u\leq\min t$ for all $u$, then $D$ is countable and $u\leq m_{u}$.
So we have $\sup D\leq \vee M$ and $\sup D\leq \min t$ or $t=\varepsilon$.  Now we need to distinguish the following two cases.

Case 1.3.1, $\sup D=\vee M$. Then $\vee M\leq \min t$ or $t=\varepsilon$. This implies that $\vee M.t.s_{0}\in T$. One sees immediately that $\sup_{u\in D}(a, u.s) \leq (a, \vee M.t.s_{0})\in B_{n}$. Hence, $\sup_{u\in D}(a, u.s)\in B_{n}$.

Case 1.3.2, $\sup D<\vee M$. Then we can find $m_{n_{1}}\in M\setminus \{m_{1},m_{2},m_{3},..., m_{n-1}\}$ such that $\sup D\leq m_{n_{1}}$. This means that $\sup_{u\in D}(a, u.s)<_{3}(m_{n_{1}}, s_{0})$, and we also have that $\sup_{u\in D}(a, u.s)\in B_n$.

Case 1.4, $A_{<_{3};<_{1}}$ is cofinal in $\{(a, u.s): u\in D\}$. For any $(a, u.s)\in A_{<_{3};<_{1}}$, $(a, u.s)<_{3};<_{1} (m_{u}, s_{0})$ for some $m_{u}\in M\setminus \{m_{1},m_{2},m_{3},..., m_{n-1}\}$. That is, there is an $s_{u}\in \mathbb{W}^{*}$, such that $(a, u.s)<_{3} (m_{u}, s_{u})<_{1} (m_{u}, s_{0})$. Then there is a fixed $s^{*}\in \mathbb{W}^{*}$ such that $\{(a, u.s): \exists\ m_{u}\in M\setminus \{m_{1},m_{2},m_{3},..., m_{n-1}\}\ s.t.\ (a, u.s)<_{3} (m_{u}, s^{*})<_{1} (m_{u}, s_{0})\}$ being a cofinal subset of $A_{<_{3};<_{1}}$ since each $s_{u}\subseteq s$ and the length of $s$ is finite. Then $s = ts^{*}$ and the residual analysis is similar to that of the above case.

Case 2, $\{(a, u.s): u\in D\}\bigcap\ (\bigcup\{\downarrow\!L_{s}: s\in T\})$ is cofinal in $\{(a, u.s): u\in D\}$. That is, for any $u\in D$, there exists $m_{u}\in \mathbb{W}$ and $s_{u} = \vee M.t_{u}.p_{u}\in T$ such that $(a, u.s)< (m_{u}, s_{u})$, with $t = \varepsilon\ \mathrm{or}\vee M\leq\min t, p_u = s_{0}\ \mathrm{or}\ p_u\sqsubset_{1} s_{0}$. Similar to the above case, we set
\begin{center}
$A'_{<_{r}} = \{(a, u.s): u\in D, \exists m_{u}\in\mathbb{W}, s_{u} = \vee M.t_{u}.p_{u}\in T\ s.t.\ (a, u.s)<_{r} (m_{u}, s_{u})\}$,
\end{center}
where $<_{r}\in \{<_{1},\ <_{2},\ <_{3},\ <_{2};<_{1},\ <_{3};<_{1}\}$.

Case 2.1, $A'_{<_{1}}$ is cofinal in $\{(a, u.s): u\in D\}$. For any $(a, u.s)\in A'_{<_{1}}$, there exist $m_{u}\in \mathbb{W}$ and $s_{u} = \vee M.t_{u}.p_{u}\in T$ such that $(a, u.s)<_{1} (m_{u}, s_{u})$. Then $m_{u} = a,\ u.s\sqsubset_{1} s_{u} = \vee M.t_{u}.p_{u}$. Thus $t_{u}.p_{u} = s$ for each $u$. It follows that $(m_{u}, s_{u})=(a, \vee M.s)$ for any $u\in D$. This means that $\sup_{u\in D}(a, u.s)\leq (a, \vee M.s)$.  Hence, $\sup_{u\in D}(a, u.s)\in B_n$.

Case 2.2, $A'_{<_{2}}$ or $A'_{<_{2};<_{1}}$ is cofinal in $\{(a, u.s): u\in D\}$. For the case, it is easy to verify that $\sup_{u\in D}(a, u.s)\in B_{n}$.

Case 2.3, $A'_{<_{3}}$ is cofinal in $\{(a, u.s): u\in D\}$. For any $(a, u.s)\in A'_{<_{3}}$, there exist $m_{u}\in \mathbb{W}$ and $s_{u} = \vee M.t_{u}.p_{u}\in T$ such that $(a, u.s)<_{3} (m_{u}, s_{u})$. Then we have $s_{u}\subseteq s$, that is $s\sqsubseteq s_{u}$. Thus for any fixed $(a, u_0.s)\in A'_{<_{3}}$, we have $\sup_{u\in D}(a, u.s)\leq (a, s_{u_0})\in B_n$. Therefore, $\sup_{u\in D}(a, u.s)\in B_n$.

Case 2.4, $A'_{<_{3};<_{1}}$ is cofinal in $\{(a, u.s): u\in D\}$. The proof is similar to Case $2.3$.

One sees immediately that $B_n$ is a lower set for any $n\in \mathbb{N}$. Thus $B_{n}$ is Scott closed for each $n\in \mathbb{N}$. 

Since $\{B_{n}: n\in \mathbb{N}\}$ is filtered and $B_{n}\cap \min K\neq \emptyset$, $\bigcap_{n\in \mathbb{N}}B_{n}\bigcap \min K\neq\emptyset$. Choose $(m, s)\in \bigcap_{n\in \mathbb{N}}B_{n}\bigcap \min K$. We claim that $(m, s)\notin \downarrow\!L_{t}$ for any $t\in T$. Suppose not, there are $m'\in \mathbb{W}, s'\in T$ such that $(m, s)\leq (m', s')$. Assume that $s' = \vee M.t'.p'$ with $\vee M\leq \min t' $ or $t=\varepsilon$, and $p' = s_{0}$ or $p'\sqsubset_{1} s_{0}$, then we can obtain that $\ua \vee M\cap K_{s_0} \neq \emptyset$ from the uncountability of $K_{s_0}$. It follows that 
\begin{center}
$(m, s)\leq (m', s') = (m', \vee M.t'.p')<_{3} (m_{1}, p')\leq (m_{1}, s_{0})$
\end{center}
for some $m_{1}\in \ua \vee M\cap K_{s_{0}}$, which implies contradiction as $(m_{1}, s_{0})\in \min K$. Now we suppose that $(m, s)\in B_{n_{0}}$ for some $n_{0}\in \mathbb{N}$, then $(m, s)\in \bigcup_{i\in \mathbb{N}\setminus \{1,2,...,n_{0}-1\}}\downarrow\!(m_{i}, s_{0})$. This means that there is an $i_{0}\geq n_{0}$ such that $(m, s)\leq (m_{i_{0}}, s_{0})$, and hence, $(m, s) = (m_{i_{0}}, s_{0})$ since both of them are minimal elements of $K$. The fact that $(m, s)\in \bigcap_{n\in \mathbb{N}}B_{n}$ indicates that $(m, s)\in B_{i_{0}+1}$, that is, $(m, s)\in \bigcup_{i\in \mathbb{N}\setminus \{1,2,...,i_{0}\}}\downarrow\!(m_{i}, s_{0})$. However, this is impossible. Therefore, the original hypothesis doesn't hold. So $K_{s}$ is countable for any $s\in f(\min K)$.
\end{proof}
\end{lemma}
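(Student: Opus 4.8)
The plan is to argue by contradiction using the same filtered-family mechanism that drives Lemma~\ref{min1}: for a compact saturated $K$, any filtered family of Scott-closed (hence lower) sets each meeting $\min K$ has intersection still meeting $\min K$. So suppose some $K_{s_0}$ is uncountable with $s_0\in f(\min K)$. Since the elements of $\min K$ form an antichain, this means there are uncountably many incomparable minimal points sitting on the single level $L_{s_0}$. I would fix a countably infinite set of distinct ordinals $M=\{m_n:n\in\mathbb{N}\}\subseteq K_{s_0}$; because $\mathbb{W}$ is countably complete, $\vee M\in\mathbb{W}$, and because $\da(\vee M)$ is countable while $K_{s_0}$ is uncountable, the reservoir $\ua(\vee M)\cap K_{s_0}$ is nonempty (indeed uncountable). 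This reservoir is the feature that ultimately produces the contradiction.

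For each $n$ I would set
\[
B_n=\bigcup_{i\ge n}\da(m_i,s_0)\;\cup\;\bigcup\{\da L_t: t\in T\},\qquad
T=\{\vee M.t.a: t=\varepsilon\ \text{or}\ \vee M\le\min t,\ a=s_0\ \text{or}\ a\sqsubset_1 s_0\}.
\]
The $T$-summand is forced rather than optional: the naive union $\bigcup_{i\ge n}\da(m_i,s_0)$ fails to be Scott-closed, since an ascending chain $(a,u.s)_u$ lying below the $(m_i,s_0)$ through the relation $<_3$ can have its supremum escape to a string $\vee M.t.s_0$ with $\vee M\notin M$; the strings in $T$ are exactly these potential limits, and adjoining their levels repairs closedness.

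The main obstacle is proving each $B_n$ is Scott-closed. By Lemma~\ref{chain} together with Proposition~\ref{dcpo} it suffices to take a canonical chain $(a,u.s)_{u\in D}$ inside $B_n$ and check its supremum remains in $B_n$. I would split on which summand is cofinal, and inside $\bigcup_{i\ge n}\da(m_i,s_0)$ split further by which of $<_1,<_2,<_3,(<_2;<_1),(<_3;<_1)$ realizes $(a,u.s)<(m_{i(u)},s_0)$. The delicate branch is $<_3$ with $s=t.s_0$: if some $u_1>\min t$ occurs then $\min t\le m_{u_1}$ pins the supremum under $(m_{u_1},s_0)$, while otherwise $D$ is countable with $\sup D\le\vee M$, and the supremum is captured by some $(m_{n_1},s_0)$ when $\sup D<\vee M$ and by the string $\vee M.t.s_0\in T$ when $\sup D=\vee M$. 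The $T$-summand is easy, since there $s\sqsubseteq s_u$, so the supremum stays inside a single $\da L_{s_u}$. I expect this verification, with its nested case split, to be the bulk of the work.

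With every $B_n$ Scott-closed, $\{B_n\}_{n\in\mathbb{N}}$ is filtered and each meets $\min K$ (as $(m_i,s_0)\in B_n$ for $i\ge n$), so compactness yields $(m,s)\in\bigcap_n B_n\cap\min K$ exactly as in Lemma~\ref{min1}. I would first exclude $(m,s)\in\da L_t$ for $t\in T$: writing $t=\vee M.t'.p'$ with $p'=s_0$ or $p'\sqsubset_1 s_0$ and choosing $m_1\in\ua(\vee M)\cap K_{s_0}$ from the reservoir, one obtains the strict chain $(m,s)\le(m',\vee M.t'.p')<_3(m_1,p')\le(m_1,s_0)$ into $\min K$, contradicting minimality of $(m_1,s_0)$. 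Hence $(m,s)\in\bigcup_{i\ge 1}\da(m_i,s_0)$, so $(m,s)\le(m_{i_0},s_0)$ and, both being minimal, $(m,s)=(m_{i_0},s_0)$. But membership in $B_{i_0+1}$ then forces $(m_{i_0},s_0)\le(m_j,s_0)$ with $j>i_0$, i.e.\ $m_{i_0}=m_j$, contradicting the distinctness of the $m_n$. This contradiction establishes that $K_s$ is countable for every $s\in f(\min K)$.
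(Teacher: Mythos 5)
Your proposal follows essentially the same route as the paper's own proof: the same sets $B_n$ with the same auxiliary string family $T$, the same case analysis (split by cofinal summand and then by the five order relations, with the $<_3$ branch splitting on $\sup D<\vee M$ versus $\sup D=\vee M$) to establish Scott-closedness, the same compactness/filteredness argument producing $(m,s)\in\bigcap_n B_n\cap\min K$, and the same two-step final contradiction via the reservoir $\ua(\vee M)\cap K_{s_0}$ and the distinctness of the $m_n$. The proposal is correct; it even makes explicit a couple of points the paper leaves terse (why the $T$-summand is indispensable, and why $(m_{i_0},s_0)\le(m_j,s_0)$ forces $m_{i_0}=m_j$).
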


\begin{lemma}\label{min3}
If $K$ is a compact saturated subset of $\mathcal{Z}$, then for any $s\in \mathbb{W}^{*}$, $(\bigcup_{u\in \mathbb{W}}L_{u.s})\bigcap \min K$ is countable, where $L_{u.s} = \{(m, u.s): m\in \mathbb{W}\}$.
\begin{proof}
Suppose that there exists $s_{0}\in \mathbb{W}^{*}$ such that $(\bigcup_{u\in \mathbb{W}}L_{u.s_{0}})\bigcap \min K$ is uncountable. Let
\begin{center}
$A = \{u.s_{0}: \exists\ m_{u}\in \mathbb{W}\ s.t.\ (m_{u}, u.s_{0})\in \min K\}$
\end{center}
By Lemma \ref{min2}, we have $L_{u.s_0}\cap \min K$ is countable for any $u\in \mathbb{W}$, which implies that $A$ must be uncountable. Then pick $\{u_{n}.s_{0}: n\in \mathbb{N}\}$ as a countable subset of $A$, besides, for each $n\in \mathbb{N}$, we extract $m_{n}\in \mathbb{W}$ so that $(m_{n}, u_{n}.s_{0})\in \min K$, where $\mathbb{N}$ denotes all positive integers. Then set
\begin{center}
$G = \bigcup_{n\in \mathbb{N}}\downarrow\!(m_{n}, u_{n}.s_{0})$
\end{center}
and
\begin{center}
$E_{n} = G_{u_{n}.s_{0}} = \{a\in \mathbb{W}: (a, u_{n}.s_{0})\in G\}$.
\end{center}
Note that each $E_{n}$ is countable and $E_{n_{1}}\subseteq E_{n_{2}}$ if $n_{1}\geq n_{2}$. Now we construct $B_{n}\ (\forall n\in \mathbb{N})$ as follows:
\begin{center}
$B_{n} = \bigcup_{i\in \mathbb{N}\setminus \{1,2,...,n-1\}}\bigcup_{a\in E_{i}}\downarrow\!(a, u_{i}.s_{0})\bigcup\ (\bigcup \{\downarrow L_{\vee C}: C\subseteq \bigcup_{n\in \mathbb{N}} T_{u_{n}.s_{0}}\ \mathrm{is\ a\ chain}\})$,
\end{center}
where $T_{u_{n}.s_{0}} = \{\vee E_{n}.t.p: t = \varepsilon\ \mathrm{or}\ \vee E_{n}\leq\min t,\ p = u_{n}.s_{0}\ \mathrm{or}\  p\sqsubset_{1} u_{n}.s_{0}\}$.

$\mathbf{Claim~1}$: For any chain $C\subseteq \bigcup_{n\in \mathbb{N}} T_{u_{n}.s_{0}}$, $C$ is countable.

Suppose that there is a chain $C\subseteq \bigcup_{n\in \mathbb{N}} T_{u_{n}.s_{0}}$ being uncountable. Then there must exist an uncountable chain $C_{0} = \{v.s: v\in D_{0}\}\subseteq C$ contained in $T_{u_{n_{0}}.s_{0}}$ for some $n_{0}\in \mathbb{N}$. That is to say, for any $v.s\in C_{0}$, $v.s = \vee E_{n_{0}}.t_{c}.p_{c}$ and so $t_{c}.p_{c}$ is equal to $s$. Then $C_{0} = \{\vee E_{n_{0}}.s\}$ is a single set, which contradicts the uncountability of $C_{0}$.

$\mathbf{Claim~2}$: For any $n\in \mathbb{N}$, $B_{n}$ is Scott closed.

To this end, let $\{(m, b.s): b\in D\}$ be a non-trivial chain contained in $B_{n}$.

Case 1, $\{(m, b.s): b\in D\}\bigcap (\bigcup_{i\in \mathbb{N}\setminus \{1,2,...,n-1\}}\bigcup_{a\in E_{i}}\downarrow\!(a, u_{i}.s_{0}))$ is cofinal in $\{(m, b.s): b\in D\}$. For any $b\in D$, there are $i_{b}\in \mathbb{N}\setminus \{1,2,...,n-1\}, a_{b}\in E_{i_{b}}$ such that $(m, b.s)< (a_{b}, u_{i_{b}}.s_{0})$, where each $(a_{b}, u_{i_{b}}.s_{0})$ can be assumed to be the minimal element of $K$ by the constructions of $G$ and $\{E_{n}: n\in \mathbb{N}\}$. We set
\begin{center}
$A_{<_{r}} = \{(m, b.s): b\in D, \exists\ i_{b}\in \mathbb{N}\setminus \{1,2,...,n-1\},\ a_{b}\in E_{i_{b}}\ s.t.\ (m, b.s)<_{r} (a_{b}, u_{i_{b}}.s_{0})\}$,
\end{center}
where $<_{r}\ \in \{<_{1},\ <_{2},\ <_{3},\ <_{2};<_{1},\ <_{3};<_{1}\}$.

Case 1.1, $A_{<_{1}}$ is cofinal in $\{(m, b.s): b\in D\}$. For any $(m, b.s)\in\! A_{<_{1}}$, there are some $i_{b}\in \mathbb{N}\setminus \{1,2,...,n-1\},\ a_{b}\in E_{i_{b}}$ such that $(m, b.s)<_{1} (a_{b}, u_{i_{b}}.s_{0})$. Then $a_{b} = m$ and there is an $i_{0}\in \mathbb{N}\setminus \{1,2,...,n-1\}$ such that $(m, b.s)<_{1} (m, u_{i_{0}}.s_{0})$ from the minimality in $K$ of $(a_{b}, u_{i_{b}}.s_{0})$ for any $(m, b.s)\in A_{<_{1}}$. Thus $\sup_{b\in D}(m, b.s)\leq (m, u_{i_{0}}.s_{0})$, which means $\sup_{b\in D}(m, b.s)\in B_{n}$.

Case 1.2, $A_{<_{2}}$ or $A_{<_{2};<_{1}}$ is cofinal in $\{(m, b.s): b\in D\}$. It is easy to demonstrate that $\sup_{b\in D}(m, b.s)\in B_{n}$ in the case.

Case 1.3, $A_{<_{3}}$ is cofinal in $\{(m, b.s): b\in D\}$. For any $(m, b.s)\in A_{<_{3}}$, there are some $i_{b}\in \mathbb{N}\setminus \{1,2,...,n-1\},\ a_{b}\in E_{i_{b}}$ such that $(m, b.s)<_{3} (a_{b}, u_{i_{b}}.s_{0})$. Then $u_{i_{b}}.s_{0}\subseteq s$, which yields that $u_{i_{b}}.s_{0}$ is unique at the moment, denoted by $u_{i'}.s_{0}$, that is, for any $(m, b.s)\in A_{<_{3}}$, there exists $a_{b}\in E_{i'}$ such that $(m, b.s)<_{3} (a_{b}, u_{i'}.s_{0})$. Because $(a_{b}, u_{i'}.s_{0})\in\min K$, and we pick only a minimal element of $K$ on each level $L_{u_{n}.s_{0}}$ for any $n\in \mathbb{N}$, $(a_{b}, u_{i'}.s_{0})$ is a fixed point denoted by $(a_{0}, u_{i'}.s_{0})$. Thus $\sup_{b\in D}(m, b.s)\leq (a_{0}, u_{i'}.s_{0})$ and so $\sup_{b\in D}(m, b.s)\in B_{n}$.

Case 1.4, $A_{<_{3};<_{1}}$ is cofinal in $\{(m, b.s): b\in D\}$.  For any $(m, b.s)\in A_{<_{3};<_{1}}$, there are some $i_{b}\in \mathbb{N}\setminus \{1,2,...,n-1\},\ a_{b}\in E_{i_{b}}$ such that $(m, b.s)<_{3};<_{1} (a_{b}, u_{i_{b}}.s_{0})$, that is, $(m, b.s)<_{3} (a_{b}, s_{b})<_{1} (a_{b}, u_{i_{b}}.s_{0})$ for some $s_{b}\in \mathbb{W}^{*}$. Then $s_{b}\subseteq s$ is unique since the length of each $s_{b}$ is same as that of $u_{i_{b}}.s_{0}$, written as $s^{*}$. Assume $s = t.s^{*}$ for some $t\in \mathbb{W}^{*}$. It follows that $\min (bt)\leq a_{b}$ in light of $(m, b.s)<_{3} (a_{b}, s^*)$, where $t$ may be $\varepsilon$.

Let $i_{b_{0}}=\inf R$, where $R=\{u_{i}\mid s^*\sqsubset_{1} u_i.s_{0}\}$. Then $E_{i_b}\subseteq E_{i_{b_0}}$ for any $b\in D$ with $(m, b.s)\in A_{<_{3};<_{1}}$. Therefore, $a_{b}\in E_{i_{b_{0}}}$. If $t\neq \varepsilon$ and there is $b_{1}> \min t$, then $\min t=\min(b_1.t)\leq a_{b_{1}}$. So we have $\sup_{b\in D}(a, b.s)<_{3} (a_{b_{1}}, s^{*})<_{1} (a_{b_{1}}, u_{i_{b_{1}}}.s_{0})$. If $t = \varepsilon$ or $b\leq\min t$ for each $b\in D$, then $\sup D\leq \min t$ or $t = \varepsilon$. The fact that $\min(b.t)\leq a_b$ suggests that $\min(b.t)=b\leq a_b$, which leads to that $\sup D\leq \vee E_{i_{b_{0}}}$. Now we need to distinguish the following two cases.

Case 1.4.1, $\sup D = \vee E_{i_{b_{0}}}$. Then $\sup_{b\in D}(m, b.s) = (m, \vee E_{i_{b_{0}}}.s) = (m, \vee E_{i_{b_{0}}}.t.s^{*})$. Note that $\vee E_{i_{b_{0}}} =\sup D\leq \min t$ or $t = \varepsilon$, which yields that $\vee E_{i_{b_{0}}}.t.s^*\in T_{i_{b_{0}}.s_0}$.

Case 1.4.2, $\sup D < \vee E_{i_{b_{0}}}$. Then there exists $a'\in E_{i_{b_{0}}}$ such that $\sup D\leq a'$, which results in the conclusion that $\sup_{b\in D}(m, b.s)<_{3} (a', s^{*})<_{1} (a', u_{i_{b_{0}}}.s_{0})\in B_n$.

Case 2, $\{(m, b.s): b\in D\}\bigcap\; (\bigcup \{\downarrow L_{\vee C}: C\subseteq \bigcup_{n\in \mathbb{N}} T_{u_{n}.s_{0}}\ \mathrm{is\ a\ chain}\})$ is cofinal in $\{(m, b.s): b\in D\}$. That is to say, for any $b\in D$, there are some $m_{b}\in \mathbb{W}$ and a chain $C_{b}\subseteq \bigcup_{n\in \mathbb{N}} T_{u_{n}.s_{0}}$ such that $(m, b.s)<(m_{b}, \vee C_{b})$. Now we set
\begin{center}
$A'_{<_{r}} = \{(m, b.s): b\in D, \exists\ m_{b}\in \mathbb{W},\ C_{b}\subseteq \bigcup_{n\in \mathbb{N}} T_{u_{n}.s_{0}}\ s.t.\ (m, b.s)<_{r} (m_{b}, \vee C_{b})\}$,
\end{center}
and
\begin{center}
	$U'_{<_{r}} = \{u\in D_{0}: (m, u.s)\in A'_{<_{r}}\}$,
\end{center}
where $<_{r}\ \in \{<_{1},\ <_{2},\ <_{3},\ <_{2};<_{1},\ <_{3};<_{1}\}$.

Case 2.1, $A'_{<_{1}}$ is cofinal in $\{(m, b.s): b\in D\}$. For any $(m, b.s)\in A'_{<_{1}}$, there are $m_{b}\in \mathbb{W}$ and $C_{b}\subseteq \bigcup_{n\in \mathbb{N}} T_{u_{n}.s_{0}}$ such that $(m, b.s)<_{1}(m_{b}, \vee C_{b})$. Then $m_{b} = m$. By Claim $1$, we know that $C_b$ must be countable, which ensures the existence of the countable chain $C_{0}\subseteq \bigcup_{b\in U'_{<_{1}}} C_{b}$ being the cofinal subset of $\bigcup_{b\in U'_{<_{1}}} C_{b}$. It follows that $C_0\subseteq \bigcup_{b\in U'_{<_{1}}} C_{b}\subseteq \bigcup_{n\in \mathbb{N}} T_{u_{n}.s_{0}}$. This implies that $(m, b.s)<_{1} (m, \vee C_{0})$ for any $b\in D$. So $\sup_{b\in D}(m, b.s)\leq (m, \vee C_{0})\in B_n$.

Case 2.2, $A'_{<_{2}}$ or $A'_{<_{3}}$ is cofinal in $\{(m, b.s): b\in D\}$. For any $(m, b.s)\in A'_{<_{2}}$ or $A'_{<_{3}}$, there are some $m_{b}\in \mathbb{W}$ and $C_{b}\subseteq \bigcup_{n\in \mathbb{N}} T_{u_{n}.s_{0}}$ such that $(m, b.s)<_{2}(m_{b}, \vee C_{b})$ or $(m, b.s)<_{3}(m_{b}, \vee C_{b})$. Then $\vee C_{b}\subseteq s$, that is, $s\sqsubseteq \vee C_{b}$. Hence, $\sup_{b\in D}(m, b.s)\leq (m, \vee C_{b_0})$ for any given $b_0\in U'_{<_{2}}$ or $b_0\in U'_{<_{3}}$, which leads to that $\sup_{b\in D}(m, b.s)\in B_{n}$.

Case 2.3, $A'_{<_{2};<_{1}}$ or $A'_{<_{3};<_{1}}$ is cofinal in $\{(m, b.s): b\in D\}$. The analysis of this case is similar to Case $2.2$.

As $B_n$ is a lower set for each $n\in \mathbb{N}$, $B_{n}$ is Scott closed. Because $\{B_{n}: n\in \mathbb{N}\}$ is filtered and $\min K\bigcap B_{n}\neq\emptyset$ for any $n\in \mathbb{N}$, the set $\bigcap_{n\in \mathbb{N}}B_{n}$ meets $\min K$. Choose $(m, s)\in \bigcap_{n\in \mathbb{N}}B_{n}\bigcap\min K$. The uncountability of $A$ guarantees the existence of $u^*\in \ua \vee_{n\in \mathbb{N}}u_n$ with $u^*.s_{0}\in A$. Now we claim that there exist $u'> u^*$, $m'\in \ua \vee (\bigcup_{n\in \mathbb{N}}E_{n})=\ua \vee E_1$ such that $(m', u'.s_{0})\in \min K$. Suppose not, for any $u> u^*$, any $m\in \mathbb{W}$ with $(m, u.s_{0})\in \min K$, $m\notin {E_{1}}^{u}$. Then we could find $a_{m}\in E_{1}$ such that $a_{m}\nleqslant m$, i.e., $a_{m}> m$. It follows that
\begin{center}
$\{m\in \mathbb{W}: (m, u.s_{0})\in \min K\ \mathrm{with}\ u> u^*\}\subseteq\; \downarrow\!E_{1}$,
\end{center}
which is a contradiction since the left side of the inclusion is uncountable but the right is countable. Hence, there exists $u'> u^*$, $m'\in \ua \vee (\bigcup_{n\in \mathbb{N}}E_{n})=\ua \vee E_1$ such that $(m', u'.s_{0})\in \min K$. Now based on the above discussion, we shall know that $(m, s)\notin\ \downarrow\! L_{\vee C}$ for any chain $C\subseteq \bigcup_{n\in \mathbb{N}}T_{u_{n}.s_{0}}$. Suppose not, $(m, s)\leq (m_{0}, \vee C_{0})$ for some $m_{0}\in \mathbb{W}$ and $C_{0}\subseteq \bigcup_{n\in \mathbb{N}}T_{u_{n}.s_{0}}$. For each $c\in C_{0}$, $c$ can be written as $\vee E_{n_{c}}.t_{c}.p_{c}$ with $t_{c} = \varepsilon$ or $\vee E_{n_{c}}\leq\min t_{c},\ p_{c} = u_{n_{c}}.s_{0}$ or $p_{c}\sqsubset_{1} u_{n_{c}}.s_{0}$. Note that there is a cofinal subset $C'_{0}$ of $C_{0}$ of the form $\{m_{c}.\hat{s}: c\in C'_{0}\}$, in which $m_{c} = \vee E_{n_{c}},\ \hat{s} = t_{c}.p_{c}\triangleq t.p$ for each $c\in C'_{0}$, $\bigvee (\bigcup_{c\in C'_{0}} E_{n_{c}})\leq \min t$ or $t = \varepsilon$. According to $\bigvee (\bigcup_{c\in C'_{0}} E_{n_{c}})\leq \vee E_{1}\leq m'$, we can get that
\begin{center}
$(m, s) \leq (m_{0}, \vee C_{0}) = (m_{0}, \vee C'_{0})\leq (m_{0}, \bigvee (\bigcup_{c\in C'_{0}} E_{n_{c}}).t.p)<_{3} (m', p)<_{1} (m', u'.s_{0})$.
\end{center}
This violates the assumption that both $(m, s)$ and $(m', u'.s_{0})$ are  minimal elements of $K$. As a result, for $(m, s)\in\bigcap_{n\in \mathbb{N}}B_{n}\bigcap\min K$, $(m, s)\in \bigcap_{n\in \mathbb{N}}(\bigcup_{i\in \mathbb{N}\setminus \{1,2,...,n-1\}}\bigcup_{a\in E_{i}}\downarrow\!(a, u_{i}.s_{0}))$. But this is impossible with a similar reason in the proof of~Lemma \ref{min2}.
\end{proof}
\end{lemma}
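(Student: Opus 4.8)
The plan is to argue by contradiction, mirroring the filtered-family technique already used in Lemma \ref{min2}. Suppose that for some $s_0 \in \mathbb{W}^*$ the set $(\bigcup_{u \in \mathbb{W}} L_{u.s_0}) \cap \min K$ is uncountable. Since Lemma \ref{min2} guarantees that each slice $\{m : (m, u.s_0) \in \min K\}$ is countable, a countable union of such slices could only be countable; hence the index set
$$A = \{u.s_0 : \exists\, m_u \in \mathbb{W},\ (m_u, u.s_0) \in \min K\}$$
must itself be uncountable. This uncountability of $A$ is the engine that will drive the terminal contradiction.

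Next I would fix a countable subfamily $\{u_n.s_0 : n \in \mathbb{N}\} \subseteq A$ together with witnesses $(m_n, u_n.s_0) \in \min K$, form the down-set $G = \bigcup_n \downarrow(m_n, u_n.s_0)$ and its level slices $E_n = \{a : (a, u_n.s_0) \in G\}$ (each countable, and nested in $n$), and then build the candidate family
$$B_n = \bigcup_{i \geq n} \bigcup_{a \in E_i} \downarrow(a, u_i.s_0) \;\cup\; \bigcup\{\downarrow L_{\vee C} : C \subseteq \textstyle\bigcup_n T_{u_n.s_0}\ \text{a chain}\},$$
where $T_{u_n.s_0} = \{\vee E_n.t.p : t=\varepsilon \text{ or } \vee E_n \leq \min t,\ p = u_n.s_0 \text{ or } p \sqsubset_1 u_n.s_0\}$. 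The role of the second component is to absorb suprema of $<_3$-chains that escape upward out of the first component, exactly as in the constructions behind Theorem \ref{A}. By design each $B_n$ meets $\min K$ (it contains $(m_i, u_i.s_0)$ for all $i \geq n$) and the family $\{B_n\}$ is filtered; since $K$ is compact, any filtered family of Scott-closed sets each meeting $K$ has intersection meeting $K$, so once I know every $B_n$ is Scott closed I obtain a common point $(m, s) \in \bigcap_n B_n \cap \min K$.

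The hard part will be verifying that each $B_n$ is Scott closed, and this is where I expect to spend almost all the effort. After noting $B_n$ is a lower set, I would take an arbitrary non-trivial chain $\{(m, b.s) : b \in D\}$ in $B_n$ and split according to which component is cofinal. In the first-component case the usual five-way analysis on the connecting relations ($<_1, <_2, <_3, <_2;<_1, <_3;<_1$) applies, the delicate subcase being $<_3$ with $t=\varepsilon$ or $b \leq \min t$: there $D$ becomes countable, $\sup D \leq \vee E_{i_{b_0}}$, and the supremum either stays inside the first component or, when $\sup D = \vee E_{i_{b_0}}$, lands on a $T$-level, which is precisely why $\bigcup_n T_{u_n.s_0}$ had to be included. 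A necessary preliminary (my first claim) is that every chain inside $\bigcup_n T_{u_n.s_0}$ is countable; this keeps the $\downarrow L_{\vee C}$ suprema under control and is proved by observing that any uncountable chain would have to sit inside a single $T_{u_{n_0}.s_0}$ and there collapse to the singleton $\{\vee E_{n_0}.s\}$.

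Finally, with $(m, s) \in \bigcap_n B_n \cap \min K$ in hand, I would extract the contradiction in two steps. Using the uncountability of $A$ I first locate a minimal element $(m', u'.s_0) \in \min K$ with $u'$ beyond all the $u_n$ and $m' \geq \vee E_1$; this rules out $(m, s) \in \downarrow L_{\vee C}$ for every chain $C$, since otherwise the relation $(m,s) \leq (m_0, \vee C) <_3 (m', p) <_1 (m', u'.s_0)$ would place two distinct minimal elements of $K$ in strict order. Having excluded the $T$-component, $(m, s)$ must lie in $\bigcup_{i \geq n}\bigcup_{a \in E_i}\downarrow(a, u_i.s_0)$ for every $n$; but minimality of $(m,s)$ in $K$ then forces $(m, s) = (m_i, u_i.s_0)$ for arbitrarily large $i$, which is impossible. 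This is exactly the terminal contradiction exploited in Lemma \ref{min2}, and it completes the proof.
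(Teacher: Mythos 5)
Your proposal is correct and follows essentially the same route as the paper's own proof: the same sets $A$, $G$, $E_n$, the same Scott-closed family $B_n$ augmented by the $\downarrow L_{\vee C}$ component over $T_{u_n.s_0}$, the same countability claim for chains in $\bigcup_n T_{u_n.s_0}$, and the same terminal contradiction obtained by locating $(m', u'.s_0)\in\min K$ with $u'$ beyond all $u_n$ and $m'\geq \vee E_1$. The only cosmetic difference is that you invoke compactness of $K$ explicitly where the paper leaves it implicit, which changes nothing.
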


\begin{lemma}\label{min4}
If $K$ is a compact saturated subset of $\mathcal{Z}$, then $\min K$ is countable.
\begin{proof}
By way of contradiction, suppose that $\min K$ is uncountable. In virtue of Lemma \ref{min2}, we have $\min K\cap L_s$ is countable for any $s\in \min K$. The fact that $\min K=\bigcup_{s\in f(\min K)}(\min K\cap L_s)$ induces that $f(\min K)$ is uncountable, where the map $f: \mathcal{Z}\rightarrow \mathcal{T}$ is defined by $f(m, s) = s$. In addition, the continuity of $f$ reveals that $f(\min K)$ is a compact subset of $\mathcal{T}$. Then $\min f(\min K)$ is finite due to Lemma \ref{min1}, which yields that there is $s\in \min f(\min K)$ such that $\uparrow\!s\cap f(\min K)$ is uncountable. Since the length of $s$ is finite, there exists $s_{0}\geq s$ such that $\min K\bigcap (\bigcup_{u\in \mathbb{W}}L_{u.s_{0}})$ is uncountable. This contradicts the result of Lemma \ref{min3}. Hence, $\min K$ is countable.
\end{proof}
\end{lemma}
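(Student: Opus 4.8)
The plan is to argue by contradiction, reducing the uncountability of $\min K$ to a single family of levels where Lemma~\ref{min3} forbids it. So I would assume $\min K$ is uncountable and aim to locate some fixed $s_0\in\mathbb{W}^*$ for which $\min K\cap(\bigcup_{u\in\mathbb{W}}L_{u.s_0})$ is uncountable, directly contradicting Lemma~\ref{min3}.

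First I would decompose $\min K$ along the levels it meets, writing $\min K=\bigcup_{s\in f(\min K)}(\min K\cap L_s)$. By Lemma~\ref{min2} each slice $\min K\cap L_s$ is countable, so a countable union of these slices would be countable; hence the uncountability of $\min K$ forces the index set $f(\min K)$ to be uncountable. Next I would pass to $\mathcal{T}$ using the continuity of $f$ from Proposition~\ref{sf}(3): since $K$ is compact, $f(K)$ is compact in $\mathcal{T}$, and because $K=\uparrow\min K$ with $f$ monotone we get $\uparrow f(\min K)=\uparrow f(K)$, which is therefore compact saturated. Lemma~\ref{min1} then says its set of minimal elements is finite, so $f(\min K)\subseteq\uparrow F$ for a finite $F\subseteq f(\min K)$. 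As $f(\min K)$ is uncountable while $F$ is finite, pigeonhole yields some $s\in F$ with $\uparrow s\cap f(\min K)$ uncountable.

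The final step is to pin this uncountability onto one tail $s_0$. By Proposition~\ref{sf}(1) the set $\uparrow s$ is linearly ordered, and since $s$ has finite length the strings above $s$ in $(\mathcal{T},\sqsubseteq)$ have boundedly many possible shapes once their leading ordinal is stripped off. A second pigeonhole argument on the uncountable chain $\uparrow s\cap f(\min K)$ should then isolate a fixed $s_0\sqsupseteq s$ such that uncountably many members of $f(\min K)$ have the form $u.s_0$ with $u\in\mathbb{W}$. Pulling back through $f$, this makes $\min K\cap(\bigcup_{u\in\mathbb{W}}L_{u.s_0})$ uncountable, contradicting Lemma~\ref{min3} and completing the proof.

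The main obstacle I expect is precisely this last concentration step: converting ``$\uparrow s\cap f(\min K)$ is uncountable'' into ``uncountably many of these strings share a common tail $s_0$.'' This needs a careful description of which strings lie above a fixed finite $s$ and of how the two generating relations $\sqsubset_1$ and $\sqsubset_2$ constrain them, so that the finiteness of the length $|s|$ genuinely limits the combinatorial branching and forces the uncountable part to live on a single level family $\{L_{u.s_0}\}_{u\in\mathbb{W}}$. The earlier reductions via Lemmas~\ref{min1} and~\ref{min2} are comparatively routine once the continuity and saturation facts are in place.
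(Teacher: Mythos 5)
Your proposal is correct and follows essentially the same route as the paper's own proof: decompose $\min K$ by levels via Lemma~\ref{min2}, push forward to $\mathcal{T}$ and apply Lemma~\ref{min1} to get finitely many minimal strings, pigeonhole to an uncountable $\uparrow\!s\cap f(\min K)$, then use the finite length of $s$ to concentrate on a single tail $s_0$ and contradict Lemma~\ref{min3}. If anything, your treatment of the compactness step (working with $\uparrow\!f(K)=\uparrow\!f(\min K)$ rather than asserting $f(\min K)$ itself is compact) and your explicit suffix-counting justification of the concentration step are more careful than the paper's terse wording.
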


\begin{theorem}
$\Sigma \mathcal{Z}$ is well-filtered.
\begin{proof}
Recall that a $T_0$ space if well-filtered if and only if all its $K\!F$-sets are closures of singletons. We assume that $\Sigma \mathcal{Z}$ is not well-filtered. 
Then there exists a $K\!F$-set $\downarrow\!L_{s_{0}}$ for some $s_{0}\in \mathbb{W}^{*}$ by Theorem~\ref{A} and that all $K\!F$-sets are irreducible. This implies that there is a filtered family $\{Q_{i}: i\in I\}$ contained in $Q(\mathcal{Z})$ such that $\downarrow\!\!L_{s_{0}}$ is a minimal closed set that intersects each $Q_{i}$. Set $K_{i} = \uparrow\!(Q_{i}\ \cap \downarrow\!L_{s_{0}})$, then a straightforward verification establishes that $\{K_{i}: i\in I\}\subseteq_{flt}Q(\mathcal{Z})$ and $\downarrow\!L_{s_{0}}$ is still a minimal closed set that intersects each $K_{i}$. Note that $\min K_{i}\subseteq\ \downarrow\!L_{s_{0}}$ for each $i\in I$. Now we fix $j_{1}\in I$ and let $J = \uparrow\!j_{1}\cap I$, then $J$ is a cofinal subset of $I$ and $\downarrow\!L_{s_{0}}$ is a minimal closed set that intersects each $K_{j}, j\in J$. The assumption that $\downarrow\!L_{s_{0}}\cap K_{j}\neq\emptyset$ implies that  $\downarrow\!L_{s_{0}}\cap \min K_{j}\neq\emptyset$ for all $j\in J$. Pick $(m_{j}, s_{j})\in\ \downarrow\!L_{s_{0}}\cap \min K_{j}$ for each $j\in J$.

$\mathbf{Claim~1}$: There exists $j_{0}\in J$ such that $s_{j} = s_{0}$ for any $j\geq j_{0}$.

Assume for the sake of a contradiction that for any $j\in J$, there is $i\geq j$ such that $s_{i}\neq s_{0}$, that is, $s_{i}\sqsubset s_{0}$. Let $J_{1} = \{j\in J: s_{j}\sqsubset s_{0}\}$. Then $J_{1}$ is cofinal in $J$ and so $\downarrow\!L_{s_{0}}$ is a minimal closed set that intersects each $K_{j}, j\in J_{1}$. The minimality of $\da L_{s_0}$ infers that $\overline{\{(m_{j}, s_{j}): j\in J_{1}\}} = \downarrow\!L_{s_{0}}$. This indicates that $\{(m_{j}, s_{j}): j\in J_{1}\}$ is uncountable. If not, $\{(m_{j}, s_{j}): j\in J_{1}\}$ is countable, we set $B = \bigcup_{j\in J_{1}}\downarrow\!(m_{j}, s_{0})\bigcup\ \downarrow\!(\vee_{j\in J_1} m_{j}, s_{0})$. Then it is easy to verify that $B$ is Scott closed and $B\cap K_{j}\neq\emptyset, j\in J_{1}$, but $B\subsetneqq \downarrow\!L_{s_{0}}$, which is a contradiction to the minimality of $\downarrow\!L_{s_{0}}$. The above discussion also reveals that $\{m_{j}: j\in J_{1}\}$ is uncountable. Now we let
\begin{center}
$B_{j} = \{(m_{i}, s_{i}): i\geq j, (m_{i}, s_{i})\in \min K_{j}\}, \forall j\in J_{1}$,
\end{center}
then it follows that each $B_{j}$ is countable by Lemma \ref{min4}. Note that for any given $j\in J_1$, $I_{j}=\{i\in J_i: i\geq j\}$ is a cofinal subset of $I$. This means that $\overline{\{(m_{r}, s_{r}): r\in I_{j}\}} = \downarrow\!L_{s_{0}}$, which yields that $\{m_{r}: r\in I_{j}\}$ is uncountable for a similar reason as $J_1$.
Therefore,
\begin{center}
$B'_{j} = \{(m_{i}, s_{i}): i\geq j, (m_{i}, s_{i})\notin \min K_{j}\}$
\end{center}
is uncountable for all $j\in J_{1}$, and $\{m_{i}: (m_{i}, s_{i})\in B'_{j}\}$ is uncountable. It turns out that $\overline{B'_j}=\da L_{s_0}$. Given $j\in J_1$, for any $(m_{i}, s_{i})\in B'_{j}$, there is $(n_{i_{k}}, a_{i_{k}})\in \min K_{j}$ such that $(n_{i_{k}}, a_{i_{k}})< (m_{i}, s_{i})$. The countability of $\min K_{j}$ ensures the existence of $(n_{i_{0}}, a_{i_{0}})\in \min K_{j}$ with  $\uparrow\!\!(n_{i_{0}}, a_{i_{0}})\bigcap B'_{j}$ and $E_{i_0} = \{m_{i}: (m_{i}, s_{i})\in\ \uparrow\!\!(n_{i_0}, a_{i_0})\bigcap B'_{j}\}$ being uncountable, by the aids of the equation $\{m_{i}: (m_{i}, s_{i})\in B'_{j}\} = \bigcup_{i\geq j}\{m_{i}: (m_{i}, s_{i})\in\ \uparrow\!(n_{i_{k}}, a_{i_{k}})\bigcap B'_{j}\}$ and the uncountability of the set $\{m_{i}: (m_{i}, s_{i})\in B'_{j}\}$.
Thus there is a $m_{i}\in E_{i_0}$ with $m_{i}\neq n_{i_0}$. In a conclusion, we can find $(m_{i_{j}}, s_{i_{j}})\in B'_{j}$ and $(n_{i_{j}}, a_{i_{j}})\in \min K_{j}$ satisfying $(n_{i_{j}}, a_{i_{j}})< (m_{i_{j}}, s_{i_{j}})$, with $n_{i_{j}}\neq m_{i_{j}}$ for any $j\in J_{1}$, which means that $a_{i_{j}}\sqsubset_{2} s_{i_{j}}$ or $a_{i_{j}}\sqsubset_{2};\sqsubset_{1} s_{i_{j}}$. The fact that $\overline{\{(n_{i_{j}}, a_{i_{j}}): j\in J_{1}\}}\bigcap K_{j}\neq \emptyset$, for any $ j\in J_{1}$ suggests that $\overline{\{(n_{i_{j}}, a_{i_{j}}): j\in J_{1}\}} = \downarrow\!L_{s_{0}}$ from the minimality of $\da L_{s_0}$. This implies that $\overline{\{a_{i_{j}}: j\in J_{1}\}} = \overline{\{s_{0}\}} =\ \downarrow\!\!s_{0}$. It follows that $s_{0} = \vee_{j\in J_{1}}a_{i_{j}}$. The fact that $a_{i_{j}}\in f(\min K_{j})\subseteq\ f(K_{j_{1}})\subseteq\ \uparrow\!\min f(\min K_{j_{1}}) =\ \uparrow\!F$ indicates that $\{a_{i_{j}}: j\in J_{1}\} = \bigcup_{a\in F}(\uparrow\!a\cap \{a_{i_{j}}: j\in J_{1}\})$, where $F = \min f(\min K_{j_{1}})$. Then $F$ is finite by Lemma \ref{min1}, which leads to 
\begin{center}
$\overline{\{a_{i_{j}}: j\in J_{1}\}}\subseteq \bigcup_{a\in F}\downarrow\!\vee (\uparrow\!a\cap \{a_{i_{j}}: j\in J_{1}\})$.
\end{center}
It follows that $s_{0}\in \downarrow\!\vee (\uparrow\!a_{0}\cap \{a_{i_{j}}: j\in J_{1}\})$ for some $a_{0}\in F$, that is, $s_{0}\leq \vee (\uparrow\!a_{0}\cap \{a_{i_{j}}: j\in J_{1}\})$. Then $s_{0} = \vee (\uparrow\!a_{0}\cap \{a_{i_{j}}: j\in J_{1}\})$ according to $s_{0} = \vee_{j\in J_{1}}a_{i_{j}}$. It turns out that there is a chain of the form $\{v.s: v\in D\}$ being a cofinal subset of $\uparrow\!a_{0}\cap \{a_{i_{j}}: j\in J_{1}\}$ enjoying $s_{0} = \sup_{v\in D}v.s$. This means that $|s_{0}| = |v.s|$ or $|s_{0}|+1 = |v.s|$. Note that $v.s\in (\uparrow\!a_{0}\cap \{a_{i_{j}}: j\in J_{1}\})$ for any $v\in D$. Then assume $v.s=a_{i_{j_v}}$. We notice that $v.s\sqsubset_{2} s_{i_{j_v}}$ or $v.s\sqsubset_{2};\sqsubset_{1} s_{i_{j_v}}$, and conclude that $|v.s|> |s_{i_{j_v}}|$. We claim that $|s_{i_{j_v}}| = |s_{0}|$. If not, $|s_{i_{j_v}}|\neq |s_{0}|$. The fact that $s_{i_{j_v}}\sqsubset s_{0}$ deduces $|s_{i_{j_v}}|>|s_{0}|$, then $|v.s|> |s_{i_{j_v}}|> |s_{0}|$, that is, $|v.s|> |s_{0}|+1$, a contradiction. Hence, $|s_{i_{j_v}}| = |s_{0}|$, so $|v.s|> |s_{0}|$. This implies that $D$ is uncountable because $s_{0} = \sup_{v\in D}v.s$, which in turn implies that $s_{0} = s$, and it means $v.s_{0}\sqsubset_{2} s_{i_{j_v}}$ or $v.s_{0}\sqsubset_{2};\sqsubset_{1} s_{i_{j_v}}$. Then $s_{0}\sqsubseteq s_{i_{j_v}}$. But this is a contradiction to the fact that $s_{i_{j_v}}\sqsubset s_{0}$. Therefore, Claim $1$ holds.

$\mathbf{Claim~2}$: There exists $j_{2}\in J$ such that $\min K_{j_{2}}\subseteq L_{s_{0}}$.

Assume that for any $j\in J$, $\min K_{j}$ is not contained in $L_{s_{0}}$. Then $\min K_{j}\bigcap\ (\downarrow\! L_{s_{0}}\!\setminus L_{s_{0}})\neq\emptyset$ from the fact that $\min K_{j}\subseteq\downarrow\!L_{s_{0}}$ for any $ j\in J$. We pick $(m_{j}, s_{j})\in \min K_{j}\bigcap( \downarrow\! L_{s_{0}}\setminus\! L_{s_{0}})$, then $s_{j}\sqsubset s_{0}$ for all $j\in J$. This contradicts Claim $1$.

$\mathbf{Claim~3}$: $\min K_{j_{2}}$ is finite.

We proceed by contradiction. Suppose that $\min K_{j_{2}}$ is infinite. By Lemma \ref{min4}, we know that $\min K_{j_{2}}$ is countable. Let $\min K_{j_{2}} = \{(m_{n}, s_{0}): n\in \mathbb{N}\}$, where $\mathbb{N}$ denotes all positive natural numbers. Then we can identify $m'\in \mathbb{W}$, which is an upper bound of $\{m_n:n\in \mathbb{N}\}$.

For any $n\in \mathbb{N}$, set
\begin{center}
$B_{n} = \bigcup_{i\in \mathbb{N}\setminus\{1,2,...,n-1\}}\downarrow\!(m_{i}, s_{0})\bigcup\downarrow\!(m', s_{0})$.
\end{center}
It is easy to check that $B_{n}$ is Scott closed for any $n\in \mathbb{N}$. By the construction of $B_n$, we have $\min K_{j_2}\cap B_n\neq \emptyset$ for each $n\in \mathbb{N}$. Then the compactness of $\min K_{j_{2}}$ implies that $\bigcap_{n\in \mathbb{N}}B_{n}\bigcap \min K_{j_{2}}\neq\emptyset$. Choose $(m_{n_0},s_0)\in \bigcap_{n\in \mathbb{N}}B_{n}\bigcap \min K_{j_{2}}$. It follows that $(m_{n_0},s_0)\in B_{n_0+1}$. Now the remaining arguments are similar to that in Lemma \ref{min2}.

$\mathbf{Claim~4}$: For any $j\geq j_{2}, \min K_{j}\subseteq \min K_{j_{2}}$.

By way of contradiction, assume that there is a $j'\geq j_{2}$ such that $\min K_{j'}\not\subseteq \min K_{j_{2}}$, that is, there exists $(m, s)\in \min K_{j'}\setminus \min K_{j_{2}}$. It follows that there is $(m_{1}, s_{1})\in \min K_{j_{2}}$ such that $(m, s)> (m_{1}, s_{1})$, which implies $s_{1}\sqsubset s$. The fact that $\min K_{j_{2}}\subseteq L_{s_{0}}$ suggests that $s_{1} = s_{0}$. It indicates that $s_{0}\sqsubset s$, which contradicts $s\sqsubseteq s_{0}$ as $\min K_{j'}$ is contained in $ \downarrow\!L_{s_{0}}$.

Now let $J' = \{j\in J: j\geq j_{2}\}$, which is a cofinal subset of $J$, and so $\downarrow\!L_{s_{0}}$ is a minimal closed set intersects $K_{j}$, for any $j\in J'$. Assume $\min K_{j_{2}} = \{(a_{i}, s_{0}): i = 1,2,...,n\}$ and set $F = \bigcup_{i\in \{1,2,...,n\}}\downarrow\!(a_{i}, s_{0})$. Then $F$ is Scott closed. Note that $F\cap K_{j}\neq\emptyset$ for any $ j\in J'$ from Cliam 4. But $F\subsetneqq\ \downarrow\!L_{s_{0}}$. This violates the minimality of $\downarrow\!L_{s_{0}}$. So our assumption that  $\Sigma \mathcal{Z}$ is not well-filtered must have been wrong. 
\end{proof}
\end{theorem}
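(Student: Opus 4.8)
The plan is to use the $KF$-set characterization of well-filteredness recalled in the preliminaries together with the classification of irreducible closed sets in Theorem~\ref{A}, and argue by contradiction. If $\Sigma\mathcal{Z}$ is not well-filtered, then some $KF$-set fails to be the closure of a singleton. Since every $KF$-set is irreducible and Theorem~\ref{A} tells us that $IRR(\mathcal{Z})$ consists only of the principal ideals $\downarrow(m,s)$ (each of which is $\overline{\{(m,s)\}}$) and the level-closures $\downarrow L_{s_0}$, the only possible offender is some $\downarrow L_{s_0}$. So I would fix such an $s_0$ together with a filtered family $\{Q_i\}_{i\in I}\subseteq Q(\mathcal{Z})$ witnessing that $\downarrow L_{s_0}$ is a minimal closed set meeting every $Q_i$, and then aim to produce a strictly smaller Scott-closed set meeting every member of a cofinal subfamily, contradicting that minimality.

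First I would normalize the family: replacing $Q_i$ by $K_i=\uparrow(Q_i\cap\downarrow L_{s_0})$ keeps it filtered inside $Q(\mathcal{Z})$, preserves the minimality of $\downarrow L_{s_0}$, and forces $\min K_i\subseteq\downarrow L_{s_0}$. After fixing some $j_1$ and passing to the cofinal set $J=\uparrow j_1\cap I$, I would pick witnesses $(m_j,s_j)\in\downarrow L_{s_0}\cap\min K_j$ for each $j\in J$. The crucial step is then to show that these witnesses eventually sit exactly on the level $s_0$, i.e.\ there is $j_0$ with $s_j=s_0$ for all $j\geq j_0$.

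I expect this stabilization to be the main obstacle. I would prove it by contradiction: if witnesses with $s_j\sqsubset s_0$ stay cofinal (collect them in $J_1$), then minimality forces $\overline{\{(m_j,s_j):j\in J_1\}}=\downarrow L_{s_0}$, which makes this set, and hence $\{m_j:j\in J_1\}$, uncountable. Using Lemma~\ref{min4} (each $\min K_j$ is countable), the witnesses that fail to be minimal in $K_j$ must themselves be uncountable, and each lies strictly above a fixed minimal element of $K_j$ realized by a $<_2$ or $<_2;<_1$ relation, yielding strings $a_{i_j}\sqsubset s_{i_j}$ with $\sup a_{i_j}=s_0$. Since $a_{i_j}\in f(\min K_j)\subseteq f(K_{j_1})\subseteq\uparrow F$ with $F=\min f(\min K_{j_1})$ finite by Lemma~\ref{min1}, one coordinate $a_0\in F$ satisfies $s_0=\vee(\uparrow a_0\cap\{a_{i_j}\})$. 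A string-length bookkeeping — using the dichotomy that an uncountable chain $\{v.s\}$ has supremum $s$ of strictly smaller length while a countable one raises the length by at most one, compared against $|v.s|>|s_{i_{j_v}}|=|s_0|$ — then forces the chain to be uncountable, hence $s_0=s$, which gives $s_0\sqsubseteq s_{i_{j_v}}$ and contradicts $s_{i_{j_v}}\sqsubset s_0$. Handling all five relation types uniformly inside this cardinality-versus-length argument is the delicate part.

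Once stabilization at level $s_0$ is secured, the endgame is routine. I would first extract $j_2$ with $\min K_{j_2}\subseteq L_{s_0}$ (immediate from stabilization), then show $\min K_{j_2}$ is finite via a filtered family of auxiliary Scott-closed sets $B_n$ exactly as in the proof of Lemma~\ref{min2}, and finally note $\min K_j\subseteq\min K_{j_2}$ for every $j\geq j_2$ (a minimal of $K_j$ lying strictly above a minimal of $K_{j_2}$ would be forced below level $s_0$, against stabilization). Writing $\min K_{j_2}=\{(a_i,s_0):i=1,\dots,n\}$, the finite union $F=\bigcup_{i=1}^{n}\downarrow(a_i,s_0)$ is Scott closed, meets every $K_j$ with $j\geq j_2$, yet is strictly contained in $\downarrow L_{s_0}$ because $L_{s_0}$ is uncountable. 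This contradicts the minimality of $\downarrow L_{s_0}$, so no offending $KF$-set exists and $\Sigma\mathcal{Z}$ is well-filtered.
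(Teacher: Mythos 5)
Your proposal is correct and follows essentially the same route as the paper's own proof: the $KF$-set characterization plus Theorem~\ref{A} to isolate $\downarrow\!L_{s_0}$ as the only possible offender, the normalization $K_i=\uparrow\!(Q_i\cap\downarrow\!L_{s_0})$, the stabilization claim $s_j=s_0$ proved via Lemmas~\ref{min1} and~\ref{min4} with the cardinality-versus-string-length argument, and the endgame producing the finite Scott-closed set $F=\bigcup_{i=1}^{n}\downarrow\!(a_i,s_0)$ that contradicts minimality. No substantive difference in structure or key lemmas.
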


\subsection{$\mathbf{WF}$ is not $\Gamma$-faithful}
In this subsection, we will show that $\mathbf{WF}$ is not $\Gamma$-faithful by using the well-filtered dcpo $\mathcal{Z}$ constructed above.  
It is trivial to check that $\Sigma \mathcal{Z}$ and $\Sigma (I\!R\!R(\mathcal{Z}))\triangleq \Sigma \hat{\mathcal{Z}}$ are not homeomorphism. Thus if we derive $\Gamma \mathcal{Z}\cong \Gamma\hat{\mathcal{Z}}$ and $\Sigma \hat{\mathcal{Z}}$ is well-filtered, then this suffices to show that the category $\mathbf{WF}$ is not $\Gamma$-faithful.

\begin{theorem}
$\Gamma\mathcal{Z}\cong \Gamma\hat{\mathcal{Z}}$, that is, the family of  Scott closed subsets of $\mathcal{Z}$ and $\hat{\mathcal{Z}}$ are isomorphic.
\begin{proof}
One sees clearly that the Scott topology of $\mathcal{Z}$ is isomorphic to the lower Vietoris topology of $\hat{\mathcal{Z}}$, which is the family $\tau=\{\lozenge U: U\in \sigma(\mathcal{Z})\}$, where $\lozenge U = \{A\in \hat{\mathcal{Z}}: A\cap U\neq\emptyset\}$. Note that $\hat{\mathcal{Z}}$ endowed with the lower Vietoris topology is well-filtered. So it remain to prove that the Scott topology of $\hat{\mathcal{Z}}$ and the lower Vietoris topology of $\hat{\mathcal{Z}}$ coincide. It is clear to see that each closed set in the lower Vietoris topology is Scott closed. Now we show the converse. To this end, for each $A\in \Gamma(\mathcal{Z})$, we write $\Box A = \{B\in \hat{\mathcal{Z}}: B\subseteq A\}$ and choose $\mathcal{A}\in \Gamma(\hat{\mathcal{Z}})$. Then it suffices to show that $\bigcup\mathcal{A}$ is Scott closed and $\mathcal{A} = \Box(\bigcup\mathcal{A})$. 

First, by Proposition 2.2 (4) in \cite{hzp}, we know that for the dcpo $\mathcal{Z}$, $\bigcup\mathcal{A}$ is Scott closed. Next, we confirm that $\mathcal{A} = \Box(\bigcup\mathcal{A})$. That $\mathcal{A} \subseteq \Box(\bigcup\mathcal{A})$ is trivial. Conversely, let $A\in \Box(\bigcup\mathcal{A})$. Then $A\subseteq \bigcup\mathcal{A}$. By Theorem \ref{A}, we know that $A\in \{\da L_s:s\in \mathbb{W}^*\}\cup \{\da(m,s):(m,s)\in \mathbb{W}\times \mathbb{W}^*\}$. If $A$ is the form $\downarrow\!(m, s)$ for some $(m,s)\in \mathbb{W}\times \mathbb{W}^*$, then $A\in \mathcal{A}$ evidently. Else, $A=\ \downarrow\!L_{s}$ for some $s\in \mathbb{W}^{*}$. It follows that for each $m\in \mathbb{W}$, there is $A_{m}\in \mathcal{A}$ such that $(m, s)\in A_{m}$. It turns out that $\downarrow\!(m, s)\in \mathcal{A}$ since $\mathcal{A}$ is a lower set. Furthermore, we get that $\downarrow\!L_{m.s}\in \mathcal{A}$ for each $m\in \mathbb{W}$ as $\downarrow\!L_{m.s}\subseteq\ \downarrow\!(m, s)$. Note that $\{\downarrow\!L_{m.s}: m\in \mathbb{W}\}$ is a directed family contained in $\mathcal{A}$ and $\mathcal{A}$ is Scott closed. Then $A=\downarrow\!L_{s} = \sup_{m\in \mathbb{W}}\downarrow\!L_{m.s}\in \mathcal{A}$. So $\Box (\bigcup\mathcal{A})\subseteq \mathcal{A}$. Indeed, we have proved that $\Gamma\mathcal{Z}\cong \Gamma\hat{\mathcal{Z}}$.
\end{proof}
\end{theorem}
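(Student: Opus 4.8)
The plan is to route the desired isomorphism through the sobrification of $\mathcal{Z}$. Recall that $\hat{\mathcal{Z}} = IRR(\mathcal{Z})$, when equipped with the \emph{lower Vietoris topology} $\tau = \{\lozenge U : U \in \sigma(\mathcal{Z})\}$ with $\lozenge U = \{A \in \hat{\mathcal{Z}} : A \cap U \neq \emptyset\}$, is the standard sobrification of $\Sigma\mathcal{Z}$. The first, routine, step is to record that $U \mapsto \lozenge U$ is a frame isomorphism from $\sigma(\mathcal{Z})$ onto $\tau$; dually, the $\tau$-closed subsets of $\hat{\mathcal{Z}}$ are exactly the sets $\Box A = \{B \in \hat{\mathcal{Z}} : B \subseteq A\}$ for $A \in \Gamma\mathcal{Z}$, since $\hat{\mathcal{Z}} \setminus \lozenge U = \Box(\mathcal{Z} \setminus U)$. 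Hence $A \mapsto \Box A$ is an isomorphism of $\Gamma\mathcal{Z}$ onto the lattice of $\tau$-closed subsets of $\hat{\mathcal{Z}}$, and the entire theorem reduces to a single assertion: on $\hat{\mathcal{Z}}$ the lower Vietoris topology and the Scott topology have the same closed sets, i.e. $\{\Box A : A \in \Gamma\mathcal{Z}\} = \Gamma\hat{\mathcal{Z}}$.

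One inclusion of this equality is easy, since every $\Box A$ is a lower set that is closed under directed unions and is therefore Scott closed. So the crux is to show that each Scott-closed $\mathcal{A} \subseteq \hat{\mathcal{Z}}$ has the form $\Box A$. The natural candidate is $A = \bigcup\mathcal{A}$, and I would verify two things: (i) $\bigcup\mathcal{A} \in \Gamma\mathcal{Z}$, and (ii) $\mathcal{A} = \Box(\bigcup\mathcal{A})$. For (i) I would invoke the general fact that the union of a Scott-closed family in $IRR(L)$ is Scott closed in $L$ (Proposition~2.2(4) of \cite{hzp}), applied to the dcpo $\mathcal{Z}$. The inclusion $\mathcal{A} \subseteq \Box(\bigcup\mathcal{A})$ in (ii) is immediate from the definition.

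The reverse inclusion $\Box(\bigcup\mathcal{A}) \subseteq \mathcal{A}$ is where the specific structure of $\mathcal{Z}$ enters, and I expect this to be the main obstacle. Here the key leverage is the classification of irreducible closed sets established in Theorem~\ref{A}: any $A \in \hat{\mathcal{Z}}$ with $A \subseteq \bigcup\mathcal{A}$ is either a principal ideal $\downarrow(m,s)$ or a level closure $\downarrow L_s$. In the first case $(m,s) \in \bigcup\mathcal{A}$ lies in some member of $\mathcal{A}$; since members are lower sets and $\mathcal{A}$ is itself a lower set in $\hat{\mathcal{Z}}$, we get $\downarrow(m,s) \in \mathcal{A}$. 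The delicate case is $A = \downarrow L_s$: for each $m \in \mathbb{W}$ the previous argument yields $\downarrow(m,s) \in \mathcal{A}$, whence $\downarrow L_{m.s} \in \mathcal{A}$ because $\downarrow L_{m.s} \subseteq \downarrow(m,s)$; then $\{\downarrow L_{m.s} : m \in \mathbb{W}\}$ is a directed family in $\hat{\mathcal{Z}}$ whose supremum is $\downarrow L_s$, so Scott closedness of $\mathcal{A}$ forces $\downarrow L_s \in \mathcal{A}$.

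The one point deserving care throughout is the coincidence of suprema: that the directed family $\{\downarrow L_{m.s}\}_{m}$ genuinely joins to $\downarrow L_s$ in $IRR(\mathcal{Z})$, which rests on the description of suprema of chains in Proposition~\ref{dcpo} and the structure of $\mathcal{T}$ in Proposition~\ref{sf}. Once $\Gamma\mathcal{Z} \cong \Gamma\hat{\mathcal{Z}}$ is in hand, the conclusion that $\mathbf{WF}$ is not $\Gamma$-faithful follows by combining it with the fact that $\hat{\mathcal{Z}}$ carries a sober, hence well-filtered, Scott topology while $\Sigma\mathcal{Z}$ and $\Sigma\hat{\mathcal{Z}}$ are not homeomorphic.
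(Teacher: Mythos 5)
Your proposal is correct and follows essentially the same route as the paper's own proof: reduce to showing that the lower Vietoris and Scott topologies on $\hat{\mathcal{Z}}$ share the same closed sets, verify that $\bigcup\mathcal{A}$ is Scott closed via Proposition~2.2(4) of the Ho--Goubault-Larrecq--Jung--Xi paper, and establish $\Box(\bigcup\mathcal{A})\subseteq\mathcal{A}$ by invoking the classification of irreducible closed sets (Theorem~\ref{A}), handling the level-closure case with the directed family $\{\downarrow\!L_{m.s}: m\in\mathbb{W}\}$ whose supremum is $\downarrow\!L_{s}$. The only difference is presentational: you make explicit the duality $\hat{\mathcal{Z}}\setminus\lozenge U=\Box(\mathcal{Z}\setminus U)$ and flag the suprema-coincidence issue, both of which the paper leaves implicit.
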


\section{The category of weak dominated dcpo's is $\Gamma$-faithful}

Finally, we give a $\Gamma$-faithful category, that of \emph{weakly dominated dcpo's}, which is strictly larger than the category of dominated dcpo's. 
\begin{definition}\cite{lost}
	Let $P$ be a poset and $x,y\in P$. We say that $x$ is \emph{beneath} y, denoted by $x\prec^{*} y$, if for every nonempty Scott closed set $C\subseteq P$ for which $\sup C$ exists, the relation $y\leq \sup C$ always implies that $x\in C$. An element $x$ of a poset $P$ is called \emph{$C$-compact} if $x\prec^{*}x$. If $P=\Gamma L$ for some poset $L$, then the set of all $C$-compact elements of $\Gamma L$ is denoted by $C(\Gamma L)$.
\end{definition}

The following proposition is a corollary of \cite[Proposition~3.4]{lost}.
\begin{proposition}\label{subdcpo100}
	$C(\Gamma L)$ is a subdcpo of $\Gamma L$.
\end{proposition}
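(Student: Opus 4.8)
The plan is to prove the statement in its strongest and most transparent form: that $C(\Gamma L)$ is closed under the directed suprema computed inside $\Gamma L$. Since the excerpt records that $\Gamma L$ is a complete lattice, hence a dcpo, this closure property is exactly the assertion that $C(\Gamma L)$ is a subdcpo. The whole argument rests on unwinding the definition of $\prec^{*}$ and on the single structural fact that a Scott-closed subset of $\Gamma L$ contains the supremum of every directed subfamily it contains.

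Concretely, I would fix a directed family $D \subseteq C(\Gamma L)$ and put $\mathcal{A} = \sup D$, the supremum being taken in $\Gamma L$ (so $\mathcal{A}$ exists and equals $\cl(\bigcup D)$). To show $\mathcal{A} \in C(\Gamma L)$ I must verify $\mathcal{A} \prec^{*} \mathcal{A}$, so I take an arbitrary nonempty Scott-closed family $\mathcal{C} \subseteq \Gamma L$ with $\mathcal{A} \subseteq \sup \mathcal{C}$ and aim to conclude $\mathcal{A} \in \mathcal{C}$. The key observation is that this single family $\mathcal{C}$ may be tested simultaneously against every member of $D$: for each $d \in D$ one has $d \subseteq \mathcal{A} \subseteq \sup \mathcal{C}$, and since $d$ is $C$-compact, applying $d \prec^{*} d$ to the family $\mathcal{C}$ yields $d \in \mathcal{C}$. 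Thus $D \subseteq \mathcal{C}$; as $D$ is directed and $\mathcal{C}$ is Scott-closed, the supremum $\sup D = \mathcal{A}$ lies in $\mathcal{C}$, which is precisely what was wanted. Since $\mathcal{C}$ was arbitrary, $\mathcal{A} \prec^{*} \mathcal{A}$.

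I do not expect a genuine obstacle here; the difficulty is only in noticing that the argument is uniform in $\mathcal{C}$ rather than attempting to describe $C(\Gamma L)$ explicitly. The one point to handle with care is the existence clause in the definition of $\prec^{*}$: all the suprema invoked, namely $\sup D$ and $\sup \mathcal{C}$, exist because $\Gamma L$ is complete, so that clause is automatically satisfied and $d \prec^{*} d$ is directly applicable to $\mathcal{C}$. This is exactly the content that makes the statement a corollary of \cite[Proposition~3.4]{lost}. I would also remark that nothing in the argument is special to $\Gamma L$ beyond completeness, so the same proof shows that the $C$-compact elements of any dcpo form a subdcpo.
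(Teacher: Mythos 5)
Your proof is correct. Note, though, that the paper itself does not prove this proposition at all: it simply records it as a corollary of Proposition~3.4 of the cited Ho--Zhao paper \cite{lost}, so your contribution is to supply the missing argument rather than to replicate one. The argument you give is the standard one and is watertight: for directed $D\subseteq C(\Gamma L)$ with $\mathcal{A}=\sup D$, every test family $\mathcal{C}$ (nonempty, Scott closed, with existing supremum) satisfying $\mathcal{A}\leq\sup\mathcal{C}$ also satisfies $d\leq\sup\mathcal{C}$ for each $d\in D$, so $C$-compactness of each $d$ puts $D$ inside $\mathcal{C}$, and Scott closedness of $\mathcal{C}$ then captures $\sup D=\mathcal{A}$. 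You also correctly dispose of the existence clause in the definition of $\prec^{*}$ (automatic since $\Gamma L$ is a complete lattice), and your closing remark is accurate and worth keeping: the argument never uses completeness of $\Gamma L$ beyond the existence of the directed supremum in question, so it proves the more general statement that the $C$-compact elements of an arbitrary dcpo are closed under directed suprema, which is essentially the content of the result the paper cites.
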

\begin{corollary}\cite{lost}\label{principle}
	The family of sets $\{\da x: x\in L\}$ is a subset of $C(\Gamma L)$.
\end{corollary}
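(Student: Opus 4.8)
The plan is to unfold the definition of $C$-compactness directly and reduce it to the single elementary fact that, in $\Gamma L$, the supremum of a Scott closed family of Scott closed sets is just their union. Fix $x\in L$. First I would record that $\da x$ is genuinely Scott closed: it is a lower set, and any directed $D\subseteq\ \da x$ whose supremum exists satisfies $\sup D\leq x$, so $\sup D\in\ \da x$. Thus $\da x\in\Gamma L$ and the assertion $\da x\in C(\Gamma L)$ makes sense; by definition it amounts to verifying $\da x\prec^{*}\da x$.

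So let $\mathcal{C}$ be any nonempty Scott closed subset of $\Gamma L$ with $\da x\subseteq\sup\mathcal{C}$, the supremum being taken in the complete lattice $\Gamma L$ (so existence is automatic). The key step is to show $\sup\mathcal{C}=\bigcup\mathcal{C}$, i.e. that $\bigcup\mathcal{C}$ is already Scott closed in $L$. It is clearly a lower set, since every member of $\mathcal{C}$ is. For the directed-sup condition, take a directed $D\subseteq\bigcup\mathcal{C}$ whose supremum exists in $L$. Each $d\in D$ lies in some member of $\mathcal{C}$, and since $\mathcal{C}$ is a lower set this forces $\da d\in\mathcal{C}$. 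The family $\{\da d:d\in D\}$ is directed in $\Gamma L$, its supremum there is $\cl(\da D)$, and this supremum lies in $\mathcal{C}$ because $\mathcal{C}$ is Scott closed. As $\sup D\in\cl(\da D)$, we conclude $\sup D\in\bigcup\mathcal{C}$, so $\bigcup\mathcal{C}$ is Scott closed and hence equals $\sup\mathcal{C}$.

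With this identity in hand the conclusion is immediate: $x\in\ \da x\subseteq\sup\mathcal{C}=\bigcup\mathcal{C}$, so $x\in C$ for some $C\in\mathcal{C}$. Because $C$ is a lower set we get $\da x\subseteq C$, and because $\mathcal{C}$ is a lower set in $\Gamma L$ we conclude $\da x\in\mathcal{C}$. This establishes $\da x\prec^{*}\da x$, that is, $\da x\in C(\Gamma L)$, and since $x$ was arbitrary the whole family $\{\da x:x\in L\}$ lies in $C(\Gamma L)$.

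The only non-formal ingredient is the identity $\sup\mathcal{C}=\bigcup\mathcal{C}$ for Scott closed $\mathcal{C}$, and I expect this to be the main point to get right: without it one only knows $x\in\cl(\bigcup\mathcal{C})$ rather than $x\in\bigcup\mathcal{C}$, and the argument stalls. This is exactly the type of fact already used elsewhere in the paper (it is Proposition~2.2(4) of \cite{hzp}), so I would either cite it or insert the short verification above; everything else is routine bookkeeping with lower sets.
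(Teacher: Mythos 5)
Your proof is correct. Note first that the paper offers no argument of its own for this corollary: it is imported verbatim from \cite{lost}, so there is nothing internal to compare against, and your proposal has to stand on its own --- which it does. The one substantive ingredient, that $\bigcup\mathcal{C}$ is Scott closed whenever $\mathcal{C}$ is a nonempty Scott closed subset of $\Gamma L$ (so that $\sup\mathcal{C}=\bigcup\mathcal{C}$ rather than merely $\cl\bigl(\bigcup\mathcal{C}\bigr)$), is exactly the right pivot, and your verification of it is sound: from $d\in C_d\in\mathcal{C}$ you get $\da d\subseteq C_d$ because $C_d$ is a lower set, hence $\da d\in\mathcal{C}$ because $\mathcal{C}$ is a lower set of $(\Gamma L,\subseteq)$; the family $\{\da d: d\in D\}$ is directed with supremum $\cl(\da D)$ in $\Gamma L$, which lies in $\mathcal{C}$ by Scott closedness of $\mathcal{C}$ and contains $\sup D$. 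Your observation that existence of $\sup\mathcal{C}$ is automatic (since $\Gamma L$ is a complete lattice) also correctly disposes of the ``for which $\sup C$ exists'' clause in the definition of $\prec^{*}$. Two remarks connecting this to the paper: your union identity is the same fact the authors quote as Proposition~2.2(4) of \cite{hzp} in the proof of their Theorem on $\Gamma\mathcal{Z}\cong\Gamma\hat{\mathcal{Z}}$, and your argument is the same technique the paper uses in Lemma~\ref{a2} and Proposition~\ref{a1}; importantly, Lemma~\ref{a2} is itself proved by appealing to this corollary, whereas your proof gets $\da d\in\mathcal{C}$ purely from lower-set reasoning without any appeal to $C$-compactness, so no circularity is introduced.
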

The following results is crucial for further discussion.
\begin{lemma}\label{a2}
	Let $L$ be a poset. Then $\bigcup \mathcal{A}$ is a Scott closed subset of $L$ for each $\mathcal{A}\in \Gamma(C(\Gamma L))$.
	\begin{proof}
		One sees directly that $\bigcup\mathcal{A}$ is a lower set since each element in $\mathcal{A}$ is a lower set. Let $D$ be a directed subset of $\bigcup\mathcal{A}$. Then there exists $A_{d}\in \mathcal{A}$ such that $d\in A_{d}$ for any $d\in D$. It follows that $\da d\subseteq A_{d}$ because $A_d$ is a lower set. From Corollary \ref{principle}, we know that $\da d\in C(\Gamma L)$ for any $d\in D$. The fact that $\mathcal{A}$ is a lower set in $C(\Gamma L)$ implies that $\da d\in \mathcal{A}$. This means that $\{\da d: d\in D \}$ is a directed subset of $\mathcal{A}$, which yields that $\sup_{d\in D}\da d=\da \sup D\in \mathcal{A}$. Hence, $\sup D\in \bigcup\mathcal{A}$.
	\end{proof}
\end{lemma}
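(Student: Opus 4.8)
The plan is to verify the two defining conditions of a Scott closed subset of $L$ for the set $\bigcup\mathcal{A}$: that it is a lower set, and that it contains the supremum of every directed subset of it whose supremum exists in $L$. The first condition is immediate, since each member $A\in\mathcal{A}$ is a Scott closed subset of $L$ and hence a lower set, so the union of all such $A$ is again a lower set.

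For the second condition the decisive move is to replace the arbitrary witnessing sets by principal ideals. I would start with a directed subset $D\subseteq\bigcup\mathcal{A}$ for which $\sup D$ exists in $L$, and for each $d\in D$ select some $A_{d}\in\mathcal{A}$ with $d\in A_{d}$. Since $A_{d}$ is a lower set we obtain $\da d\subseteq A_{d}$, that is $\da d\leq A_{d}$ in the inclusion order on $\Gamma L$. Now Corollary~\ref{principle} guarantees $\da d\in C(\Gamma L)$, and because $\mathcal{A}$ is Scott closed in $C(\Gamma L)$ it is in particular a lower set there; as $A_{d}\in\mathcal{A}$ and $\da d\leq A_{d}$, this forces $\da d\in\mathcal{A}$. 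Hence the whole family $\{\da d: d\in D\}$ lies inside $\mathcal{A}$, and it is directed in the inclusion order because $D$ is directed in $L$.

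It then remains to compute the supremum of this directed family and identify it as $\da\sup D$. I would argue that $\sup_{d\in D}\da d=\da\sup D$, with the supremum taken in $\Gamma L$: the set $\da\sup D$ is Scott closed and is an upper bound of every $\da d$, while any Scott closed upper bound $B$ satisfies $D\subseteq B$, hence contains $\sup D$ and therefore $\da\sup D$, which gives minimality. By Proposition~\ref{subdcpo100}, $C(\Gamma L)$ is a subdcpo of $\Gamma L$, so this supremum is also the supremum of $\{\da d: d\in D\}$ computed inside $C(\Gamma L)$. Since $\{\da d: d\in D\}$ is a directed subset of the Scott closed set $\mathcal{A}$, Scott closedness yields $\da\sup D\in\mathcal{A}$, whence $\sup D\in\da\sup D\subseteq\bigcup\mathcal{A}$, as required.

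The main obstacle, and essentially the only subtle point, is the passage from the unspecified sets $A_{d}$ to the principal ideals $\da d$: one needs both that each $\da d$ is $C$-compact (Corollary~\ref{principle}) and that the directed supremum of these ideals, taken in the subdcpo $C(\Gamma L)$ rather than merely in $L$, coincides with $\da\sup D$ (Proposition~\ref{subdcpo100}). Making these two facts interact correctly is precisely what allows the Scott closedness of $\mathcal{A}$ to be applied to conclude that $\sup D$ is captured by $\bigcup\mathcal{A}$.
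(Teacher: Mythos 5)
Your proposal is correct and follows essentially the same route as the paper's proof: reduce to principal ideals via Corollary~\ref{principle}, use that $\mathcal{A}$ is a lower set to get $\da d\in\mathcal{A}$, and then apply Scott closedness of $\mathcal{A}$ to the directed family $\{\da d : d\in D\}$ whose supremum is $\da\sup D$. The only difference is that you spell out the verification that $\sup_{d\in D}\da d=\da\sup D$ and explicitly invoke Proposition~\ref{subdcpo100} to place this supremum inside $C(\Gamma L)$, details the paper leaves implicit.
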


\begin{definition}
	Let $L$ be a poset. A set $B$ is said to be a \emph{$C$-compact set}, if $cl(B)$ is a $C$-compact element of $\Gamma L$.  
\end{definition}

C-compact sets are preserved by Scott-continuous functions. 
\begin{lemma}\label{ccompact}
	Let $L,P$ be two posets, if the function $f:L\rightarrow P$ is Scott continuous, then $f(A)$ is a $C$-compact set of $P$ for any $A\in C(\Gamma L)$. 
	\begin{proof}
We prove that $cl(f(A))\prec^{*}cl(f(A))$. To this end, let $\mathcal{A}\in \Gamma(\Gamma P)$ with $cl(f(A))\subseteq \bigcup \mathcal{A}$. Then $A\subseteq \bigcup_{B\in \mathcal{A}}f^{-1}(B)$. We write $\mathcal{B}=\da \{f^{-1}(B): B\in \mathcal{A}\}$. 
		
		We claim that $\mathcal{B}$ is a Scott closed set of $\Gamma L$. Obviously, $\mathcal{B}$ is a lower set. Let $(A_{i})_{i\in I}$ be a directed subset of $\mathcal{B}$. Then we know that there is  $B_{i}\in \mathcal{A}$ with $A_{i}\subseteq f^{-1}(B_{i})$ for each $i\in I$. It follows that $f(A_{i})\subseteq B_{i}$, which means that $cl(f(A_{i}))\subseteq B_{i}\in \mathcal{A}$ for any $i\in I$. The fact that $\mathcal{A}$ is a lower set in $\Gamma P$ tells us that $cl(f(A_{i}))\in \mathcal{A}$ for each $i\in I$. Then we have $cl(f(A_{i}))_{i\in I}$ is a directed subset of $\mathcal{A}$. As $\mathcal{A}$ is Scott closed,  it follows  that $\sup_{i\in I}cl(f(A_{i}))=cl(f(\bigcup_{i\in I}A_{i}))\in \mathcal{A}$. Thus, $\sup_{i\in I}A_{i}=cl(\bigcup_{i\in I}A_{i})\in \da \{f^{-1}(B)\mid B\in \mathcal{A}\}$.
		
		It is easy to see that $A\subseteq \sup \mathcal{B}=\bigcup \mathcal{B}$. Then the assumption that $A$ is a $C$-compact element of $\Gamma L$ guarantees the existence of $B\in \mathcal{A}$ such that $A\subseteq f^{-1}(B)$. So we conclude $cl(f(A))\subseteq B\in \mathcal{A}$. Again, as $\mathcal{A}$ is Scott closed, hence a lower set, we obtain that $cl(f(A))\in \mathcal{A}$. As a result, $f(A)$ is a $C$-compact set of $P$. 
	\end{proof}
\end{lemma}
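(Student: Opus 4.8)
The plan is to unwind the definition of $C$-compactness in $\Gamma P$: to show that $f(A)$ is a $C$-compact set I must verify that $\cl(f(A))\prec^{*}\cl(f(A))$ as an element of $\Gamma P$. Concretely, I would fix an arbitrary $\mathcal{A}\in \Gamma(\Gamma P)$ with $\cl(f(A))\leq \sup\mathcal{A}$ and aim to deduce $\cl(f(A))\in \mathcal{A}$. Since $\mathcal{A}$ is Scott closed in the complete lattice $\Gamma P$, its supremum is simply the union $\bigcup\mathcal{A}$ (the content of Proposition~2.2(4) in \cite{hzp}), so the hypothesis reads $\cl(f(A))\subseteq \bigcup\mathcal{A}$; pulling back along $f$ then gives $A\subseteq \bigcup_{B\in \mathcal{A}}f^{-1}(B)$.

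The crux is to manufacture, out of $\mathcal{A}$, a Scott closed subset of $\Gamma L$ against which the $C$-compactness of $A$ can be tested. To this end I would set $\mathcal{B}=\da\{f^{-1}(B): B\in \mathcal{A}\}$ inside $\Gamma L$, which makes sense because the Scott continuity of $f$ guarantees that each $f^{-1}(B)$ is Scott closed. The containment $A\subseteq \bigcup\mathcal{B}$ is then immediate from the previous paragraph.

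The main obstacle---and the step that genuinely uses every hypothesis---is showing that $\mathcal{B}$ is Scott closed in $\Gamma L$. It is a lower set by construction, so the real work is closure under directed suprema. Given a directed family $(A_{i})_{i\in I}$ in $\mathcal{B}$, each $A_{i}\subseteq f^{-1}(B_{i})$ for some $B_{i}\in \mathcal{A}$, whence $\cl(f(A_{i}))\subseteq B_{i}$ and, as $\mathcal{A}$ is a lower set, $\cl(f(A_{i}))\in \mathcal{A}$. The family $(\cl(f(A_{i})))_{i\in I}$ is directed, and here Scott continuity pays off a second time, yielding $\sup_{i\in I}\cl(f(A_{i}))=\cl(f(\bigcup_{i\in I}A_{i}))$. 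Because $\mathcal{A}$ is Scott closed this supremum lies in $\mathcal{A}$, and then $\sup_{i\in I}A_{i}=\cl(\bigcup_{i\in I}A_{i})\subseteq f^{-1}(\cl(f(\bigcup_{i\in I}A_{i})))$, so $\sup_{i\in I}A_{i}\in \mathcal{B}$.

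Finally, with $\mathcal{B}\in \Gamma(\Gamma L)$ and $A\subseteq \bigcup\mathcal{B}=\sup\mathcal{B}$ in hand, the hypothesis $A\in C(\Gamma L)$, i.e. $A\prec^{*}A$, forces $A\in \mathcal{B}$; that is, $A\subseteq f^{-1}(B)$ for some $B\in \mathcal{A}$. Taking closures gives $\cl(f(A))\subseteq B$, and a last appeal to $\mathcal{A}$ being a lower set delivers $\cl(f(A))\in \mathcal{A}$, which is exactly what was to be proved.
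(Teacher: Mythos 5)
Your proposal is correct and follows essentially the same route as the paper's own proof: the same auxiliary family $\mathcal{B}=\da\{f^{-1}(B): B\in \mathcal{A}\}$, the same argument that $\mathcal{B}$ is Scott closed via the directed family $(\cl(f(A_i)))_{i\in I}$ in $\mathcal{A}$, and the same concluding appeal to $A\prec^{*}A$. If anything, you make explicit two steps the paper leaves implicit (that each $f^{-1}(B)$ is Scott closed by continuity, and that $\cl(\bigcup_{i}A_{i})\subseteq f^{-1}(\cl(f(\bigcup_{i}A_{i})))$), though note the identity $\sup_{i}\cl(f(A_{i}))=\cl(f(\bigcup_{i}A_{i}))$ is really just images commuting with unions plus closure, not a second use of Scott continuity.
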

\begin{lemma}\label{a3}
	Let $L$ be a poset and $\mathcal{B}\in \Gamma(C(\Gamma L))$. If $\mathcal{B}$ is closed under suprema of $C$-compact sets, then $\mathcal{B}=\Box \bigcup\mathcal{B}=\{A\in C(\Gamma L): A\subseteq \bigcup\mathcal{B}\}$.
	\begin{proof}
That $\mathcal{B}\subseteq \Box \bigcup\mathcal{B}$ is trivial. Conversely, for any $A\in \Box \bigcup\mathcal{B}$, similar to the proof of Lemma \ref{a2}, we know that $\{\da x: x\in A\}$ is a subset of $\bigcup\mathcal{B}$. Then the function $\eta=(x\mapsto \da x):L\rightarrow C(\Gamma L)$ is well-defined and Scott continuous. Then $\eta(A)=\{\da x: x\in A\}$ is a $C$-compact set of $C(\Gamma L)$ by Lemma \ref{ccompact}. The assumption that $\mathcal{B}$ is closed under suprema of $C$-compact sets ensures  that $A=\sup_{x\in A}\da x=\sup \eta(A)\in \mathcal{B}$.
	\end{proof}
\end{lemma}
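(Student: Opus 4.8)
The plan is to prove the two set inclusions making up the asserted equality separately, since one is immediate while the other carries all the content. For the easy direction, if $A\in\mathcal{B}$ then $A\subseteq\bigcup\mathcal{B}$ by the very definition of the union, and as $A$ is already a $C$-compact element of $\Gamma L$ this places $A$ in $\Box\bigcup\mathcal{B}$; hence $\mathcal{B}\subseteq\Box\bigcup\mathcal{B}$. Everything therefore reduces to the reverse inclusion $\Box\bigcup\mathcal{B}\subseteq\mathcal{B}$, and to prove it I would exploit the approximation of $A$ from below by principal ideals together with the preservation result of Lemma~\ref{ccompact}.

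First I would fix an arbitrary $A\in\Box\bigcup\mathcal{B}$, so that $A\in C(\Gamma L)$ and $A\subseteq\bigcup\mathcal{B}$, and show that every principal ideal generated by a point of $A$ already lies in $\mathcal{B}$. Indeed, for each $x\in A$ there is some $B\in\mathcal{B}$ with $x\in B$; since $B$ is a lower set this gives $\da x\subseteq B$, and since $\mathcal{B}$ is Scott closed in $C(\Gamma L)$ it is in particular a lower set, whence $\da x\in\mathcal{B}$. This repeats verbatim the argument used in Lemma~\ref{a2}, and it relies on Corollary~\ref{principle} to guarantee $\da x\in C(\Gamma L)$ in the first place. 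Consequently $\{\da x:x\in A\}\subseteq\mathcal{B}$.

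Next I would repackage these ideals as the image of $A$ under the map $\eta=(x\mapsto\da x):L\rightarrow C(\Gamma L)$. By Corollary~\ref{principle} the map is well-defined, and a routine check (monotonicity together with $\cl(\da D)=\da\sup D$ for directed $D$) shows it is Scott continuous. Since $A\in C(\Gamma L)$, Lemma~\ref{ccompact} applied with codomain $P=C(\Gamma L)$ tells us that $\eta(A)=\{\da x:x\in A\}$ is a $C$-compact set of $C(\Gamma L)$. Finally, computing in the complete lattice $\Gamma L$ we have $\sup_{x\in A}\da x=\cl\bigl(\bigcup_{x\in A}\da x\bigr)=\cl(A)=A$, and because $A$ itself lies in $C(\Gamma L)$ this is also the supremum of $\eta(A)$ taken inside $C(\Gamma L)$. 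The hypothesis that $\mathcal{B}$ is closed under suprema of $C$-compact sets, applied to the $C$-compact set $\eta(A)\subseteq\mathcal{B}$, then forces $A=\sup\eta(A)\in\mathcal{B}$, completing the inclusion.

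The step I expect to require the most care is the invocation of Lemma~\ref{ccompact}: one must keep straight the double role played by $A$ — as a $C$-compact element of $\Gamma L$ (a point on the domain side of $\eta$) and as the set of points over which $\eta$ is evaluated — and one must confirm that the supremum delivered by the closure hypothesis is genuinely $A$ rather than merely some upper bound of the ideals $\da x$. The remaining steps are bookkeeping with lower sets and the definitions already recorded.
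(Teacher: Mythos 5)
Your proof is correct and follows essentially the same route as the paper's: both establish the trivial inclusion, place each principal ideal $\da x$ ($x\in A$) in $\mathcal{B}$ via Corollary~\ref{principle} and the lower-set property of $\mathcal{B}$, invoke Lemma~\ref{ccompact} on the map $\eta=(x\mapsto\da x)$ to see that $\eta(A)$ is a $C$-compact set of $C(\Gamma L)$, and conclude $A=\sup\eta(A)\in\mathcal{B}$ from the closure hypothesis. Your explicit verification that $\sup\eta(A)$, computed in $\Gamma L$ as $\cl\bigl(\bigcup_{x\in A}\da x\bigr)=A$, is also the supremum in $C(\Gamma L)$ is a detail the paper leaves implicit, and it is handled correctly.
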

By applying Lemma \ref{a2} and Lemma \ref{a3}, we have the following lemma. 
\begin{lemma}
	For arbitrary dcpo's D and E, $C(\Gamma D)\cong C(\Gamma E)$ if and only if  $\Gamma D\cong \Gamma E$.
\end{lemma}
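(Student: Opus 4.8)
The plan is to prove both implications by reconstructing $\Gamma L$, in a purely order-theoretic fashion, from the poset $C(\Gamma L)$, for every dcpo $L$. For the easy implication, assume $\Gamma D\cong\Gamma E$. Since $C$-compactness of an element is defined solely through the order and the Scott topology of the ambient complete lattice (via the relation $\prec^{*}$ and suprema of Scott closed sets), any order isomorphism $\Gamma D\to\Gamma E$ sends $C$-compact elements to $C$-compact elements and reflects them; hence it restricts to an isomorphism $C(\Gamma D)\cong C(\Gamma E)$.

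For the converse I would introduce, for an arbitrary poset $P$, the family
\[
\mathcal{W}(P)=\{\mathcal{B}\in\Gamma P:\ \mathcal{B}\ \text{is closed under suprema of }C\text{-compact sets}\},
\]
ordered by inclusion, and establish the reconstruction isomorphism $\Gamma L\cong\mathcal{W}(C(\Gamma L))$ through the two inclusion-monotone maps $\Psi\colon A\mapsto\Box A$ and $\Phi\colon\mathcal{B}\mapsto\bigcup\mathcal{B}$. Lemma~\ref{a2} makes $\Phi$ well defined, Corollary~\ref{principle} gives $\Phi\Psi=\mathrm{id}$ (because $\da x\in C(\Gamma L)$ forces $\bigcup\Box A=A$), and Lemma~\ref{a3} gives $\Psi\Phi=\mathrm{id}$ on $\mathcal{W}(C(\Gamma L))$. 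The decisive point is that the image of $\Psi$ coincides with the \emph{intrinsically} defined class $\mathcal{W}(C(\Gamma L))$. Granting this, Scott closedness, the notion of a $C$-compact set, and suprema are all invariants of the order of $C(\Gamma L)$, so an isomorphism $C(\Gamma D)\cong C(\Gamma E)$ induces $\mathcal{W}(C(\Gamma D))\cong\mathcal{W}(C(\Gamma E))$; chaining the two reconstruction isomorphisms then yields $\Gamma D\cong\Gamma E$.

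The main obstacle is precisely that $\Psi$ lands in $\mathcal{W}(C(\Gamma L))$, i.e. that each $\Box A$ is closed under suprema of $C$-compact sets. This is the one place where suprema computed in the subdcpo $C(\Gamma L)$ must be reconciled with unions computed in $\Gamma L$. I would first verify that the union map $\Phi$ is Scott continuous: monotonicity is immediate, and for a directed family the nontrivial inclusion follows because $\Box W$ is Scott closed in $C(\Gamma L)$ for every $W\in\Gamma L$. Then, given a $C$-compact set $S\subseteq\Box A$ with $\mathcal{S}=\cl(S)$, I would show $\bigcup\mathcal{S}\in C(\Gamma L)$, i.e. $\bigcup\mathcal{S}\prec^{*}\bigcup\mathcal{S}$ in $\Gamma L$. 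Let $\mathcal{C}\subseteq\Gamma L$ be nonempty Scott closed with $\bigcup\mathcal{S}\subseteq\sup\mathcal{C}$. Each member $B\in\mathcal{S}$ is itself $C$-compact and satisfies $B\subseteq\bigcup\mathcal{S}\subseteq\sup\mathcal{C}$, so $B\in\mathcal{C}$ by the very definition of $\prec^{*}$; thus $\mathcal{S}\subseteq\mathcal{C}\cap C(\Gamma L)$, which is exactly the supremum in $\Gamma(C(\Gamma L))$ of the Scott closed family $\Phi^{-1}(\mathcal{C})$. Applying the $C$-compactness of $\mathcal{S}$ in $\Gamma(C(\Gamma L))$ to this family gives $\mathcal{S}\in\Phi^{-1}(\mathcal{C})$, that is $\bigcup\mathcal{S}\in\mathcal{C}$. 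This proves $\bigcup\mathcal{S}\in C(\Gamma L)$, whence $\sup_{C(\Gamma L)}S=\bigcup\mathcal{S}=\cl(\bigcup S)\subseteq A$, so $\sup S\in\Box A$ as required.

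Proposition~\ref{subdcpo100} (the subdcpo property) and Lemma~\ref{ccompact} are the structural inputs that drive the reconciliation in the last paragraph, and together with Lemmas~\ref{a2} and~\ref{a3} they close the argument. Everything else, namely the monotonicity of $\Psi$ and $\Phi$ and the two composite identities, is routine and I would not belabor it.
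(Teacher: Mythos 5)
Your proposal is correct and takes essentially the same route as the paper: chaining your reconstruction isomorphisms $\Gamma D\cong\mathcal{W}(C(\Gamma D))\cong\mathcal{W}(C(\Gamma E))\cong\Gamma E$ yields exactly the paper's map $A\mapsto\bigcup\{g(K)\mid K\in C(\Gamma D),\ K\subseteq A\}$, resting on the same inputs (Lemma \ref{a2}, Lemma \ref{a3}, Corollary \ref{principle}, Proposition \ref{subdcpo100}). The one point you prove from scratch --- that $\Box A$ is closed under suprema of $C$-compact sets, via applying $C$-compactness of $\mathcal{S}$ to $\Phi^{-1}(\mathcal{C})$ --- is precisely the paper's Proposition \ref{a1}, and your family $\Phi^{-1}(\mathcal{C})$ coincides with the family $\da\{\Box B\mid B\in\mathcal{C}\}$ used in its proof, so even that verification matches the paper's argument.
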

\begin{proof}
Assume that $g \colon C(\Gamma D)\to C(\Gamma E)$ is an isomorphism. Then the map $g' \colon \Gamma D\to \Gamma E$ that sends a Scott closed subset $A$ of $D$ to 
$\bigcup\{ g(K) \mid K\in C(\Gamma D)~\text{and}~K\subseteq A \}$ witnesses an isomorphism between $\Gamma D$ and $\Gamma E$. 
\end{proof}

\begin{definition}
	Given $A,B\in C(\Gamma P)$, we write \emph{$A\lhd B$} if there is $x \in B$ such that $A\subseteq \da x$. We write $\nabla B$ for the set $\{A\in C(\Gamma P): A\lhd B\}$.
\end{definition}
\begin{lemma}\label{a0}
	Let $L$ be a dcpo and $A \in C(\Gamma L)$.
	
	$(1)$ $A = \sup \nabla A$.
	
	$(2)$ $\nabla A$ is a $C$-compact set of $C(\Gamma L)$.
	\begin{proof}
		The first statement is trivial because $\da x\lhd A$ for any $x\in A$. 
		
		For the second statement, it suffices to prove that $cl(\nabla A)\prec^{*}cl(\nabla A)$. To this end, let $\mathcal{A}^{2}\in \Gamma(\Gamma(C(\Gamma L)))$ with $cl(\nabla A)\subseteq \bigcup \mathcal{A}^{2}$.
		
		$\mathbf{Claim~1}$: $A\subseteq \bigcup_{\mathcal{A}\in \mathcal{A}^{2}}\bigcup \mathcal{A}$.
		
		For any $x\in A$, $\da x\in \nabla A\subseteq \bigcup \mathcal{A}^{2}$. Then there is $\mathcal{A}_{x}\in \mathcal{A}^{2}$ such that $\da x\in \mathcal{A}_{x}$ for each $x\in A$. It follows that $x\in \bigcup\mathcal{A}_x$, and we have $A\subseteq \bigcup_{\mathcal{A}\in \mathcal{A}^{2}}\bigcup \mathcal{A}$.
		
		$\mathbf{Claim~2}$: $\mathcal{B}=\da \{ \bigcup \mathcal{A}\mid \mathcal{A}\in \mathcal{A}^{2}\}\in \Gamma(\Gamma L)$.
		
		It is obvious that $\mathcal{B}$ is a lower set. Let $(B_{i})_{i\in I}$ be a directed subset of $\mathcal{B}$. Then there exists $\mathcal{A}_{i}\in \mathcal{A}^{2}$ with $B_{i}\subseteq \bigcup\mathcal{A}_{i}$ for every $i\in I$. By Corollary \ref{principle}, we know that $\{\da x\mid x\in B_{i}\}\subseteq \mathcal{A}_{i}$ for every $i\in I$. It follows that the Scott closure $\mathcal{B}_{i}=cl(\{\da x\mid x\in B_{i}\})$, taken inside $C(\Gamma L)$, is contained in $\mathcal{A}_{i}$. Then $\mathcal{B}_{i}\in \mathcal{A}^{2}$ as $\mathcal{A}^{2}$ is a lower set. This deduces that $\sup_{i\in I}\mathcal{B}_{i}=cl(\bigcup_{i\in I}\mathcal{B}_{i})\in \mathcal{A}^{2}$ since $(\mathcal{B}_i)_{i\in I}$ is a directed subset of $\mathcal{A}^2$. It turns out that $cl(\bigcup_{i\in I}B_{i})\subseteq \bigcup cl(\bigcup_{i\in I}\mathcal{B}_{i})$. Therefore, $\sup_{i\in I}B_{i}=cl(\bigcup_{i\in I}B_{i})\in \mathcal{B}$. In a conclusion, $\mathcal{B}$ is Scott closed in $\Gamma L$.
		
		Since $A$ is a $C$-compact element of $\Gamma L$ and $A\subseteq \bigcup \mathcal{B}$, that $A\in \mathcal{B}$ holds. This means that there exists $\mathcal{A}\in \mathcal{A}^{2}$ such that $A\subseteq \bigcup \mathcal{A}$. It turns out that $\{\da x\mid x\in A\}\subseteq \mathcal{A}$, which yields that $\nabla A\subseteq \mathcal{A}$. So $cl(\nabla A)\in \mathcal{A}^{2}$.
	\end{proof}
\end{lemma}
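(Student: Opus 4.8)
The plan is to treat the two items separately, since part (1) is essentially formal while part (2) carries all the weight. For (1), I would first note that the family $\{\da x : x\in A\}$ is contained in $\nabla A$: each $\da x$ is $C$-compact by Corollary \ref{principle}, and $\da x\subseteq \da x$ exhibits the witness $x\in A$ required for $\da x\lhd A$. Because $A$ is a lower set, $A=\bigcup_{x\in A}\da x=\sup\{\da x: x\in A\}$, so $A\le \sup\nabla A$. Conversely every $K\in\nabla A$ satisfies $K\subseteq\da x\subseteq A$ for some $x\in A$, making $A$ an upper bound of $\nabla A$; hence $\sup\nabla A=A$.

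For (2), the goal is to show that $cl(\nabla A)$ is a $C$-compact element of the dcpo $\Gamma(C(\Gamma L))$, i.e. $cl(\nabla A)\prec^{*}cl(\nabla A)$. I would unwind the definition of $\prec^{*}$ one level up: fix an arbitrary $\mathcal{A}^{2}\in\Gamma(\Gamma(C(\Gamma L)))$ with $cl(\nabla A)\subseteq\bigcup\mathcal{A}^{2}$, and aim to deduce $cl(\nabla A)\in\mathcal{A}^{2}$. The central idea is to descend back down to $L$ and exploit the \emph{given} $C$-compactness of $A$ itself. So I would first establish that $A$ is covered, namely $A\subseteq\bigcup_{\mathcal{A}\in\mathcal{A}^{2}}\bigcup\mathcal{A}$: for each $x\in A$ we have $\da x\in\nabla A\subseteq\bigcup\mathcal{A}^{2}$, so $\da x$ lies in some $\mathcal{A}_{x}\in\mathcal{A}^{2}$ and therefore $x\in\bigcup\mathcal{A}_{x}$.

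The crux of the argument is then to manufacture a suitable Scott closed set at the level of $\Gamma L$ to which $C$-compactness of $A$ can be applied. I would set $\mathcal{B}=\da\{\bigcup\mathcal{A}: \mathcal{A}\in\mathcal{A}^{2}\}$ inside $\Gamma L$ (each $\bigcup\mathcal{A}$ being Scott closed by Lemma \ref{a2}) and prove $\mathcal{B}\in\Gamma(\Gamma L)$. It is automatically a lower set; the work is in directed closedness. Given a directed family $(B_{i})_{i\in I}$ in $\mathcal{B}$, I would pick $\mathcal{A}_{i}\in\mathcal{A}^{2}$ with $B_{i}\subseteq\bigcup\mathcal{A}_{i}$, use Corollary \ref{principle} to see $\{\da x: x\in B_{i}\}\subseteq\mathcal{A}_{i}$, pass to the closure $\mathcal{B}_{i}=cl(\{\da x: x\in B_{i}\})\in\mathcal{A}_{i}\subseteq\mathcal{A}^{2}$ (using that $\mathcal{A}^{2}$ is a lower set), and then invoke Scott closedness of $\mathcal{A}^{2}$ on the directed family $(\mathcal{B}_{i})_{i}$ to obtain $\sup_{i}\mathcal{B}_{i}=cl(\bigcup_{i}\mathcal{B}_{i})\in\mathcal{A}^{2}$, whence $\sup_{i}B_{i}=cl(\bigcup_{i}B_{i})\in\mathcal{B}$. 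I expect this verification to be the \textbf{main obstacle}: it requires reconciling directed suprema computed in $\Gamma L$ with those computed one level higher in $\Gamma(C(\Gamma L))$, mediated by the embedding $x\mapsto\da x$ and by Lemma \ref{a2}, and getting the two closure operations to line up is the delicate point.

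Once $\mathcal{B}$ is known to be Scott closed, the coverage inequality gives $A\subseteq\bigcup\mathcal{B}=\sup\mathcal{B}$, and since $A$ is $C$-compact in $\Gamma L$ I conclude $A\in\mathcal{B}$, i.e. $A\subseteq\bigcup\mathcal{A}$ for some $\mathcal{A}\in\mathcal{A}^{2}$. For each $x\in A$ we then have the $C$-compact set $\da x\subseteq\bigcup\mathcal{A}=\sup\mathcal{A}$, so $\da x\in\mathcal{A}$; as $\mathcal{A}$ is a lower set and every $K\in\nabla A$ sits below some such $\da x$, this forces $\nabla A\subseteq\mathcal{A}$ and hence $cl(\nabla A)\subseteq\mathcal{A}$. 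Finally, $\mathcal{A}\in\mathcal{A}^{2}$ together with $\mathcal{A}^{2}$ being a lower set yields $cl(\nabla A)\in\mathcal{A}^{2}$, completing the proof that $\nabla A$ is a $C$-compact set of $C(\Gamma L)$.
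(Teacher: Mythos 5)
Your proposal is correct and follows essentially the same route as the paper's own proof: the same reduction to showing $cl(\nabla A)\prec^{*}cl(\nabla A)$, the same covering claim $A\subseteq\bigcup_{\mathcal{A}\in\mathcal{A}^{2}}\bigcup\mathcal{A}$, and the same auxiliary family $\mathcal{B}=\da\{\bigcup\mathcal{A}:\mathcal{A}\in\mathcal{A}^{2}\}$ shown to be Scott closed in $\Gamma L$ so that the $C$-compactness of $A$ can be applied. One small repair to your wording at the end: since $\mathcal{A}$ is Scott closed in $C(\Gamma L)$ rather than in $\Gamma L$, you should not deduce $\da x\in\mathcal{A}$ by invoking $C$-compactness of $\da x$ against $\sup\mathcal{A}$, but simply observe that $x\in\bigcup\mathcal{A}$ puts $x$ in some $B\in\mathcal{A}$, whence $\da x\subseteq B$ and $\da x\in\mathcal{A}$ because $\da x\in C(\Gamma L)$ and $\mathcal{A}$ is a lower set of $C(\Gamma L)$.
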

\begin{definition}
	A dcpo $L$ is called \emph{$C$-compactly complete} if every $C$-compact subset of $L$ has a supremum.
\end{definition}
\begin{proposition}\label{a1}
	Let $L$ be a dcpo. Then $(C(\Gamma L),\subseteq )$ is $C$-compactly complete. Especially, $\sup \mathcal{A}=\bigcup \mathcal{A}$ for any $\mathcal{A}\in C(\Gamma (C(\Gamma L)))$.
	\begin{proof}
		It suffices to prove that $\bigcup \mathcal{A}\in C(\Gamma L)$. To this end, let $\mathcal{B}\in \Gamma(\Gamma L)$ with $\bigcup \mathcal{A}\subseteq \bigcup \mathcal{B}$. Then $A\subseteq \bigcup\mathcal{B}$ for any $A\in \mathcal{A}$. The fact that $A$ is a $C$-compact element of $\Gamma L$ implies that $A\in \mathcal{B}$. This means that $\mathcal{A}\subseteq \bigcup_{B\in \mathcal{B}}\Box B$, where $\Box B=\{D\in C(\Gamma L)\mid D\subseteq B\}$. Ones sees obviously that $\Box B$ is a Scott closed set of $C(\Gamma L)$.
		
		Now we claim that $\mathcal{B}^{2}=\da \{\Box B\mid B\in \mathcal{B}\}$ is a Scott closed set of $\Gamma(C(\Gamma L))$. Clearly, $\mathcal{B}^{2}$ is a lower set. Let $(\mathcal{A}_{i})_{i\in I}$ be a directed subset of $\mathcal{B}^{2}$. Then there exists $B_{i}\in \mathcal{B}$ with $\mathcal{A}_{i}\subseteq \Box B_{i}$ for any $i\in I$. This means that $\bigcup\mathcal{A}_{i}\subseteq B_{i}$ for any $i\in{I}$. By Lemma~\ref{a2}, we know that the set $\bigcup\mathcal{A}_i$ is Scott closed for $i\in I$, which leads to that $\bigcup\mathcal{A}_{i}\in \mathcal{B}$,  as $\mathcal{B}$ is Scott closed. It follows that $(\bigcup\mathcal{A}_{i})_{i\in I}$ is a directed subset of $\mathcal{B}$. So $\sup _{i\in I}\bigcup \mathcal{A}_{i}=cl(\bigcup_{i\in I}\bigcup \mathcal{A}_{i})\in \mathcal{B}$. This implies that $\sup_{i\in I}\mathcal{A}_i\subseteq \Box (cl(\bigcup_{i\in I}\bigcup \mathcal{A}_{i}))$.
		So, $\sup_{i\in I}\mathcal{A}_{i}\in\mathcal{B}^{2}$.
		
		Note that $\mathcal{A}\subseteq \bigcup \mathcal{B}^{2}$. Then $\mathcal{A}\in \mathcal{B}^{2}$ from the fact that $\mathcal{A}\in C(\Gamma(C(\Gamma L)))$. So there exists $B\in \mathcal{B}$ with $\mathcal{A}\subseteq \Box B$, and then  $\bigcup \mathcal{A}\subseteq B$. Thus $\bigcup \mathcal{A}\in \mathcal{B}$ by Scott closedness of $\mathcal{B}$.
	\end{proof}
\end{proposition}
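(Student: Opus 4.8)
The plan is to derive the whole proposition from the single identity $\sup\mathcal{A}=\bigcup\mathcal{A}$ for $\mathcal{A}\in C(\Gamma(C(\Gamma L)))$. The reduction to that identity is routine: if $\mathcal{S}\subseteq C(\Gamma L)$ is an arbitrary $C$-compact subset, then $\cl(\mathcal{S})\in C(\Gamma(C(\Gamma L)))$ and $\mathcal{S}$ and $\cl(\mathcal{S})$ have exactly the same upper bounds in $C(\Gamma L)$, so $\sup\mathcal{S}=\sup\cl(\mathcal{S})=\bigcup\cl(\mathcal{S})$ exists as soon as the identity is available. Hence everything rests on showing, for $\mathcal{A}\in C(\Gamma(C(\Gamma L)))$, that $\bigcup\mathcal{A}$ is an element of $C(\Gamma L)$ and is the least upper bound of $\mathcal{A}$.

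Leastness is immediate once membership $\bigcup\mathcal{A}\in C(\Gamma L)$ is secured: $\bigcup\mathcal{A}$ contains every $A\in\mathcal{A}$, and any upper bound $D\in C(\Gamma L)$ satisfies $A\subseteq D$ for all $A$, hence $\bigcup\mathcal{A}\subseteq D$. So the crux is the membership, which I would split into two parts. Scott-closedness of $\bigcup\mathcal{A}$ in $L$ is precisely Lemma~\ref{a2}, since $\mathcal{A}\in\Gamma(C(\Gamma L))$. It then remains to verify the $C$-compactness relation $\bigcup\mathcal{A}\prec^{*}\bigcup\mathcal{A}$ inside $\Gamma L$.

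To establish this, I would fix a Scott closed $\mathcal{B}\subseteq\Gamma L$ with $\bigcup\mathcal{A}\subseteq\sup\mathcal{B}$; since $L$ is a dcpo the union of a Scott closed family of Scott closed sets is again Scott closed, so $\sup\mathcal{B}=\bigcup\mathcal{B}$ and the hypothesis reads $\bigcup\mathcal{A}\subseteq\bigcup\mathcal{B}$. The idea is to lift the problem one level higher. For $B\in\Gamma L$ set $\Box B=\{D\in C(\Gamma L):D\subseteq B\}$, which is a Scott closed subset of $C(\Gamma L)$, and put $\mathcal{B}^{2}=\da\{\Box B:B\in\mathcal{B}\}$ inside $\Gamma(C(\Gamma L))$. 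First I would check that $\mathcal{B}^{2}$ is Scott closed, using Lemma~\ref{a2} to see that the union of any directed family drawn from $\mathcal{B}^{2}$ is Scott closed and therefore lands back in $\mathcal{B}$. Next I would show $\mathcal{A}\subseteq\bigcup\mathcal{B}^{2}$: each $A\in\mathcal{A}$ is $C$-compact in $\Gamma L$ and satisfies $A\subseteq\bigcup\mathcal{B}=\sup\mathcal{B}$, so $A\prec^{*}A$ gives $A\in\mathcal{B}$, whence $A\in\Box A\subseteq\bigcup\mathcal{B}^{2}$. Now the $C$-compactness of $\mathcal{A}$ as an element of $\Gamma(C(\Gamma L))$ applies: from $\mathcal{A}\subseteq\bigcup\mathcal{B}^{2}=\sup\mathcal{B}^{2}$ we conclude $\mathcal{A}\in\mathcal{B}^{2}$, i.e.\ $\mathcal{A}\subseteq\Box B$ for some $B\in\mathcal{B}$. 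Then $\bigcup\mathcal{A}\subseteq B$, and since $\mathcal{B}$ is a lower set and $\bigcup\mathcal{A}\in\Gamma L$, this forces $\bigcup\mathcal{A}\in\mathcal{B}$, which is exactly what $\bigcup\mathcal{A}\prec^{*}\bigcup\mathcal{A}$ demands.

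I expect the main obstacle to be this ``double power'' step, namely recognising that the argument cannot be closed at the level of $\Gamma L$: knowing each $A\in\mathcal{A}$ lies in $\mathcal{B}$ only yields $\mathcal{A}\subseteq\mathcal{B}$, not the required $\bigcup\mathcal{A}\in\mathcal{B}$. The gap is bridged by encoding the inclusion ``$D\subseteq B$'' as the Scott closed set $\Box B$ in $C(\Gamma L)$ and then invoking the $C$-compactness of $\mathcal{A}$ in $\Gamma(C(\Gamma L))$. The accompanying technical nuisance is the verification that $\mathcal{B}^{2}$ is genuinely Scott closed, which is where Lemma~\ref{a2}—together with the fact, via Proposition~\ref{subdcpo100}, that $C(\Gamma L)$ is itself a dcpo—does the real work.
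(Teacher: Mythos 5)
Your proposal is correct and takes essentially the same route as the paper's proof: the identical key construction $\Box B=\{D\in C(\Gamma L):D\subseteq B\}$ and $\mathcal{B}^{2}=\da\{\Box B: B\in\mathcal{B}\}$, verified Scott closed in $\Gamma(C(\Gamma L))$ via Lemma~\ref{a2}, followed by the same appeal to the $C$-compactness of $\mathcal{A}$ at the double-power level to extract a single $B\in\mathcal{B}$ with $\bigcup\mathcal{A}\subseteq B$, whence $\bigcup\mathcal{A}\in\mathcal{B}$. The only difference is that you spell out bookkeeping the paper leaves implicit (the reduction from arbitrary $C$-compact subsets to their closures, the leastness of $\bigcup\mathcal{A}$, and the identification $\sup\mathcal{B}=\bigcup\mathcal{B}$), which is harmless.
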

Finally, we introduce the category of all weakly dominated dcpo's which serves as a $\Gamma$-faithful category.
\begin{definition}
	A dcpo $L$ is called \emph{weakly dominated} if for every $A\in C(\Gamma L)$, the collection $\nabla A$ is Scott closed in $C(\Gamma L)$.
\end{definition}

\begin{definition}
	Let $L$ be a $C$-compactly complete dcpo and $x,y\in L$. We write $x\prec y$ if for all closed $C$-compact subsets $A$, $y\leq \sup A$ implies that $x\in A$. We say that $x\in L$ is $\prec$-compact if $x\prec x$, and denote the set of all $\prec$-compact elements by $K(L)$.
\end{definition}

\begin{proposition}
	$A\lhd B$ implies that $A\prec B$ for any $A,B\in C(\Gamma L)$.
	\begin{proof}
		One sees immediately by Proposition \ref{a1}.
	\end{proof}
\end{proposition}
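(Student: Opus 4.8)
The plan is to unfold the definition of the relation $\prec$ on the dcpo $C(\Gamma L)$, which by Proposition~\ref{a1} is indeed $C$-compactly complete, so that $\prec$ is well-defined there. To prove $A\prec B$, I would fix an arbitrary closed $C$-compact subset $\mathcal{A}$ of $C(\Gamma L)$ (that is, $\mathcal{A}\in C(\Gamma(C(\Gamma L)))$) satisfying $B\leq\sup\mathcal{A}$, and show that $A\in\mathcal{A}$. The decisive simplification comes from the second assertion of Proposition~\ref{a1}, namely $\sup\mathcal{A}=\bigcup\mathcal{A}$; this replaces the abstract supremum by an ordinary union and thereby makes an elementwise argument available. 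Thus the hypothesis $B\leq\sup\mathcal{A}$ reads simply as $B\subseteq\bigcup\mathcal{A}$.

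Next I would exploit the hypothesis $A\lhd B$, which furnishes some $x\in B$ with $A\subseteq\,\da x$. Since $x\in B\subseteq\bigcup\mathcal{A}$, there is a member $C\in\mathcal{A}$ with $x\in C$. Because $C$ is Scott closed in $L$, it is in particular a lower set, so $\da x\subseteq C$, and therefore $A\subseteq\,\da x\subseteq C$. In the inclusion order this says $A\leq C$ in $C(\Gamma L)$.

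Finally, since $\mathcal{A}$ is Scott closed in $C(\Gamma L)$ it is downward closed in the inclusion order; combining $A\subseteq C$ with $C\in\mathcal{A}$ yields $A\in\mathcal{A}$, as desired. This completes the verification of $A\prec B$.

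I do not expect a genuine obstacle here: the entire content is that Proposition~\ref{a1} lets us pass from $\sup$ to $\bigcup$, after which the argument is purely set-theoretic, using only that Scott closed sets (both the members of $\mathcal{A}$ inside $L$ and $\mathcal{A}$ itself inside $C(\Gamma L)$) are lower sets. The only point deserving care is keeping the two levels straight, namely that $A,B,C$ are Scott closed subsets of $L$ while $\mathcal{A}$ is a Scott closed family of such subsets; once this bookkeeping is respected, the implication $A\lhd B\Rightarrow A\prec B$ is immediate from Proposition~\ref{a1}.
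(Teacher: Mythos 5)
Your proof is correct and is exactly the argument the paper compresses into its one-line appeal to Proposition~\ref{a1}: using $\sup\mathcal{A}=\bigcup\mathcal{A}$ to reduce everything to the set-theoretic observation that $A\subseteq\da x\subseteq C\in\mathcal{A}$ and that $\mathcal{A}$ is a lower set. No differences in substance; your write-up just makes the "immediately" explicit.
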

\begin{lemma}
	A dcpo $L$ is weakly dominated, if and only if $B\prec A$ implies $B\lhd  A$ for all $A,B\in C(\Gamma L)$.
	\begin{proof}
		If $L$ is weakly dominated, then $\nabla A\in C(\Gamma(C\Gamma(L)))$ by Lemma \ref{a0}.  Assume $B\prec A$, then $B \prec A\subseteq \bigcup \nabla A$. This means that $B\in \nabla A$.
		
		Conversely, if $B\prec A$ implies $B\lhd A$ for all $A,B\in C(\Gamma L)$. We need to check that $\nabla A$ is Scott closed in $C(\Gamma L)$.  The set $\nabla A$ is obviously  a lower set. Let $(B_{i})_{i\in I}$ be a directed subset of $\nabla A$. Then it remains to show that $\sup_{i\in I}B_{i}\prec A$ by the assumption that $B\prec A$ implies $B\lhd A$. For any $\mathcal{A}\in C(\Gamma(C(\Gamma L)))$, if $A\subseteq \bigcup\mathcal{A}$, then hat $B_{i}\prec A$ implies  that $B_{i}\in \mathcal{A}$. Note that $\mathcal{A}$ is a Scott closed set of $C(\Gamma L)$ and $(B_{i})_{i\in I}$ is a directed subset of $\mathcal{A}$. We conclude that $\sup_{i\in I}B_{i}=cl(\bigcup_{i\in I}B_{i})\in \mathcal{A}$.
	\end{proof}
\end{lemma}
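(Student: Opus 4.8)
The plan is to prove the two implications separately, using Lemma~\ref{a0} to tie the set $\nabla A$ to the abstract relation $\prec$ on the $C$-compactly complete dcpo $C(\Gamma L)$, and using Proposition~\ref{a1} to replace every supremum by a union.

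For the forward implication, I would assume $L$ is weakly dominated, fix $A,B\in C(\Gamma L)$ with $B\prec A$, and exhibit $\nabla A$ itself as the test set that forces $B\lhd A$. Weak domination says $\nabla A$ is Scott closed, while Lemma~\ref{a0}(2) says $\nabla A$ is a $C$-compact set; together these give $cl(\nabla A)=\nabla A\in C(\Gamma(C(\Gamma L)))$, i.e.\ $\nabla A$ is a \emph{closed $C$-compact subset} of $C(\Gamma L)$. Moreover $A=\sup\nabla A=\bigcup\nabla A$ by Lemma~\ref{a0}(1) and Proposition~\ref{a1}, so $A\le\sup\nabla A$. The defining property of $B\prec A$, applied to the admissible witness $\mathcal A=\nabla A$, then yields $B\in\nabla A$, which is exactly $B\lhd A$.

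For the converse, I would assume the implication $B\prec A\Rightarrow B\lhd A$ and check, for a fixed $A\in C(\Gamma L)$, that $\nabla A$ is Scott closed. That $\nabla A$ is a lower set is immediate from the definition of $\lhd$ (shrinking $B$ keeps it below the same witness $x\in A$). Given a directed family $(B_i)_{i\in I}\subseteq\nabla A$, the goal reduces to showing $\sup_i B_i\prec A$, since the standing hypothesis then upgrades this to $\sup_i B_i\lhd A$, i.e.\ $\sup_i B_i\in\nabla A$. To see $\sup_i B_i\prec A$, I would take an arbitrary closed $C$-compact $\mathcal A\in C(\Gamma(C(\Gamma L)))$ with $A\subseteq\bigcup\mathcal A=\sup\mathcal A$: each $B_i$ satisfies $B_i\lhd A$, hence $B_i\prec A$ by the preceding proposition that $\lhd$ implies $\prec$, so $B_i\in\mathcal A$; since $\mathcal A$ is Scott closed and $(B_i)_{i\in I}$ directed, $\sup_i B_i\in\mathcal A$. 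As $\mathcal A$ ranges over all such test sets, this is precisely $\sup_i B_i\prec A$.

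The main obstacle is conceptual rather than computational: one must recognize that Lemma~\ref{a0} supplies exactly the two facts needed to make $\nabla A$ serve as its own witness in the definition of $\prec$ --- that it is $C$-compact and that its supremum recaptures $A$ --- and that, throughout, the suprema living one and two levels up (in $C(\Gamma L)$ and in $\Gamma(C(\Gamma L))$) may be read as plain unions via Proposition~\ref{a1}. The only delicate point in the reverse direction is that directedness of $(B_i)_{i\in I}$ combined with Scott-closedness of \emph{each} admissible test set $\mathcal A$ is what propagates membership from the $B_i$ to their supremum; this is what licenses the reduction of Scott-closedness of $\nabla A$ to the single verification $\sup_i B_i\prec A$.
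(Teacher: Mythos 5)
Your proposal is correct and follows essentially the same route as the paper's own proof: the forward direction uses weak domination plus Lemma~\ref{a0} to realize $\nabla A$ as a closed $C$-compact test set with $A=\sup\nabla A$, and the converse reduces Scott-closedness of $\nabla A$ to showing $\sup_i B_i\prec A$, propagating membership $B_i\in\mathcal A$ through each test set $\mathcal A\in C(\Gamma(C(\Gamma L)))$ via the proposition that $\lhd$ implies $\prec$ and the Scott-closedness of $\mathcal A$. The only difference is expository: you spell out the identification of closed $C$-compact subsets with elements of $C(\Gamma(C(\Gamma L)))$ and the use of Proposition~\ref{a1} to read suprema as unions, steps the paper leaves implicit.
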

By observing the above results, we can get the following corollaries.
\begin{proposition}
	For a weakly dominated dcpo $L$, the only $\prec$-compact elements of $C(\Gamma L)$ are exactly the
	principal ideals $\da x$, $x \in L$. 
\end{proposition}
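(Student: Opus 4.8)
The plan is to prove both inclusions: every principal ideal $\da x$ is $\prec$-compact, and conversely every $\prec$-compact element of $C(\Gamma L)$ has this form. Throughout I would work inside the $C$-compactly complete dcpo $C(\Gamma L)$ guaranteed by Proposition~\ref{a1}, using the crucial computational fact recorded there: for a closed $C$-compact family $\mathcal{A}\in C(\Gamma(C(\Gamma L)))$ one has $\sup\mathcal{A}=\bigcup\mathcal{A}$. This turns the abstract suprema appearing in the definition of $\prec$ into ordinary unions, which is what makes the elementary lower-set manipulations below go through.

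First I would dispatch the easy direction, that each $\da x$ is $\prec$-compact; notably this step does not require weak domination. Fix $x\in L$, so that $\da x\in C(\Gamma L)$ by Corollary~\ref{principle}. Suppose $\mathcal{A}$ is a closed $C$-compact subset of $C(\Gamma L)$ with $\da x\leq\sup\mathcal{A}$. Since $\sup\mathcal{A}=\bigcup\mathcal{A}$, this means $\da x\subseteq\bigcup\mathcal{A}$, so $x$ lies in some member $B\in\mathcal{A}$; because $B$ is a lower set we get $\da x\subseteq B$, and because $\mathcal{A}$ is Scott closed, hence a lower set, we conclude $\da x\in\mathcal{A}$. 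Thus $\da x\prec\da x$.

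The substantive direction is the converse, and this is precisely where weak domination is used. Let $A\in C(\Gamma L)$ be $\prec$-compact and consider the set $\nabla A=\{B\in C(\Gamma L):B\lhd A\}$. By Lemma~\ref{a0}(1), $A=\sup\nabla A$, and by Lemma~\ref{a0}(2), $\nabla A$ is a $C$-compact set of $C(\Gamma L)$, i.e.\ $\cl(\nabla A)$ is a $C$-compact element of $\Gamma(C(\Gamma L))$. Now I invoke that $L$ is weakly dominated: this forces $\nabla A$ to be Scott closed in $C(\Gamma L)$, so $\cl(\nabla A)=\nabla A$ and $\nabla A$ is itself a genuine closed $C$-compact subset of $C(\Gamma L)$, that is, a member of $C(\Gamma(C(\Gamma L)))$. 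Applying the definition of $A\prec A$ to this test set $\nabla A$, and using $A=\sup\nabla A$ (hence $A\leq\sup\nabla A$), I obtain $A\in\nabla A$, i.e.\ $A\lhd A$. Unwinding $\lhd$, there is $x\in A$ with $A\subseteq\da x$; combined with $\da x\subseteq A$ (since $x\in A$ and $A$ is a lower set) this yields $A=\da x$.

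I do not expect a genuine obstacle: once the three stacked lattices $L$, $C(\Gamma L)$ and $C(\Gamma(C(\Gamma L)))$ are kept straight, each step is a short lower-set argument. The only point demanding care is correctly reading ``closed $C$-compact subset of $C(\Gamma L)$'' as an element of $C(\Gamma(C(\Gamma L)))$, so that Proposition~\ref{a1} legitimately replaces suprema by unions. The entire weight of the hypothesis is isolated in the single step that promotes the $C$-compact set $\nabla A$ of Lemma~\ref{a0}(2) to a Scott-closed, and therefore admissible, test set for $\prec$; without weak domination this promotion may fail, and the characterization may break down.
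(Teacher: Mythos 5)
Your proof is correct and follows essentially the paper's intended route: the paper states this proposition as a direct corollary of its preceding results, and your two directions are exactly those results unwound --- the easy direction is the proposition that $A\lhd B$ implies $A\prec B$ (via Corollary~\ref{principle} and Proposition~\ref{a1}) specialized to $\da x$, and the substantive direction is the paper's lemma that weak domination promotes $\nabla A$ (using Lemma~\ref{a0}) to an admissible closed $C$-compact test set, specialized to $B=A$. No gaps.
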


\begin{theorem}
	Let $L$ and $P$ be weakly dominated dcpo's. The following are equivalent:
\begin{enumerate}
\item $L\cong P$; 
\item  $\Gamma L\cong \Gamma P$;
\item $C(\Gamma L)\cong C(\Gamma P)$.  \hfill $\Box$
\end{enumerate}
\end{theorem}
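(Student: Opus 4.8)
The plan is to establish the cycle $(1)\Rightarrow(2)\Rightarrow(3)\Rightarrow(1)$, leaning on the invariance lemma for the operator $C(\Gamma{-})$ and on the characterisation of $\prec$-compact elements proved just above. The implication $(1)\Rightarrow(2)$ is immediate: any order isomorphism $h\colon L\to P$ carries Scott closed sets to Scott closed sets in both directions, so $A\mapsto h(A)$ is an isomorphism $\Gamma L\to \Gamma P$. The equivalence $(2)\Leftrightarrow(3)$ is exactly the content of the lemma asserting that $C(\Gamma D)\cong C(\Gamma E)$ if and only if $\Gamma D\cong \Gamma E$ (valid for arbitrary dcpo's), so $(2)\Rightarrow(3)$ comes for free. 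Thus the whole theorem reduces to the single implication $(3)\Rightarrow(1)$, and this is where weak domination is used.

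For $(3)\Rightarrow(1)$ the idea is to recover $L$ (resp.\ $P$) intrinsically from $C(\Gamma L)$ (resp.\ $C(\Gamma P)$) as the poset of its $\prec$-compact elements. First I would note that $C(\Gamma L)$ is a $C$-compactly complete dcpo by Proposition~\ref{a1}, so the relation $\prec$ and the set $K(C(\Gamma L))$ of $\prec$-compact elements are well defined. Since $L$ is weakly dominated, the preceding proposition pins these down exactly: $K(C(\Gamma L))=\{\da x : x\in L\}$. The assignment $x\mapsto \da x$ is an order embedding of $L$ into $C(\Gamma L)$, because $\da x\subseteq \da y$ holds precisely when $x\le y$, and its image is exactly $K(C(\Gamma L))$; hence $L\cong K(C(\Gamma L))$, and symmetrically $P\cong K(C(\Gamma P))$.

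It then remains to show that any order isomorphism $g\colon C(\Gamma L)\to C(\Gamma P)$ restricts to an isomorphism $K(C(\Gamma L))\to K(C(\Gamma P))$, for which it suffices to verify that $g$ preserves the relation $\prec$. An order isomorphism preserves all suprema, preserves the Scott topology and hence Scott closed sets, and therefore preserves $C$-compact elements of the relevant $\Gamma$-lattices; consequently it sends closed $C$-compact subsets to closed $C$-compact subsets and commutes with their suprema. Unwinding the definition $x\prec y \iff (\forall\text{ closed }C\text{-compact }A)\,(y\le \sup A \Rightarrow x\in A)$, this yields $x\prec y$ in $C(\Gamma L)$ if and only if $g(x)\prec g(y)$ in $C(\Gamma P)$, so $g$ matches $\prec$-compact elements on the two sides. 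Chaining $L\cong K(C(\Gamma L))\cong K(C(\Gamma P))\cong P$ then closes the cycle.

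The main obstacle is precisely this last invariance step: one must be sure that the purely order-theoretic data entering the definition of $\prec$ — namely the notion of a closed $C$-compact subset and the suprema of such subsets — is genuinely transported by $g$, so that $\prec$-compactness, and with it the canonical copy of $L$ sitting inside $C(\Gamma L)$, is an honest isomorphism invariant. Everything else (functoriality of $\Gamma$ for $(1)\Rightarrow(2)$ and the cited invariance lemma for $(2)\Leftrightarrow(3)$) is routine, and it is the weak-domination hypothesis, via the characterisation $K(C(\Gamma L))=\{\da x : x\in L\}$, that makes the recovery of $L$ possible at all.
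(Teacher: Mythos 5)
Your proposal is correct and is essentially the paper's own (implicit) argument: the theorem is stated there as an immediate corollary of the lemma that $C(\Gamma D)\cong C(\Gamma E)$ if and only if $\Gamma D\cong \Gamma E$, together with the proposition that for a weakly dominated dcpo $L$ the $\prec$-compact elements of $C(\Gamma L)$ are exactly the principal ideals $\da x$, which is precisely how you recover $L\cong K(C(\Gamma L))$ and transport it along the order isomorphism (noting, as you do, that $\prec$ is defined purely order-theoretically and is therefore preserved). No gaps.
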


The following theorem follows immediately. 

\begin{theorem}
	The category of all weakly dominated dcpo's is $\Gamma$-faithful. \hfill $\Box$
\end{theorem}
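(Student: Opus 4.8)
The plan is to obtain this as an immediate formal consequence of the preceding equivalence theorem, so essentially no new argument is required. By the definition of $\Gamma$-faithfulness recalled in the introduction, I must verify that whenever $L$ and $P$ are weakly dominated dcpo's with $\Gamma L\cong \Gamma P$, it follows that $L\cong P$. This is precisely the implication $(2)\Rightarrow(1)$ of the theorem established just above, applied to the full subcategory whose objects are the weakly dominated dcpo's. Thus the proof reduces to observing that this class of objects forms a full subcategory of $\mathbf{DCPO}$ and then invoking that implication.

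To keep the argument transparent I would recall the mechanism underlying $(2)\Rightarrow(1)$. First, the lemma asserting $C(\Gamma D)\cong C(\Gamma E)$ iff $\Gamma D\cong \Gamma E$ lets me replace the hypothesis $\Gamma L\cong \Gamma P$ by $C(\Gamma L)\cong C(\Gamma P)$; this step rests on Lemma~\ref{a2} and Lemma~\ref{a3} and does \emph{not} use weak domination. The weak-domination hypothesis enters only at the final reconstruction step: by the proposition characterizing the $\prec$-compact elements, for a weakly dominated dcpo the $\prec$-compact elements of $C(\Gamma L)$ are exactly the principal ideals $\da x$ with $x\in L$, so the assignment $x\mapsto \da x$ yields an order isomorphism of $L$ onto $K(C(\Gamma L))$ (using Corollary~\ref{principle} and Proposition~\ref{a1}). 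Since any isomorphism $C(\Gamma L)\cong C(\Gamma P)$ preserves the relation $\prec$ and hence carries $\prec$-compact elements to $\prec$-compact elements, it restricts to an isomorphism $K(C(\Gamma L))\cong K(C(\Gamma P))$, whence $L\cong P$.

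I do not anticipate a genuine obstacle, since all the substantive content lies in the earlier results—Lemma~\ref{a2}, Lemma~\ref{ccompact}, Lemma~\ref{a3}, Proposition~\ref{a1}, and the characterization of the $\prec$-compact elements—which are already in place. The only points meriting a line of care are that the weakly dominated dcpo's indeed constitute a full subcategory of $\mathbf{DCPO}$ (they are dcpo's, and we take all Scott-continuous maps between them), and that the equivalences of the preceding theorem hold verbatim for every pair of objects in this subcategory. Granting these, the theorem is a one-line corollary of the implication $(2)\Rightarrow(1)$, and I would present it as such.
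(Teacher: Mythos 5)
Your proposal is correct and is exactly the paper's own route: the paper states this theorem with no further argument, treating it as an immediate consequence of the implication $(2)\Rightarrow(1)$ of the preceding equivalence theorem for weakly dominated dcpo's. Your unpacking of the mechanism (reducing $\Gamma L\cong\Gamma P$ to $C(\Gamma L)\cong C(\Gamma P)$, then recovering $L$ and $P$ as the $\prec$-compact elements, which an order isomorphism must preserve) is likewise the intended content of that earlier theorem, so nothing is missing.
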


\section{Weakly dominated dcpo's may fail to be dominated}

From the definitions of dominated dcpo's and weakly dominated dcpo's and the fact that $C(\Gamma L)$ is contained in $IRR(L)$ for each dcpo $L$, we could easily see that   dominated dcpo's are weakly dominated. We show in this section, by giving a concrete example, that weak dominated dcpo's form a strictly larger class than that of dominated ones. As we will see, our example is given in Example~\ref{notdominated}, which relies on two steps of preliminary constructions that are based on Isbell's complete lattice and Johnstone's dcpo.
\begin{example}\label{I}
	Let $\mathbb{N}^{\mathbb{N}}$ denote the maps from $\mathbb{N}$ to $\mathbb{N}$, $S = \mathbb{N} \times \mathbb{N}$, $T = \mathbb{N}^{\mathbb{N}}\times \mathbb{N}$, $L = S \cup T\cup \{\top \}$, where $\mathbb{N}$ is the set of positive natural numbers. An order $\leq $ on $L$ is defined as follows:
	
	$(m_{1},n_{1})\leq (m_{2},n_{2})$ if and only if:
	
	\begin{itemize}
	\item $(m_{2},n_{2}) =\top $, $(m_{1},n_{1}) \in  L$; 
	\item $m_{1}=m_{2}$, $n_{1}\leq n_{2}$.
	\end{itemize}
	Next, let $P = L\times \mathbb{R}$, where $\mathbb{R}$ is the set of real numbers. Then there exists an injection $i : \mathbb{R}\times \mathbb{R}\times \mathbb{N}^{\mathbb{N}}\times \mathbb{N} \rightarrow \mathbb{R}$ such that $i(s,t, f, k) >s,t$ for any $(s,t, f, k) \in \mathbb{R}\times \mathbb{R}\times \mathbb{N}^{\mathbb{N}}\times \mathbb{N}$ with $ t>s$. We write $((n, j), k)$ by $(n, j, k)$ for any $((n, j), k) \in  P$ and $\{(n,i, r) : (n,i) \in  L\}$ by $L_{r}$ for any $r \in  \mathbb{R}$. An order $\leq $ on $P$ is defined as:   $(n_{1},i_{1}, j_{1}) \leq (n_{2},i_{2}, j_{2})$ if and only if:
	\begin{itemize}
	\item $j_{1}=j_{2}$, $(n_{1},i_{1})\leq (n_{2},i_{2})$ in $L_{j_{1}}$;
	\item $(n_{2},i_{2})=\top$. If $(n_{1},i_{1})\in S$, then there exists $t_{1}\in \mathbb{R}$, $f\in \mathbb{N}^{\mathbb{N}},k\in \mathbb{N}$ such that $j_{2}=i(j_{1},t_{1},f,k)$, $t_{1}>j_{1}$ and $(n_{1},i_{1})=(k,f(k))$. Else, $(n_{1},i_{1})\in T$, then there exists $t_{2}\in \mathbb{R}$ such that $j_{2}=i(t_{2},j_{1},n_{1},i_{1})$ and $j_{1}>t_{2}$.
       \end{itemize}
\end{example}

\begin{lemma}
The poset $P$ is a dcpo with the fact that there are only finitely many minimal upper bounds for each pair of elements in $P$.
\end{lemma}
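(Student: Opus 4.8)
The plan is to exploit the layered structure of $P$: it is the disjoint union of the copies $L_r$ ($r\in\mathbb{R}$) of the complete lattice $L$, glued together only through cross-level comparabilities whose larger endpoint is always a top element $(\top,r)$. First I would record the two structural facts that drive everything. Using the injectivity of $i$ together with the defining clauses, I would show that each top element $(\top,j)$ is \emph{maximal} in $P$ (no clause places $\top$ below anything outside its own level), and that whenever $x\le y$ with $x,y$ in distinct levels the element $y$ must be such a top. I would then compute the principal ideal of a top explicitly: if $i^{-1}(j)=(s,t,f,k)$ with $t>s$, then $\da(\top,j)=L_j\cup\{(k,n,s):n\le f(k)\}\cup\{(f,n,t):n\le k\}$, i.e.\ outside its own level a top dominates exactly one finite chain in $S$ (at level $s$) and one finite chain in $T$ (at level $t$); if $j\notin\operatorname{im} i$ then $\da(\top,j)=L_j$. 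Injectivity of $i$ is what keeps these ideals so small, and this computation is the backbone of both assertions.

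For directed completeness I would take a directed set $D$ and split into two cases. If $D$ meets two different levels, then any $z\in D$ above a pair of points from distinct levels is forced to be a top by the second structural fact; since tops are maximal and $D$ is directed, this top is the greatest element of $D$, and at the same time $D$ can contain only one top, so $\sup D$ exists. If $D$ is contained in a single $L_r$, then $D$ has a supremum there because $L$ is a complete lattice, and it remains to check that this within-level supremum is the supremum in $P$: when $D$ has a greatest element this is immediate, and the only remaining possibility is that $D$ is an infinite chain inside one $S$- or $T$-chain with within-level supremum $(\top,r)$, in which case the injectivity of $i$ shows that no cross-level top can even be an upper bound of $D$ (a single top dominates only one value $f(k)$ in a given chain). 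Hence $\sup_P D=(\top,r)$.

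For the bound on minimal upper bounds I would separate the same-level and cross-level cases. If $x,y\in L_r$ are comparable the larger one is the unique minimal upper bound; if they are incomparable they lie in different chains of $L_r$, so their join in $L$ is $(\top,r)$, and since a cross-level top's ideal meets $L_r$ in a single chain no cross-level top lies above both $x$ and $y$; thus again there is exactly one minimal upper bound. The crucial case is $x\in L_r$, $y\in L_{r'}$ with $r\ne r'$. Here every upper bound of $\{x,y\}$ is a top, because an upper bound sitting at a level different from one of the two points creates a cross-level comparability; and distinct tops are pairwise incomparable, so \emph{every} top above both $x$ and $y$ is automatically minimal. It therefore suffices to bound the number of such tops, and this is where the explicit ideal description pays off: for $(\top,j)$ to lie above both points, each of $r,r'$ must occur among $\{j,s,t\}$ for $i^{-1}(j)=(s,t,f,k)$; since $r\ne r'$ and $j,s,t$ are distinct, the only options are $j=r$, $j=r'$, $(s,t)=(r,r')$, or $(s,t)=(r',r)$, and in each option injectivity of $i$ pins $j$ down to a single value. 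Hence at most four tops qualify, so the set of minimal upper bounds of $\{x,y\}$ is finite.

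The main obstacle I anticipate is precisely this last counting step for cross-level pairs: one must be certain that no further upper bounds are overlooked and that the coding $i$ genuinely forces the preimage data, so I would take care to verify that all upper bounds of a cross-level pair really are tops and that the four configurations above are exhaustive. The remainder is a routine but lengthy case analysis over the ways a comparability can arise, closely paralleling the computations already carried out for $\mathcal{Z}$ in Section~4; the directed-completeness part, by contrast, is short once the maximality of the tops and the shape of their principal ideals are in hand.
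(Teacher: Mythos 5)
Your proof is correct, but it does not follow the paper's argument for the simple reason that the paper gives none: the entire proof of this lemma in the paper is the sentence that it ``is similar to Fact~4.1'' of Miao, Yuan and Li, where essentially the same poset $P$ is analysed. What you have produced is a self-contained reconstruction of that outsourced analysis, and it is sound. The two structural facts you isolate (every top $(\top,j)$ is maximal, and in any comparability between distinct levels the larger element is a top), together with the explicit ideal $\da(\top,j)=L_j\cup\{(k,n,s):n\le f(k)\}\cup\{(f,n,t):n\le k\}$ when $j=i(s,t,f,k)$ with $t>s$, and $\da(\top,j)=L_j$ otherwise, do drive both halves: the directed-set case split is exhaustive, and so is the four-configuration count for cross-level pairs. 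Two spots deserve sharper wording, though neither is a gap. First, $L$ is not a complete lattice (it has many minimal elements and no bottom); what your argument actually uses, and what is true, is that every nonempty---in particular every directed---subset of $L$ has a supremum. Second, in the configurations $(s,t)=(r,r')$ and $(s,t)=(r',r)$ injectivity of $i$ alone does not pin down $j$, since $f$ and $k$ still range over infinitely many values; you must first invoke your ideal computation to see that such a top meets $L_r$ only in one finite $S$-chain and $L_{r'}$ only in one finite $T$-chain (or vice versa), so that $x$ and $y$ force the parameters $k$ and $f$, and only then does injectivity of $i$ yield a unique level $j$. Read this way, each configuration contributes at most one top, giving at most four minimal upper bounds. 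Your approach buys the paper a proof it currently lacks; the citation it uses instead buys brevity, at the price of asking the reader to match this order on $P$ with the one in the cited work.
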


	\begin{proof}
		The proof is similar to \cite[Fact 4.1]{Miao1}.
	\end{proof}

\begin{example}\label{J}
	Let $\mathbb{J}_{\infty}=\mathbb{N}\times (\mathbb{N}\times (\mathbb{N}\cup\{\omega\}))$, where $\mathbb{N}$ is the set of all positive natural numbers. We write $(k, (n, j))$ by $(k, n, j)$ for any $(k,(n, j)) \in  \mathbb{J}_{\infty}$ and $\{(k,n,i) : (n,i) \in  \mathbb{N}\times (\mathbb{N}\cup\{\omega\})\}$ by $\mathbb{J}_{k}$ for any $k\in \mathbb{N}$. An order $\leq $ on $\mathbb{J}_{\infty}$ is defined as: $(k,m,n)\leq (c,a,b)$ if and only if:
	\begin{itemize}
	\item $k=c$, $m=a$, $n\leq b$;
	\item $k\leq c$, $n\leq a$, $b=\omega$.
	\end{itemize}
	Then $C(\Gamma \mathbb{J}_{\infty})=\{\da x: x\in \mathbb{J}_{\infty}\}\cup \{\bigcup_{i=1}^{n}\mathbb{J}_{i}: n\in \mathbb{N}\}\cup \{\mathbb{J}_{\infty}\}$ (We depict $\mathbb{J}_{\infty}$  as in Figure $3$). 
	\begin{proof}
		
$\mathbf{Claim~1}$: rhs $\subseteq$ lhs.

By Corollary \ref{principle}, we know that $\{\da x: x\in \mathbb{J}_{\infty}\}$ is contained in $C(\Gamma \mathbb{J}_{\infty})$. For any $n\in \mathbb{N}$, $\bigcup_{i=1}^{n}\mathbb{J}_{i}$ is Scott closed. It remains to prove that $\bigcup_{i=1}^{n}\mathbb{J}_{i}$ and $\mathbb{J}_{\infty}$ are $C$-compact elements of $\Gamma \mathbb{J}_{\infty}$. To this end, let $\mathcal{A}\in \Gamma(\Gamma \mathbb{J}_{\infty})$ with $\bigcup_{i=1}^{n}\mathbb{J}_{i}\subseteq \bigcup\mathcal{A}$. For any $(n,m,\omega)\in \mathbb{J}_{n}$, the result $\da (n,m,\omega)\in \mathcal{A}$ follows obviously. Observing the order on $\mathbb{J}_{\infty}$, we get that the Scott closed set $A_{m}=\{(c,a,b)\in\bigcup_{i=1}^{n}\mathbb{J}_{i}: c\leq n, a\in \mathbb{N}, b\leq m\}\subseteq \da (n,m,\omega)\in \mathcal{A}$. Then $A_{m}$ belongs to $\mathcal{A}$ for all $m\in \mathbb{N}$.
		Note that $(A_{m})_{m\in \mathbb{N}}$ is a directed subset of $\mathcal{A}$. Then $\sup_{m\in \mathbb{N}}A_{m}=\bigcup_{i=1}^{n}\mathbb{J}_{i}\in \mathcal{A}$ as $\mathcal{A}$ is Scott closed. Therefore, $\bigcup_{i=1}^{n}\mathbb{J}_{i}$ is a $C$-compact element of $\Gamma \mathbb{J}_{\infty}$. The proof of that $\mathbb{J}_{\infty}\in C(\Gamma \mathbb{J}_{\infty})$ runs similarly.
		
		$\mathbf{Cliam~2}$: lhs $\subseteq$ rhs.
		
	Let $A\in C(\Gamma \mathbb{J}_{\infty})$. For the sake of a contradiction we assume that $A\notin \{\da x: x\in \mathbb{J}_{\infty}\}\cup \{\bigcup_{i=1}^{n}\mathbb{J}_{i}: n\in \mathbb{N}\}\cup \{\mathbb{J}_{\infty}\}$. 
	\begin{itemize}
	\item[]  $\mathbf{Claim~2.1}$: $\max A\subseteq \max \mathbb{J}_{\infty}$.
		
		Assume  that there is $a\in \max A\backslash \max\mathbb{J}_{\infty}$.  Then $A=\da \max A\subseteq \da a\cup \da (\max A\backslash \{a\})$. One sees immediately that $\da \max A\backslash \{a\}$ is Scott closed from the order of $\mathbb{J}_{\infty}$. Because $C(\Gamma \mathbb{J}_{\infty})$ is contained in $IRR(\mathbb{J}_{\infty})$, $A$ is irreducible. It follows that $A=\da a$ or $A=\da (\max A\backslash \{a\})$. The assumption that $A\notin\{\da x: x\in \mathbb{J}_{\infty}\}$ implies  that $A=\da (\max A\backslash \{a\})$. This means that $a\in \max A\subseteq \da (\max A\backslash \{a\})$, which yields that $a\in \max A\backslash\{a\}$. A contradiction.
	\item[]  $\mathbf{Claim~2.2}$: Define $A_{\mathbb{N}}=\{n\in \mathbb{N}: A\cap \mathbb{J}_{n}\neq \emptyset\}$. Then $A_{\mathbb{N}}$ is infinite. 
		
		Suppose $A_{\mathbb{N}}$ is finite. Let $n_{0}=\max A_{\mathbb{N}}$. According to $A\notin \{\bigcup_{i=1}^{n}\mathbb{J}_{i}: n\in \mathbb{N}\}$, we have that $\max A\cap \mathbb{J}_{n_{0}}$ is finite. Then $A\subseteq \da(\max A\cap\mathbb{J}_{_{n_{0}}} )\cup \bigcup_{i=1}^{n_{0}-1}\mathbb{J}_{i}$. The set $\da(\max A\cap\mathbb{J}_{n_{0}} )$ is Scott closed as  $\max A\cap \mathbb{J}_{n_{0}}$ is finite. It is easy to see that $\bigcup_{i=1}^{n_{0}-1}\mathbb{J}_{i}$ is Scott closed. From the irreducibility of $A$ and the fact that $n_{0}$ belongs to $A_{\mathbb{N}}$, we know that $A=\da(\max A\cap \mathbb{J}_{n_{0}})=\bigcup_{a\in \max A\cap \mathbb{J}_{n_{0}} } \da a$. This indicates that there exists $a\in \max A\cap  \mathbb{J}_{n_{0}}$ such that $A=\da a$, which is a contradiction. Hence, $A_{\mathbb{N}}$ is infinite.

       \end{itemize}
		
		The assumption that $A\neq \mathbb{J}_{\infty}$ implies that there exists a minimum natural number $k_{0}\in \mathbb{N}$ such that $\max A\cap \mathbb{J}_{k}$ is finite for any $k\geq k_{0}$. If not, for any $n\in \mathbb{N}$, there exists $k_n\geq n$ such that $\max A\cap \mathbb{J}_{k_n}$ is infinite. As $A$ is Scott closed, we have that $\mathbb{J}_n\subseteq \bigcup_{i\leq k_n}\mathbb{J}_i\subseteq A$. This yields $\mathbb{J}_{\infty}=A$, a contradiction. We write $B=\{m\in \mathbb{N}:\exists (n,m,\omega)\in A,n\geq k_{0}\}$, and distinguish the following two cases for $B$.
		
		Case $1$: $B$ is finite. Then $A\subseteq \da (\max A\cap \mathbb{J}_{k_{0}})\cup \da(\max A\backslash \mathbb{J}_{k_{0}})$. Since $\max A\cap \mathbb{J}_{k}$ is finite for any $k\geq k_{0}$, we have $\da(\max A\cap \mathbb{J}_{k_{0}})$ is Scott closed. The finiteness of $B$ ensures that $\da(\max A\backslash \mathbb{J}_{k_{0}})$ is Scott closed. Similar to the proof of Claim 2.2, we know that $A=\da x$ for some $x\in \mathbb{J}_{\infty}$. A contradiction.
		
		Case $2$: $B$ is infinite. Note that $\max A\cap \mathbb{J}_i$ is infinite for any $i\leq k_0-1$. Then $\bigcup_{i=1}^{k_{0}-1}\mathbb{J}_{i}\subseteq A$ as $A$ is Scott closed. It follows that for any $(c, a,b)\in \mathbb{J}_{\infty}\backslash \max \mathbb{J}_{\infty}$ with $c\geq k_{0}$, the set $C=\{m\in B: m\geq b\}$ is infinite from the infiniteness of $B$. Define $E=\{k\geq k_0: \exists (k,m,\omega)\in A, m\in C\}$. If E is finite, then there is $k\in E$ such that $\max A\cap \mathbb{J}_{k}$ is infinite. It follows that $k_{0}\geq k+1$ as $k_0$ is the minimum natural number with the property that $\max A\cap \mathbb{J}_{k}$ is finite for any $k\geq k_{0}$. This contradicts that $k\in E$. Else, $E$ is infinite. Then we could find $k\in E$ with $k\geq c$. This guarantees the existence of $(k,m,\omega)\in A$ such that $(c,a,b)\leq (k,m,\omega)$, which yields that $(c,a,b)\in A$ for any $(c,a,b)\in \mathbb{J}_{\infty}\backslash \max\mathbb{J}_{\infty}$ with $c\geq k_0$. Then we would know that $A=\mathbb{J}_{\infty}$, again from the Scott closedness of $A$. But that violates the assumption that $A\neq \mathbb{J}_{\infty}$. In summary, we have that Claim~2 holds.
		
 In a conclusion, $C(\Gamma \mathbb{J}_{\infty})=\{\da x: x\in \mathbb{J}_{\infty}\}\cup \{\bigcup_{i=1}^{n}\mathbb{J}_{i}: n\in \mathbb{N}\}\cup \{\mathbb{J}_{\infty}\}$.
	\end{proof}
\end{example}

\begin{figure}[H]
	\centering
	\begin{tikzpicture}[line width=0.5pt,scale=1.1]
		\fill[black] (0,0) circle (2pt);
		\fill[black] (0,1) circle (2pt);
		\fill[black] (0,2) circle (2pt);
		\fill[black] (0,4) circle (2pt);
		\draw (0,0)--(0,2);
		\draw [dashed](0,2)--(0,4);
		\fill[black] (1,0) circle (2pt);
		\fill[black] (1,1) circle (2pt);
		\fill[black] (1,2) circle (2pt);
		\fill[black] (1,4) circle (2pt);
		\draw (1,0)--(1,2);
		\draw [dashed](1,2)--(1,4);
		\fill[black] (2,0) circle (2pt);
		\fill[black] (2,1) circle (2pt);
		\fill[black] (2,2) circle (2pt);
		\fill[black] (2,4) circle (2pt);
		\draw (2,0)--(2,2);
		\draw [dashed](2,2)--(2,4);
		\draw [dashed](-1,4)--(0,4);
		\draw (0,0.4)--(2,0.4);
		\draw (0,-0.4)--(2,-0.4);
		\draw (2,-0.4) arc (-90:90:0.4);
		\draw [dashed](-1,0.4)--(0,0.4);
		\draw [dashed](-1,-0.4)--(0,-0.4);
		
		\draw (0,0.6)--(2,0.6);
		\draw (0,1.4)--(2,1.4);
		\draw (2,0.6) arc (-90:90:0.4);
		\draw [dashed](-1,1.4)--(0,1.4);
		\draw [dashed](-1,0.6)--(0,0.6);
		
		\draw (0,1.6)--(2,1.6);
		\draw (0,2.4)--(2,2.4);
		\draw (2,1.6) arc (-90:90:0.4);
		\draw [dashed](-1,2.4)--(0,2.4);
		\draw [dashed](-1,1.6)--(0,1.6);
		
		\fill[black] (6,0) circle (2pt);
		\fill[black] (6,1) circle (2pt);
		\fill[black] (6,2) circle (2pt);
		\fill[black] (6,4) circle (2pt);
		\draw (6,0)--(6,2);
		\draw [dashed](6,2)--(6,4);
		\fill[black] (4,0) circle (2pt);
		\fill[black] (4,1) circle (2pt);
		\fill[black] (4,2) circle (2pt);
		\fill[black] (4,4) circle (2pt);
		\draw (4,0)--(4,2);
		\draw [dashed](4,2)--(4,4);
		\fill[black] (5,0) circle (2pt);
		\fill[black] (5,1) circle (2pt);
		\fill[black] (5,2) circle (2pt);
		\fill[black] (5,4) circle (2pt);
		\draw (5,0)--(5,2);
		\draw [dashed](5,2)--(5,4);
		\draw [dashed](4,4)--(3,4);
		\draw (4,0.4)--(6,0.4);
		\draw (4,-0.4)--(6,-0.4);
		\draw (6,-0.4) arc (-90:90:0.4);
		\draw [dashed](4,0.4)--(3,0.4);
		\draw [dashed](4,-0.4)--(3,-0.4);
		
		\draw (4,0.6)--(6,0.6);
		\draw (4,1.4)--(6,1.4);
		\draw (6,0.6) arc (-90:90:0.4);
		\draw [dashed](4,1.4)--(3,1.4);
		\draw [dashed](4,0.6)--(3,0.6);
		
		\draw (4,1.6)--(6,1.6);
		\draw (4,2.4)--(6,2.4);
		\draw (6,1.6) arc (-90:90:0.4);
		\draw [dashed](4,2.4)--(3,2.4);
		\draw [dashed](4,1.6)--(3,1.6);
		
		\fill[black] (8,0) circle (2pt);
		\fill[black] (8,1) circle (2pt);
		\fill[black] (8,2) circle (2pt);
		\fill[black] (8,4) circle (2pt);
		\draw (8,0)--(8,2);
		\draw [dashed](8,2)--(8,4);
		\fill[black] (9,0) circle (2pt);
		\fill[black] (9,1) circle (2pt);
		\fill[black] (9,2) circle (2pt);
		\fill[black] (9,4) circle (2pt);
		\draw (9,0)--(9,2);
		\draw [dashed](9,2)--(9,4);
		\fill[black] (10,0) circle (2pt);
		\fill[black] (10,1) circle (2pt);
		\fill[black] (10,2) circle (2pt);
		\fill[black] (10,4) circle (2pt);
		\draw (10,0)--(10,2);
		\draw [dashed](10,2)--(10,4);
		\draw [dashed](8,4)--(7,4);
		\draw (8,0.4)--(10,0.4);
		\draw (8,-0.4)--(10,-0.4);
		\draw (10,-0.4) arc (-90:90:0.4);
		\draw [dashed](8,0.4)--(7,0.4);
		\draw [dashed](8,-0.4)--(7,-0.4);
		
		\draw (8,0.6)--(10,0.6);
		\draw (8,1.4)--(10,1.4);
		\draw (10,0.6) arc (-90:90:0.4);
		\draw [dashed](8,1.4)--(7,1.4);
		\draw [dashed](8,0.6)--(7,0.6);
		
		\draw (8,1.6)--(10,1.6);
		\draw (8,2.4)--(10,2.4);
		\draw (10,1.6) arc (-90:90:0.4);
		\draw [dashed](8,2.4)--(7,2.4);
		\draw [dashed](8,1.6)--(7,1.6);
		\draw (2.4,0)--(6,4);
		\draw (2.4,1)--(5,4);
		\draw (2.4,2)--(4,4);
		\draw[red] (6.4,0)--(10,4);
		\draw [red](6.4,1)--(9,4);
		\draw [red](6.4,2)--(8,4);
		\draw [red](2.4,0)--(10,4);
		\draw [red](2.4,1)--(9,4);
		\draw [red](2.4,2)--(8,4);
		
		\draw [dashed](11,0)--(12,0);
		\draw [dashed](11,4)--(12,4);
		\draw [dashed](11,3)--(12,3);
		
		\node[font=\footnotesize] (1) at(-0.5,0.75) {$(1,3,2)$};
		\node[font=\footnotesize] (2) at(6,4.25) {$(2,1,\infty)$};
		\node[font=\footnotesize] (3) at(10,4.25) {$(3,1,\infty)$};
		\node[font=\footnotesize] (A1) at(1,-0.6) {$\mathbb{J}_1$};
		\node[font=\footnotesize] (A2) at(5,-0.6) {$\mathbb{J}_2$};
		\node[font=\footnotesize] (A3) at(9,-0.6) {$\mathbb{J}_3$};
	\end{tikzpicture}
	 \scriptsize  \\  Figure 3: The order between different blocks on $\mathbb{J}_{\infty}$. 
	\end{figure}
	Now we construct a weakly dominated dcpo $M$ based the above dcpo's $P$ and $\mathbb{J}_{\infty}$, which fails to be dominated. 
\begin{example}\label{notdominated}
	Let $M=P\cup \mathbb{J}_{\infty}$. Fix $\{a_{\mathbb{N}}: n\in \mathbb{N} \}\subseteq \mathbb{R}\backslash i(\mathbb{R}\times \mathbb{R}\times \mathbb{N}^{\mathbb{N}}\times \mathbb{N})$, where the function~$i$ is defined as in Example~\ref{I}. An order $\leq $ on $M$ is defined as $x\leq y$ if and only if:
	\begin{itemize}
	\item $x\leq y$ in $P$; 
	\item $x\leq y$ in $\mathbb{J}_{\infty}$;
	\item $y=(a_{n},\top)\in P$, $x=(c,a,b)\in \mathbb{J}_{\infty}$, $c\leq n$.
	\end{itemize}
	Then $(M,\leq)$ is weak dominated, but not dominated.

	\begin{proof}
		$\mathbf{Claim ~1}$: $M$ is not dominated.
		
		Similar to the  argument of \cite[Remark 4.1]{Miao1}, $M$ is an irreducible closed set. Set $ \nabla^* M=\{B\in IRR(M): \exists x\in M, s.t.~B\subseteq \da x\}$. Notice that $\{\bigcup_{i=1}^{n}\mathbb{J}_{i}: n\in \mathbb{N}\}$ is a directed subset of $ \nabla^* M$, but $\sup_{n\in \mathbb{N}}\bigcup_{i=1}^{n}\mathbb{J}_{i}=\mathbb{J}_{\infty}\notin \nabla^* M$. Hence, by definition, $M$ is not dominated.
		
		$\mathbf{Claim ~2}$: $C(\Gamma M)=\{\da x: x\in M\}\cup \{\bigcup_{i=1}^{n}\mathbb{J}_{i}: n\in \mathbb{N}\}\cup \{\mathbb{J}_{\infty}\}$.
		
The right side of the equation is obviously contained in the left side by Example \ref{J}. Conversely, let $A\in C(\Gamma M)$. By way of contradiction  we assume that $A\notin \{\da x: x\in M\}\cup \{\bigcup_{i=1}^{n}\mathbb{J}_{i}: n\in \mathbb{N}\}\cup \{\mathbb{J}_{\infty}\}$.
		
		We distinguish the cases whether $A$ is contained in $\mathbb{J}_{\infty}$:
		
		Case $1$: $A\subseteq \mathbb{J}_{\infty}$. In this case we claim that $A$ is a $C$-compact element of $\Gamma \mathbb{J}_{\infty}$. To this end, let $\mathcal{A}\in \Gamma(\Gamma \mathbb{J}_{\infty})$ with $A\subseteq \bigcup\mathcal{A}$. Then the fact that $\mathbb{J}_{\infty}$ is a Scott closed set of $M$ indicates that $\mathcal{A}$ is a Scott closed subset of $\Gamma M$. It follows that $A\in \mathcal{A}$ as $A\in C(\Gamma M)$. So we have $A$ is a $C$-compact element of $\Gamma \mathbb{J}_{\infty}$, that is, $A\in C(\Gamma \mathbb{J}_{\infty})=\{\da x: x\in \mathbb{J_{\infty}}\}\cup \{\bigcup_{i=1}^{n}\mathbb{J}_{i}: n\in \mathbb{N}\}\cup \{\mathbb{J}_{\infty}\}$. That is a contradiction.
		
		Case $2$: $A\not \subseteq \mathbb{J}_{\infty}$. Then $A\cap P\neq \emptyset$. We claim that $A=\da (A\cap P)$. Note that $A\subseteq \da (A\cap P)\cup \mathbb{J_{\infty}}$. By the definition of $M$, one sees immediately that $\da (A\cap P)$ is a Scott closed set of $M$. Since $A$ is a $C$-compact element of $\Gamma(M)$, we know that $A$ is irreducible. This means that $A\subseteq \da(A\cap P)$ or $A\subseteq \mathbb{J}_{\infty}$. The assumption that $A\not \subseteq \mathbb{J}_{\infty}$ implies that $A\subseteq \da(A\cap P)$. In other words, $A=\da (A\cap P)$. Similar to the proof of Example \ref{J}, we have $\max A\subseteq \max P$. Define $\mathcal{A}=\da (\{\da x: x\in \max A\})\cup \da \mathbb{J}_{\infty}$. It is obvious that $\mathcal{A}$ is a lower set of $\Gamma M$. We want to show that $\mathcal{A}$ is Scott closed. To this end, let $(A_{i})_{i\in I}$ be a non-trivial chain of $\mathcal{A}$. Now we distinguish the following two cases for $(A_{i})_{i\in I}$.
		
		Case $2.1$: $(A_{i})_{i\in I}\subseteq \da \mathbb{J_{\infty}}$. Then the fact that $\mathbb{J}_{\infty}$ is Scott closed concludes that $\sup_{i\in I}A_i=cl(\bigcup_{i\in I}A_{i})\in\da \mathbb{J_{\infty}} \subseteq \mathcal{A}$. 
		
		Case $2.2$: $(A_{i})_{i\in I}\not \subseteq \da \mathbb{J_{\infty}}$. Then there exists $i_{0}\in I$ such that $A_{i}\cap P\neq \emptyset$ for any $i\geq i_{0}$. This means that there exists $b_{i}\in \mathbb{R}$ such that $A_{i}\subseteq \da(b_{i},\top)$ for all $i\geq i_{0}$. Now we need to further distinguish the following two cases. 
		
		Case $2.2.1$: $A_{i}\subseteq P$ for any $i\geq i_{0}$. If $\{b_{i}\mid i\geq i_{0}\}$ is a finite set, then there exists $b_{i}$ such that $A_{i}\subseteq \da (b_{i},\top )$ for any $i\in I$. It follows that $\sup_{i\in I}A_{i}\in \da (\da (b_{i},\top))\subseteq \mathcal{A}$. Otherwise, $\{b_{i}\mid i\geq i_{0}\}$ is an infinite set. 
		Then $A_{i}\subseteq \bigcap_{j\geq i}\da(b_{j},\top)$ for any $i\geq i_{0}$. In light of Example \ref{I}, we have that $P$ is a dcpo with the condition that there are only finitely many minimal upper bounds of each pair of elements in $P$, which yields that for any $i\geq i_0$, $A_{i}$ must be the form of $\{(k,m,n): n\leq n_{i}\}$, for some fixed $n_{i}\in \mathbb{N}, m\in \mathbb{N}\cup \mathbb{N}^{\mathbb{N}}$ and $k\in \mathbb{R}$, and this reveals that $\{(k,m,n): n\in \mathbb{N}\}$ is a directed subset of $A$ since $A$ is a lower set. Scott closedness of $A$ then implies that $(k,m,\omega)\in A$. Hence, $\da(k,m,\omega)=cl(\bigcup_{i\in I}A_i)=\sup_{i\in I}A_{i}\in \mathcal{A}$. 
		
		Case $2.2.2$: there exists $i_{1}\geq i_{0}$ such that $A_{i_{1}}\cap \mathbb{J}_{\infty}\neq \emptyset$. Then $b_{i}\in \mathbb{R}\backslash i(\mathbb{R}\times \mathbb{R}\times \mathbb{N}^{\mathbb{N}}\times \mathbb{N})$ for any $i\geq i_{1}$. For any $i,j\geq i_{1}$ with $j\geq i$, we know $\emptyset\neq A_{i}\cap P\subseteq A_{j}\cap P$. This implies that $b_{i}=b_{j}$ by the order defined on $M$, which yields that $b_i = b_0$ for any $i\geq i_1$. It turns out that $A_i\subseteq \da(b_0,\top)$ for any $i\in I$. Thus, $\sup _{i\in I}A_{i}\subseteq \da(b_0,\top)\in \mathcal{A}$. 
		
		In the both sub-cases, hence in the Case 2.2,  $\mathcal{A}$ is Scott closed. 
		
		Now in the Case 2, we then know that $A\in \mathcal{A}$ since $A\in C(\Gamma M)$. This gives that $A=\da x$ for some $x\in A$ or $A\subseteq \mathbb{J}_{\infty}$. A contradiction. Hence we finish proving Claim 2. 
		
		Now it is trivial to check that $\nabla A$ is Scott closed for $A\in C(\Gamma M)$. So Indeed, $(M,\leq)$ is weak dominated, but not dominated.
	\end{proof}
\end{example}

\section*{Acknowledgement}
{This work is supported by the National Natural Science Foundation of China (No.12401596, No.12231007 and No.12371457)}.

\bibliographystyle{plain}

\end{document}